\theoremstyle{plain}
\newtheorem{theorem}{Theorem}[section]
\newtheorem{lemma}[theorem]{Lemma}
\newtheorem{proposition}[theorem]{Proposition}
\newtheorem{corollary}[theorem]{Corollary}
\theoremstyle{definition}
\newtheorem{definition}[theorem]{Definition}
\newtheorem*{assumption*}{Standing Assumption}
\newtheorem{remark}[theorem]{Remark}
\newtheorem{example}[theorem]{Example}
\theoremstyle{remark}
\numberwithin{equation}{section}
\newcommand{\E}{\mathbb{E}}
\newcommand{\NN}{\mathbb{N}}
\newcommand{\probp}{\mathbb{P}}
\newcommand{\R}{\mathbb{R}}
\newcommand{\N}{\mathbb{N}}
\renewcommand{\P}{\probp}
\newcommand{\cF}{{\mathcal{F}}}
\newcommand{\cE}{{\mathcal{E}}}
\newcommand{\cS}{{\mathcal{S}}}
\newcommand{\cA}{\mathcal{A}}
\newcommand{\cC}{\mathcal{C}}
\newcommand{\cH}{\mathcal{H}}
\newcommand{\cM}{\mathcal{M}}
\newcommand{\cR}{\mathcal{R}}
\newcommand{\cU}{\mathcal{U}}
\newcommand{\cX}{{\mathcal{X}}}
\newcommand{\ind}{\mathbbm 1}
\newcommand{\esssup}{\mathop {\rm ess\,sup}\nolimits}
\newcommand{\essinf}{\mathop {\rm ess\,inf}\nolimits}
\newcommand{\dom}{\mathop {\rm dom}\nolimits}
\newcommand{\lnorm}{\left\|}
\newcommand{\rnorm}{\right\|}
\newcommand{\VaR}{\mathop {\rm VaR}\nolimits}
\newcommand{\LVaR}{\mathop {\rm LVaR}\nolimits}
\newcommand{\ES}{\mathop {\rm ES}\nolimits}
\newcommand{\OCE}{\mathop {\rm OCE}\nolimits}
\def\keywords{\vspace{.5em}
{\noindent\textbf{Keywords}:\,\relax%
}}
\def\@fnsymbol#1{\ensuremath{\ifcase#1\or 1\or 2\or 3\or 4\or 5\or 6\or 7\or 8\else\@ctrerr\fi}}
\begin{document}

\title{Risk, utility and sensitivity to large losses\thanks{We thank Felix-Benedikt Liebrich and Ruodu Wang for helpful comments.}}

\author{
Martin Herdegen \\ \textit{University of Warwick}
\and
Nazem Khan \\ \textit{Dublin City University}
\and
Cosimo Munari \\ \textit{University of Verona}
}

\date{\today}

\maketitle

\begin{abstract}
Risk and utility functionals are fundamental building blocks in economics and finance. In this paper we investigate under which conditions a risk or utility functional is sensitive to the accumulation of losses in the sense that any sufficiently large multiple of a position that exposes an agent to future losses has positive risk or negative utility. We call this property sensitivity to large losses and provide necessary and sufficient conditions thereof that are easy to check for a very large class of risk and utility functionals. In particular, our results do not rely on convexity and can therefore also be applied to most examples discussed in the recent literature, including (non-convex) star-shaped risk measures or $S$-shaped utility functions encountered in prospect theory. As expected, Value at Risk generally fails to be sensitive to large losses. More surprisingly, this is also true of Expected Shortfall. By contrast, expected utility functionals as well as (optimized) certainty equivalents are proved to be sensitive to large losses for many standard choices of concave and nonconcave utility functions, including $S$-shaped utility functions. We also show that Value at Risk and Expected Shortfall become sensitive to large losses if they are either properly adjusted or if the property is suitably localized.
\end{abstract}

\keywords{loss sensitivity, tail risk, risk measures, utility functionals, value at risk, expected shortfall, expected utility, certainty equivalent, $S$-shaped utility function}


\parindent 0em \noindent


\section{Introduction}

{\bf Historical perspective}. The problem of defining, measuring, and managing ``risk'' has always occupied a central position in economics and finance. While reviewing the historical developments of different, yet strongly interrelated disciplines like decision theory, portfolio theory or capital adequacy theory, one can find a common thread in the way the notion of ``risk'' has been broadly understood. Indeed, in the pioneering contributions of von Neumann and Morgenstern \cite{VonNeumannMorgenstern1947}, Friedman and Savage \cite{FriedmanSavage1948}, Markowitz \cite{Markowitz1952} and Sharpe \cite{Sharpe1964} as well as in much of the subsequent literature, ``risk'' was related to the entire spectrum of the possible future outcomes of a given economic variable and, more specifically, to the degree of dispersion of its outcomes around its expected value. As such, variance imposed itself as the reference measure of risk in both economics and finance. A powerful connection between the two fields was the recognition that, whenever a utility function is approximated by a quadratic Taylor expansion, risk-averse agents in a classical von Neumann-Morgenstern sense make choices according to a mean-variance criterion; see Markowitz \cite{Markowitz1959}. However, focusing on the entire distribution of future outcomes raises a natural question: Shouldn't economic agents be more concerned with outcomes \emph{below} the expected value? In other words, ``risk'' should be defined and quantified by respecting the natural asymmetry between profits and losses, which clearly have opposite financial meaning and lead to very different economic consequences. In fact, the first contribution in this direction is already \cite{Markowitz1959}, where the notion of semi-variance is put forward as a way to overcome the congenital inability of the variance to discriminate between fluctuations above and below the expected value.

\smallskip

In finance, the seminal article by Artzner, Delbaen, Eber and Heath \cite{artzner1999coherent} triggered an impressive amount of research on the topic of ``risk measures'' with applications to capital adequacy, capital allocation and portfolio selection as well as other areas. Among the class of risk measures defined in that paper, the most prominent ones both from a theoretical and practical perspective are Value at Risk and Expected Shortfall. They share the characteristic of being tail risk measures, i.e., unlike the variance, they depend only on a small portion of the distribution of the future outcomes of the financial variable under study. More precisely, Value at Risk corresponds to the threshold where the tail starts whereas Expected Shortfall is an average of the outcomes in the tail. In particular, Expected Shortfall captures tail risk in a more comprehensive manner. Consequently, in a desire to move towards a more tail-sensitive regime, the Basel III accord changed the official risk measure for market risk in banking regulation from Value at Risk to Expected Shortfall. As put in the Executive Summary of the Consultative Document on the ``Fundamental Review of the Trading Book'' issued by the Basel Committee on Banking Supervision in May 2012 \cite{Basel}:
\begin{quote}
A number of weaknesses have been identified with using value-at-risk (VaR) for determining regulatory capital requirements, including its inability to capture ``tail risk''. For this reason, the Committee has considered alternative risk metrics, in particular expected shortfall (ES). ES measures the riskiness of a position by considering both the size and the likelihood of losses above a certain confidence level. In other words, it is the expected value of those losses
beyond a given confidence level. The Committee recognises that moving to ES could entail certain operational challenges; nonetheless it believes that these are outweighed by the benefits of replacing VaR with a measure that better captures tail risk.
\end{quote}
On a more theoretical level, there have been several attempts to study alternative tail risk measures and describe a general framework for tail risk measures with capital adequacy in mind, see, e.g., \cite{BelliniKlarMullerRosazza2014,BignozziBurzoniMunari2020,BurzoniMunariWang2022,ContDeguestScandolo2010,KouPeng2014,KouPeng2016,LiuWang2021}. In particular, the defining property of a tail risk measure in \cite{KouPeng2016,LiuWang2021} is that it should preserve the ranking of random variables under addition of comonotonic risk or that the risk measure does not discriminate between random variables whose quantiles beyond a certain level coincide. One should also mention that the topic ``tail risk'' has traditionally been central in actuarial science due to the catastrophic nature of many insurance losses. For example the literature on premium principles, dependence theory and extreme value theory is often concerned with the study of suitable tail-related concepts, like that of tail dependence or the loss-tail actuarial index; see, e.g.,  \cite{EmbrechtsKluppelbergMikosch1999,HaanFerreira2006,Wang1996,Wang1998,WangWang2016}.

\smallskip

In decision theory one can identify a similar pattern shifting the focus of research towards a notion of ``risk'' that accounts for the asymmetry between profits and losses. By summarizing and adding new evidence to a number of paradoxes from the previous literature, the seminal paper by Kahneman and Tversky \cite{KahnemanTversky1979} on prospect theory demonstrated that classical expected utility theory fails to capture some fundamental asymmetries in the observed decision behaviour of agents that can be explained by the different roles played by profits and losses. As put in that paper:
\begin{quote}
[...] the reflection of prospects around 0 reverses the preference order. We label this
pattern the reflection effect. [...] note that the reflection effect implies that risk aversion in the positive domain is accompanied by risk
seeking in the negative domain. [...] our data are incompatible with the notion that certainty is generally desirable. Rather, it appears that certainty increases the aversion of losses as well as the desirability of gains.
\end{quote}
These considerations shifted the focus of much of the subsequent research from risk aversion to loss aversion; see \cite{KahnemanKnetschThaler1991,RabinThaler2001,TverskyKahneman1991} and the references therein. Some of the key experiments of prospect theory support diminishing sensitivity for losses, i.e., the marginal value of losses decreases with their size, and several indices of loss aversion have been proposed in the literature; see, e.g., \cite{SchmidtTraub2002,TverskyKahneman1991}. More recently, the risk seeking behaviour for losses advocated by prospect theory has been challenged and a more complicated co-existence of risk seeking and risk aversion, especially in the presence of large losses, has been suggested and tested empirically; see, e.g., \cite{bosch2006reflections,bosch2010averting}. For our purposes, it is also relevant to highlight \cite{giesecke2008measuring}, which lies at the intersection of risk measure theory and decision theory and focuses on comparing standard regulatory risk measures like Value at Risk and Expected Shortfall with utility-based risk measures. The latter quantify risk in terms of the smallest amount of capital that has to be injected into an outstanding financial position in order to make the modified position preferable to a benchmark position from the point of view of classical expected utility theory. In the context of a case study, the authors argue that a utility-based risk measure, for instance the one induced by power utility, may be more ``sensitive with respect to extreme events'' compared to both Value at Risk and Expected Shortfall. In that paper sensitivity to extreme events is captured by the dependence of a risk measure on a parameter that represents the conditional expected loss of an extreme event. Another relevant recent paper is \cite{ArmstrongBrigo2019}, where it is argued that a utility-based risk measure does a better job than risk measures like Value at Risk and Expected Shortfall in constraining excessive loss taking behavior of portfolio investors seeking to maximize an S-shaped utility like that postulated by prospect theory.

\smallskip

{\bf Contribution of the paper}. Our aim in this paper is to revisit the topic of ``loss sensitivity'' from first principles. More precisely, we are interested in studying under which conditions a risk or utility functional is sensitive to the accumulation of large losses in the sense that any sufficiently large multiple of a position that exposes the agent to downside risk moves the scaled position from acceptable to unacceptable risk or from utility to disutility. We call this property  {\em sensitivity to large losses}. Failing to meet this condition means using a rule to quantify risk or utility that, in the worst case, is blind to potentially catastrophic losses and, in the best case, allows gains to compensate for large losses, which, as argued below, is at least problematic in certain financial applications. After defining sensitivity to large losses in a rigorous way, we move on to establishing a number of sufficient and necessary conditions for a risk or utility functional to satisfy this property. In particular, we aim to find tractable conditions that can be applied to a large variety of concrete examples that are relevant from a practical or a theoretical perspective.

\smallskip

The concept of a risk functional, respectively utility functional, is broad and covers all functionals
\[
\cR:\cX\to(-\infty,\infty], \ \ \ \ \cU:\cX\to[-\infty,\infty),
\]
defined on a suitable domain $\cX$ of random variables over a probability space $(\Omega,\cF,\probp)$. Here, the elements of $\cX$ represent net positions of economic agents such as the net wealth of a consumer unit, the return on an investor's portfolio or the equity value of a company. We assume that $\cR$ and $\cU$ are monotone and normalized in the sense that a position with a worse loss profile has higher risk/lower utility and the risk/utility of the null position is set to zero by convention. As a result, we cover a very wide spectrum of concrete functionals ranging from (convex and non-convex) risk measures over (concave and $S$-shaped) expected utility functionals to (optimized) certainty equivalent functionals. To each risk/utility functional as above one can clearly associate a utility/risk functional through
\[
\cU_\cR(X):=-\cR(X), \ \ \ \ \cR_\cU(X):=-\cU(X),
\]
showing that the mathematical treatment of these two types of functionals is symmetrical. For this reason, for the remainder of the introduction and in the statement and proofs of our results, we focus on risk functionals and leave to the reader the equivalent formulations in terms of utility functionals. Nevertheless, as the economic interpretation is not symmetric, we provide definitions, discussions and relevant examples for both types of functionals.

\smallskip

The point of departure for our work is the property of {\em sensitivity to losses}, which stipulates that a risk functional marks as “unacceptable” any position that includes some loss potential by assigning to it a strictly positive risk indicator. Formally, this translates into the requirement:
\[
X\in\cX, \ \probp(X<0)>0 \ \implies \ \cR(X)>0.
\]
In the language of capital adequacy, this is equivalent to saying that a company needs to be always solvent to be deemed adequately capitalized. Clearly, this property is too stringent to make sense from a practical perspective. This is because ensuring solvency with certainty would be extremely costly and would reverberate into correspondingly costly -- most likely prohibitively costly -- commercial conditions for the company's customers, including high maintenance costs and fees, high insurance premia or low interest rates. As a result, a tradeoff between costs and default risk will have to be made, leading customers and regulators on their behalf to accept some default risk in exchange for more attractive commercial conditions.

\smallskip

An economically more compelling way to define “sensitivity to losses” is therefore to impose that positions with {\em sufficiently large} loss peaks are deemed unacceptable. Formally speaking, a risk functional is sensitive to large losses if it marks as “unacceptable” any sufficiently large multiple of a position that contains some losses:
\[
X\in\cX, \ \probp(X<0)>0 \ \implies \ \cR(\lambda X)>0 \ \mbox{for $\lambda>0$ large enough}.
\]
In other words, a risk functional is declared to be insensitive if either it is blind to potentially catastrophic losses or permits gains to compensate for excessively large losses. In the latter case, the problem is twofold. On the one hand, gains and losses are realized in {\em different} future scenarios and it is thus debatable, at a conceptual level, whether a risk functional that mixes gains and losses is a sound indicator of (tail) risk; see, e.g., \cite{cont2013loss,Staum2013}. On the other hand, there are certain applications where the compensation of losses by means of gains creates possible conflicts of interest undermining the very objectives of risk control. For example, if a risk functional is used to determine capital requirements of financial institutions, whose primary goal is to protect liability holders from the risk of default, then it is problematic if a company with a worse default profile is deemed adequately capitalized because of a larger surplus, of which liability holders have no direct share; see, e.g., the discussion on surplus-invariant risk measures in \cite{HePeng2018,KochMorenoMunari2015,GaoMunari2020}.

\smallskip

We begin our analysis by noting the simple but important result that a positively homogeneous risk functional is sensitive to large losses precisely when it is sensitive to losses (Proposition~\ref{prop: positively homogeneous}). This is because, under positive homogeneity, a risk functional scales linearly with the position's size and therefore, in a sense, treats large and small losses alike when it comes to discriminating positions based on the sign of their outcomes. To the best of our knowledge, this important (and somewhat problematic) implication of positive homogeneity has never been fully recognised in the literature. 

\smallskip

The (economically undesirable) equivalence between sensitivity to large losses and sensitivity to losses ceases to hold as soon as one moves away from positive homogeneity. In the inhomogeneous case, we investigate sensitivity to large losses in the class of star-shaped risk functionals, which is much larger than but contains the class of convex risk functionals. For a star-shaped risk functional we prove that sensitivity to large losses holds whenever the risk of any position that contains some losses explodes as a function of the position's size (Theorem~\ref{theo: recession functional}):\begin{equation}
\label{eq: divergence definition}
X\in\cX, \ \probp(X<0)>0 \ \implies \ \cR(\lambda X)\to\infty \ \mbox{as $\lambda\to\infty$}.
\end{equation}
Equivalently, sensitivity to large losses holds whenever, for each position exposing the agent to some losses, {\em some} rescaling of the position is unacceptable (Theorem~\ref{theo: recession functional}): 
\[
X\in\cX, \ \probp(X<0)>0 \ \implies \ \cR(\lambda X)>0 \ \mbox{for some $\lambda>0$}.
\]
We also show that the equivalence generally fails in the absence of star-shapedness. This is because, in general, the sign of a risk functional as a function of the position's size may, somewhat disturbingly, turn from negative to positive and vice versa. Our result therefore provides a new argument in favour of star-shapedness that adds to the recent analysis of this property in \cite{CastagnoliCattelanMaccheroniTebaldiWang2022}. The previous characterizations greatly simplify the verification of sensitivity to large losses under star-shapedness and provide a characterization of this property in terms of the recession functional $\cR^\infty$, i.e., the best positively homogeneous approximation of a risk functional by excess. Most notably, in the class of star-shaped risk functionals that are also cash-additive,
sensitivity to large losses boils down to the recession functional reducing to the worst-case risk measure (Theorem~\ref{theo: cash-additive}):
\[
X\in\cX \ \implies \ \cR^\infty(X):=\sup_{\lambda>0}\frac{\cR(\lambda X)}{\lambda}=\esssup(-X).
\]
Here, the essential supremum $\esssup$ refers to the largest outcome of the random variable to which it is applied. This characterization can be exploited to verify sensitivity to large losses for concrete examples of risk functionals. If star-shapenedness is replaced with the stronger property of convexity, we obtain an additional characterization with an appealing economic interpretation. In this case, a risk functional is sensitive to large losses precisely when it is not possible to use a position carrying downside risk to reduce risk through aggregation. In other words, under sensitivity to large losses, adding a position with downside risk to an outstanding position must increase risk at least in some cases (Theorem~\ref{theo: convex}):
\[
X\in\cX, \ \probp(X<0)>0 \ \implies \ \mbox{$\cR(Y+X)>\cR(Y)$ for some $Y\in\cX$}.
\]

\smallskip

It should be noted that two key risk functionals in practice and the most prominent in capital adequacy, namely Value at Risk and Expected Shortfall, fail to be sensitive to large losses on their natural domain of definition (Example \ref{ex: var and es}). This failure can be viewed as a consequence of their positive homogeneity. While this is interesting {\em per se} and contributes to a better understanding of their merits and limitations, this does not mean that Value at Risk and Expected Shortfall cannot be sensitive to large losses if they are restricted to a suitable domain of financial positions. This observation suggests localizing sensitivity to large losses to a subset of the model space:
\[
X\in\cS\subset\cX, \ \probp(X<0)>0 \ \implies \ \cR(\lambda X)>0 \ \mbox{for $\lambda>0$ large enough}.
\]
It turns out that the results mentioned above remain true if the larger domain $\cX$ is replaced by $\cS$. The advantage of this added flexibility is that the theory becomes much richer and inclusive as more risk functionals including Value at Risk and Expected Shortfall can be shown to be sensitive to large losses \emph{on specific domains of positions}. We mainly focus on three domains in this paper: (1) $\cS$ consists of all positions in $\cX$ that are strictly negative almost surely, in which case the associated sensitivity to large losses is named {\em sensitivity to large sure losses}; (2) $\cS$ consists of all nonpositive positions in $\cX$, in which case we speak of {\em sensitivity to large pure losses}; (3) $\cS$ consists of all positions in $\cX$ with nonpositive expectation, in which case we speak of {\em sensitivity to large expected losses}. It turns out that Value at Risk is sensitive to large sure losses but not to large pure/expected losses, whereas Expected Shortfall is sensitive to large sure/pure/expected losses (Example \ref{ex: var}(a) and Example \ref{ex: ES}).

\smallskip

By requiring only minimal properties on risk and utility functionals, our approach covers both convex/concave and nonconvex/nonconcave functionals. Notably, we can apply our results to nonconvex examples of risk measures such as Value at Risk and some of its variants, as well as to expected utility and (optimized) certainty equivalent functionals associated with nonconcave utility functions like the concave-convex-concave functions in \cite{FriedmanSavage1948} and their modifications in \cite{Markowitz1952b}, the star-shaped functions in \cite{LandsbergerMeilijson1990} or the S-shaped functions from prospect theory \cite{KahnemanTversky1979}. In particular, expected utility is sensitive to large losses whenever the loss of utility in the presence of large losses dominates at the limit the gain in utility in the presence of large gains, which holds for most (but not all) classical (von Neumann-Morgenstern) utility functions but is also satisfied by many utility functions used in prospect theory (Theorem \ref{thm:expect utility}). In our analysis we devote special attention to star-shaped risk and utility functionals, which turn out to be particularly flexible when it comes to studying their sensitivity to large losses. These results can be viewed as a complement to the recent study of star-shaped risk measures in \cite{CastagnoliCattelanMaccheroniTebaldiWang2022}. Among our examples, we treat risk measures based on star-shaped loss functions, a natural and interesting example of a star-shaped risk measure that adds to the list found in that paper.

\smallskip

To the best of our knowledge, the property (but neither the terminology nor the interpretation) of sensitivity to large losses was first used in \cite[Remark 2.7]{ChernyKupper2007} in the context of a special class of concave utility functionals, termed divergence utilities, in the equivalent form~\eqref{eq: divergence definition}. Interestingly, it is said there that the property ``shows an essential difference between divergence utilities and coherent utilities and, in our opinion, could be quite a desirable feature for potential applications of divergence utilities to measuring preferences''. It also occurs in \cite{CheriditoHorstKupperPirvu2016}, where the authors deal with equilibrium pricing and model agents' preferences by means of concave utility functionals that, among other properties, are sometimes required to satisfy sensitivity to large losses in the form~\eqref{eq: divergence definition}. Notwithstanding, the property only plays a technical role and is not further discussed. More recently, sensitivity to large expected losses was used in \cite{HerdegenKhan2024}  to prove the existence of optimal portfolios in a mean-risk framework extending the classical mean-variance Markowitz's setting. \\
On a side note, it is worth pointing out that the strategy of using recession functionals in our study of sensitivity to large losses for risk/utility functionals that are not positively homogeneous has been recently adopted in a number of papers focusing on different topics, ranging from pricing to portfolio selection and capital allocation, but with the common denominator of dealing with ``nonconic'' problems; see \cite{ArducaMunari2023,BaesKochMunari2020,HerdegenKhan2024,Liebrich2024,LiebrichMunari2022,LiebrichSvindland2019,Pennanen2011}. 

\smallskip

The remainder of this paper is organized as follows. In Section~\ref{sec: risk measures} we formally introduce risk and utility functionals and review some of their key properties used in the paper. In Section~\ref{sec: sensitivity to large losses} we define sensitivity to large losses and establish characterizations under positive homogeneity, star-shapedness, convexity, and cash additivity. Moreover, we study localized versions that have partly been considered in the extant literature and explore the relationship with sensitivity to loss concentrations. In Sections~\ref{sect: examples} and \ref{sec:examples utility} we illustrate how our results can be used to verify sensitivity to large losses for various examples of utility and risk functionals. Section \ref{sec:conclusion} concludes. All proofs and some additional results of a more technical nature are delegated to Appendix \ref{sec:appendix}.


\section{Risk and utility functionals}
\label{sec: risk measures}

In this preliminary section, we introduce our model space and list the properties of risk and utility functionals that are used in the paper. We consider a one-period economy where uncertainty about the terminal state of the world is modelled by a probability space $(\Omega,\cF,\probp)$. We denote by $L^1$ the space of integrable random variables modulo almost-sure equality under $\probp$. As usual, we identify numbers in $\R$ with constant random variables. The elements of $L^1$ represent net positions of financial agents at the terminal date, with the convention that a positive outcome of a random variable is interpreted as a profit. The expectation and essential supremum of $X\in L^1$ are denoted by $\E[X]$ and $\esssup(X)$. The positive part and the negative part of $X\in L^0$ are defined, respectively, by $X^+:=\max\{X,0\}$ and $X^-:=\max\{-X,0\}$.

\smallskip

Throughout the paper we fix a topological vector space $\cX\subset L^1$ of reference financial positions, which is partially ordered by the usual almost-sure inequality under $\probp$. The corresponding convex cone of nonnegative positions is denoted by $\cX_+$. We always assume that $\cX$ contains all bounded random variables and is closed under multiplication by indicator functions. Additional requirements on $\cX$ will be added when needed. We say that a functional $\cH:\cX\to[-\infty, \infty]$ is

\begin{itemize}
    \item \emph{normalized} if $\cH(0)=0$.
   \item \emph{decreasing} if $\cH(X)\ge \cH(Y)$ for all $X,Y\in\cX$ with $X\le Y$.
 \item \emph{increasing} if $\cH(X)\le \cH(Y)$ for all $X,Y\in\cX$ with $X\le Y$.
    \item \emph{positively star-shaped} if $\cH(\lambda X)\ge\lambda \cH(X)$ for all $X\in\cX$ and $\lambda\in(1,\infty)$.
        \item \emph{negatively star-shaped} if $\cH(\lambda X)\le\lambda \cH(X)$ for all $X\in\cX$ and $\lambda\in(1,\infty)$.
\end{itemize}
The {\em (effective) domain} of $\cH$ is $\dom(\cH):=\{X\in\cX \,; \ \cH(X)\in\R\}$. Note that what we called positively star-shaped is usually only called star-shaped in the risk measure literature. Alternatively, a positively star-shaped functional is called star-shaped at $0$ and a negatively star-shaped functional is called star-shaped at $\infty$.

\begin{definition}
A {\em risk functional} is any functional $\cR:\cX\to(-\infty,\infty]$ that is normalized and decreasing. The set $\cA_\cR:=\{X\in\cX \,; \ \cR(X)\le0\}$ is called {\em acceptance set} and its elements {\em acceptable positions}. A risk functional that is positively star-shaped is called a \emph{star-shaped risk functional}. A {\em utility functional} is any functional $\cU:\cX\to[-\infty,\infty)$ that is normalized and increasing. The set $\cA_\cU:=\{X\in\cX \,; \ \cU(X)\ge0\}$ is called {\em acceptance set} and its elements {\em acceptable positions}. A utility functional that is negatively star-shaped is called a \emph{star-shaped utility functional}.
\end{definition}

\smallskip

Both defining requirements of a risk/utility functional are economically plausible and satisfied by the vast majority of measures of risk and indicators of utility encountered in theory and practice. Incidentally, note that the normalization property is always satisfied up to a convenient translation by a constant (provided the risk/utility of the null position is finite). This choice allows us to interpret every acceptable position as a position whose risk/utility is not larger/smaller than the risk/utility of the null position. Note also that, by definition, every nonnegative position is acceptable and adding a nonnegative position to an acceptable position leads to another acceptable position under any risk/utility functional. In passing, note that there is no ambiguity as to our definition of acceptability because the only functional that is both a risk and utility functional is the zero functional.

\smallskip

It is clear that, while the economic interpretation of a risk functional and a utility functional is different, as becomes particularly clear when we look at concrete examples, from a mathematical perspective a theory of (star-shaped) risk functionals is completely symmetric to a theory of (star-shaped) utility functionals. This is because it suffices to change sign to switch from a risk to a utility functional and viceversa. As different readers may be interested in different interpretations and results, we formulate the key properties for both classes of functionals but, for efficiency, we have decided to state and prove our results for risk functionals only. In Sections \ref{sect: examples} and \ref{sec:examples utility} we work with concrete examples and apply our results to risk and utility functionals, respectively.

\smallskip

In the sequel we freely use the following properties. A risk functional $\cR:\cX\to(-\infty, \infty]$ is called
\begin{itemize}
    \item \emph{positively homogeneous} if $\cR(\lambda X)=\lambda \cR(X)$ for all $X\in\cX$ and $\lambda\in(0,\infty)$.
    \item \emph{convex} if $\cR(\lambda X+(1-\lambda)Y)\le\lambda \cR(X)+(1-\lambda)\cR(Y)$ for all $X,Y\in\cX$ and $\lambda\in(0,1)$.
    \item \emph{cash-additive} if $\cR(X+m)=\cR(X)-m$ for all $X\in\cX$ and $m\in\R$.
    \item \emph{lower semicontinuous} if, for all nets $(X_k)\subset\cX$ and $X\in\cX$ with $X_k\to X$,
\[
\cR(X) \le \liminf_{k}\cR(X_k).
\]
    \item \emph{lower regular} if, for every $X\in\cX$ and all sequences $(m_n)\subset\R$ and $m\in\R$ with $m_n\to m$,
\[
\cR(X+m) \le \liminf_{n\to\infty} \cR(X+m_n).
\]
\end{itemize}
When considering a utility functional, the properties of convexity and lower semicontinuity/regular- ity translate into concavity and upper semicontinuity/regularity and the property of cash-additivity retains the same name but constants are ``taken out'' of the functional without changing sign. Once again, no confusion can arise because there is no cash-additive risk functional that is also a utility functional.

\begin{remark}
The above properties, especially convexity and positive homogeneity, are standard in the theory of risk/utility functionals. We refer to \cite{FoellmerSchied2016} for a general treatment of cash-additive risk functionals and to \cite{CastagnoliCattelanMaccheroniTebaldiWang2022} for a recent presentation of cash-additive star-shaped risk functionals. For a general treatment beyond cash-additivity, see, e.g., \cite{FarkasKochMunari2014}. While cash-additivity is less mainstream in utility theory, there are two notable exceptions corresponding to numerical representations of preferences in the dual theory of choice in \cite{Yaari1987} and to variational preferences in \cite{MaccheroniMarinacciRustichini2006}. In addition, we refer to \cite{Delbaen2012}. It is worth pointing out the following implications holding for a risk functional:
\begin{enumerate}
    \item[(a)] Both positive homogeneity and convexity imply star-shapedness. 
    \item[(b)] Both lower semicontinuity and cash-additivity imply lower regularity.
\end{enumerate}
\end{remark}

\smallskip

We end this section by recalling the most important examples of risk and utility functionals that we will be using throughout the paper.

\begin{example}
\label{ex: VaR, ES, Eu}
(a) For $X \in L^1$ and $\alpha\in(0,1)$, we define the {\em Value at Risk} of $X$ at level $\alpha$ by
\[
\VaR_\alpha(X) := \inf\{m\in\R \,; \ \probp(X+m<0)\le\alpha\}.
\]
Note that, up to a sign, $\VaR_\alpha(X)$ coincides with the upper $\alpha$-quantile of $X$. The functional $\VaR_\alpha$ quantifies risk as the minimal amount of capital that has to be added to a position to constraint the probability of loss/default below $\alpha$. It is clear that $\VaR_\alpha$ is a cash-additive and positively homogeneous risk functional according to our general definition.

\smallskip

(b) For $X \in L^1$ and $\alpha\in(0,1)$, we define the {\em Expected Shortfall} of $X$ at level $\alpha$ by
\[
\ES_\alpha(X) := \frac{1}{\alpha}\int_0^\alpha\VaR_\beta(X)d\beta.
\]
The functional $\ES_\alpha$ thus quantifies risk as the minimal amount of capital that has to be added to a position to ensure profit/solvency on average in the worst $100\alpha\%$ cases. It is clear that  $\ES_\alpha$ is a cash-additive and positively homogeneous risk functional according to our general definition, which is also known to be convex.

\smallskip

(c) A function $u:\mathbb{R} \to [-\infty,\infty)$ is a \emph{utility function} if it is increasing, normalised, i.e., $u(0) = 0$, and not superlinear at $\infty$ in the sense that
\begin{equation}
\label{eq:def:utility fn:not superlinear}
\limsup_{x \to \infty} \tfrac{u(x)}{x} < \infty.
\end{equation}  
For a utility function $u$, the \emph{expected utility} of $X \in L^1$ is given by\footnote{The asymptotic property \eqref{eq:def:utility fn:not superlinear} implies that there exist $a,x_0 \in (0,\infty)$ such that
\begin{equation*}
u(x) \le ax, \quad \textnormal{for all } x \in (x_0,\infty).
\end{equation*}
As a result, for any $X\in L^1$, the expectation $\mathbb{E}[u(X)]$ is well-defined and takes values in $[-\infty,\infty)$.}
\begin{equation*}
    \cE_u(X):=\mathbb{E}[u(X)].
\end{equation*}
\end{example}

\begin{remark}
It is worth noting that our definition of a utility function and expected utility is very general. Up to normalisation, it includes any classical concave (von Neumann-Morgenstern) utility function, the concave-convex-concave utility functions discussed in \cite{FriedmanSavage1948,Markowitz1952b}, the star-shaped utility functions studied by \cite{LandsbergerMeilijson1990}, and the $S$-shaped utility functions studied by \cite{KahnemanTversky1979}. But it also includes many examples beyond these cases. In particular, since we understand increasing in a weak sense, we allow $u$ to have constant parts. Also note that we do not require $u$ to be continuous. Finally, expected utility will be concave or star-shaped when $u$ is concave or negatively star-shaped, respectively.
\end{remark}


\section{Sensitivity to large losses}
\label{sec: sensitivity to large losses}

In this section we introduce a new property of risk/utility functionals, called sensitivity to large losses, for which we provide a number of characterizations. As suggested by the terminology, a risk/utility functional is sensitive to large losses when it reacts to excessive loss levels. More precisely, sensitivity to large losses stipulates that it is not possible to rescale a position that exposes the agent to downside outcomes without losing acceptability once the position's size, hence the implied losses, becomes too large. For better comparison, we also introduce the property of sensitivity to losses, which prescribes that every position containing some loss potential is unacceptable.

\begin{definition}
Let $\cS\subset\cX$. A risk functional $\cR$, resp.\ a utility functional $\cU$, is called
{\em sensitive to losses} on $\cS$ if, for every $X\in\cS$ with $\probp(X<0)>0$, we have $\cR(X)>0$, resp.\ $\cU(X)<0$. It is called {\em sensitive to large losses} on $\cS$ if, for every $X\in\cS$ with $\probp(X<0)>0$, there exists $\lambda_X\in(0,\infty)$ such that $\cR(\lambda X)>0$, resp.\ $\cU(\lambda X)<0$, for each $\lambda\in(\lambda_X,\infty)$.
\end{definition}

\smallskip

It is clear that sensitivity to losses is a very stringent property, which implies sensitivity to large losses in many situations.

\begin{proposition}
Let $\cS\subset\cX$ and consider a risk functional $\cR$. Assume $\cR$ is star-shaped or $\cS$ is a cone, i.e., $\lambda X\in\cS$ for all $X\in\cS$ and $\lambda\in(0,\infty)$. If $\cR$ is sensitive to losses on $\cS$, then $\cR$ is sensitive to large losses on $\cS$.
\end{proposition}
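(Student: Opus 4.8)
The plan is to fix an arbitrary $X\in\cS$ with $\probp(X<0)>0$ and, in each of the two hypotheses, exhibit an explicit threshold $\lambda_X\in(0,\infty)$ above which $\cR(\lambda X)>0$. In both cases I expect $\lambda_X=1$ to work, so the argument should be short and the two cases handled separately.

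First I would treat the case where $\cS$ is a cone. Here the key observation is that multiplying by a positive scalar does not change the sign of the outcomes: for every $\lambda\in(0,\infty)$ we have $\probp(\lambda X<0)=\probp(X<0)>0$, and $\lambda X\in\cS$ because $\cS$ is a cone. Hence each rescaled position $\lambda X$ is itself an element of $\cS$ that carries downside risk, so sensitivity to losses on $\cS$ applies directly to $\lambda X$ and yields $\cR(\lambda X)>0$. Since this holds for \emph{every} $\lambda\in(0,\infty)$, choosing $\lambda_X=1$ (or indeed any positive number) gives the desired conclusion.

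Next I would treat the case where $\cR$ is (positively) star-shaped. Now I apply sensitivity to losses only to $X$ itself, which lies in $\cS$ and satisfies $\probp(X<0)>0$, to get the strict inequality $\cR(X)>0$. The star-shapedness property $\cR(\lambda X)\ge\lambda\cR(X)$ for $\lambda\in(1,\infty)$ then propagates this positivity upward: since $\lambda>0$ and $\cR(X)>0$, the product $\lambda\cR(X)$ is strictly positive, whence $\cR(\lambda X)>0$ for all $\lambda\in(1,\infty)$. Taking $\lambda_X=1$ again completes the argument.

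I do not anticipate a genuine obstacle, as each case reduces to a one-line estimate. The only points requiring a little care are bookkeeping ones: in the cone case one must note that $\lambda X$ remains in $\cS$ and retains the loss event, so that sensitivity to losses is legitimately invoked on the rescaled position rather than on $X$; in the star-shaped case one must use that the inequality $\cR(X)>0$ is \emph{strict} so that it survives multiplication by $\lambda>1$. No continuity, convexity, or finiteness assumptions are needed, and the normalization and monotonicity of $\cR$ play no explicit role beyond what is already encoded in sensitivity to losses.
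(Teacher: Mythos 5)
Your proof is correct: the cone case follows by applying sensitivity to losses directly to each rescaled position $\lambda X\in\cS$ (noting $\probp(\lambda X<0)=\probp(X<0)>0$), and the star-shaped case follows from $\cR(\lambda X)\ge\lambda\cR(X)>0$ for $\lambda\in(1,\infty)$, so $\lambda_X=1$ works in both cases. The paper states this proposition without proof, evidently regarding exactly this two-line argument as immediate, so your proposal coincides with the intended reasoning.
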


\smallskip

The converse implication generally fails to hold as illustrated by the following example featuring expected exponential utility and the entropic risk measure.

\begin{example}
\label{ex: example zero}
Let $\cX= L^\infty$ and consider the exponential utility function $u(x):=1-e^{-x}$ for $x\in\R$. In addition, assume there is $A\in\cF$ with $\probp(A)=1/2$ and define $Y:=\ind_A-\frac{1}{3}\ind_{A^c}\in\cX$.

(a) The utility functional $\cE_u(X) := \E[u(X)]$ for $X \in \cX$ fails to be sensitive to losses on $\cX$ because $\cE_u(Y)=1-\frac{1}{2}(e^{-1}+e^{1/3})>0$. However, $\cE_u$ is sensitive to large losses on $\cX$. Indeed, take $X\in\cX$ with $\probp(X<0)>0$ and choose $\varepsilon\in(0,\infty)$ such that $\probp(X\le-\varepsilon)>0$. Letting $\lambda\to\infty$ gives
\[
\cE_u(\lambda X) \le \probp(X\le-\varepsilon)u(-\lambda\varepsilon)+\probp(X>0)u(\lambda\lnorm X\rnorm_\infty) \to -\infty.
\]

(b) Consider now the entropic risk measure defined for every $X\in\cX$ by
\[
\cR_u(X) := \inf\{m\in\R \,; \ \E[u(X+m)]\ge0\} = \log(\E[e^{-X}]).
\]
As $\cR_u(X) = \log(1-\cE_u(X))$, (a) implies that $\cR_u$ is sensitive to large losses but not to losses on $\cX$.
\end{example}

\smallskip

%
We now start our study of equivalent formulations of sensitivity to large losses. The case of a positively homogeneous risk/utility functional can be settled immediately. Under positive homogeneity, sensitivity to large losses is tantamount to sensitivity to losses and therefore stipulates that any financial position that exposes the agent to some loss cannot be acceptable regardless of how it is rescaled.

\begin{proposition}
\label{prop: positively homogeneous}
Let $\cS\subset\cX$. For a positively homogeneous risk functional $\cR$ the following statements are equivalent:
\begin{enumerate}
    \item[(a)] $\cR$ is sensitive to large losses on $\cS$.
    \item[(b)] $\cR$ is sensitive to losses on $\cS$.
\end{enumerate}
\end{proposition}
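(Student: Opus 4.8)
The plan is to exploit the single fact that positive homogeneity makes the sign of $\cR(\lambda X)$ independent of the scaling parameter $\lambda>0$. Indeed, for any $X\in\cX$ and $\lambda\in(0,\infty)$ we have $\cR(\lambda X)=\lambda\cR(X)$, and since $\lambda>0$ this yields the equivalence $\cR(\lambda X)>0 \iff \cR(X)>0$. Both implications of the proposition will then be immediate consequences of this observation, applied pointwise to each $X\in\cS$ with $\probp(X<0)>0$.

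For the implication (b)$\implies$(a), I would fix $X\in\cS$ with $\probp(X<0)>0$ and invoke sensitivity to losses to get $\cR(X)>0$. Positive homogeneity then gives $\cR(\lambda X)=\lambda\cR(X)>0$ for every $\lambda\in(0,\infty)$, so one may take $\lambda_X:=1$ (in fact any positive value works) and conclude that $\cR(\lambda X)>0$ for all $\lambda\in(\lambda_X,\infty)$. Thus $\cR$ is sensitive to large losses on $\cS$. Alternatively, one can observe that positive homogeneity implies star-shapedness and deduce this direction directly from the preceding proposition.

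For the implication (a)$\implies$(b), I would again fix $X\in\cS$ with $\probp(X<0)>0$ and use sensitivity to large losses to obtain $\lambda_X\in(0,\infty)$ with $\cR(\lambda X)>0$ for all $\lambda>\lambda_X$. Choosing any single such $\lambda$, positive homogeneity gives $\lambda\cR(X)=\cR(\lambda X)>0$, and dividing by $\lambda>0$ yields $\cR(X)>0$. Hence $\cR$ is sensitive to losses on $\cS$.

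There is no genuine obstacle here: the entire content is that multiplication by a positive scalar preserves strict positivity, so the apparently weaker ``for large $\lambda$'' requirement in sensitivity to large losses already forces strict positivity at $\lambda=1$. The only point worth flagging is to use strictly that $\lambda>0$, so that dividing by $\lambda$ is legitimate and sign-preserving; one does not even need the full range of large $\lambda$, as a single witnessing $\lambda$ suffices.
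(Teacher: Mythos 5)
Your proof is correct and is precisely the argument the paper has in mind: the paper omits an explicit proof of this proposition, treating it as immediate from the fact that positive homogeneity makes the sign of $\cR(\lambda X)$ independent of $\lambda\in(0,\infty)$, which is exactly the observation you exploit in both directions. Your handling of the possible value $\cR(X)=\infty$ and the remark that a single witnessing $\lambda$ suffices are both consistent with the paper's definitions, so nothing is missing.
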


\smallskip

The previous result has the immediate and important consequence that neither of the two most prominent risk functionals used in practice, Value at Risk and Expected Shortfall, is sensitive to large losses on their domain of definition. While this should be expected from Value at Risk due to its well-known blindness to tail risk, the statement might sound counterintuitive for Expected Shortfall, which is often viewed as a risk functional that is sensitive to the loss tail. This shows that ``loss sensitivity'' is a multi-faceted concept calling for an exploration at different levels in order to obtain a more comprehensive understanding of the merits and drawbacks of a given risk functional. In Sections~\ref{subsect: other loss sensitivity} and \ref{sect: examples} we will see how and to what extent Value at Risk and Expected Shortfall become sensitive to large losses if they are either localized to suitable domains of financial positions or properly adjusted.

\begin{example}
\label{ex: var and es}
Let $\cX=L^\infty$ and take $\alpha\in(0,1)$. (a) It is immediate to see that the tail blindness of $\VaR_\alpha$ makes it insensitive to large losses already on $\cX$. Indeed, assume there is $A\in\cF$ with $\probp(A)\in(0,\alpha]$. Let $X:=-\ind_A\in\cX$ and observe that $\probp(X<0)=\probp(A)>0$ but $\VaR_\alpha(X)=0$, showing that $\VaR_\alpha$ fails to be sensitive to losses, hence to large losses (by positive homogeneity).

\smallskip

(b) In spite of being sensitive to the tail by definition, $\ES_\alpha$ is not sensitive to large losses on $\cX$. Indeed, assume there is $A\in\cF$ with $\probp(A)\in(0,\alpha)$. Let $n\in\N$ with $n\ge\frac{\probp(A)}{\alpha-\probp(A)}$ and define $X:=-\ind_A+n\ind_{A^c}\in\cX$. Clearly, $\probp(X<0)=\probp(A)>0$. However,
\[
\ES_\alpha(X) = \tfrac{1}{\alpha}[\probp(A)-n(\alpha-\probp(A))] \le 0.
\]
Thus, $\ES_\alpha$ fails to be sensitive to losses, hence to large losses (by positive homogeneity).
\end{example}

\smallskip

For the analysis of sensitivity to large losses of a risk/utility functional that is not positively homogeneous, it is convenient to work with its recession functional, i.e., the smallest/largest positively homogeneous risk/utility functional that lies above/below the given functional in a pointwise way. 
\begin{definition}
For a risk functional $\cR$, define the risk functional $\cR^\infty$ by
\[
\cR^\infty(X) := \sup_{\lambda\in(0,\infty)}\frac{\cR(\lambda X)}{\lambda}.
\]
For a utility functional $\cU$, define the utility functional $\cU^\infty$ by
\[
\cU^\infty(X) := \inf_{\lambda\in(0,\infty)}\frac{\cU(\lambda X)}{\lambda}.
\]
\end{definition}
Note that a positively homogeneous risk/utility functional coincides with its recession functional. Note also that there is no ambiguity about the definition of recession functional as the zero functional is the only functional that is both a risk and utility functional. The next proposition shows that the recession functional is sensitive to large losses, or equivalently sensitive to losses, whenever the underlying risk/utility functional is sensitive to large losses, while the converse is not true in general as illustrated by the example below.

\begin{proposition}
\label{prop: first link with recession map}
Let $\cS\subset\cX$ and $\cR$ be a risk functional. The following statements are equivalent:
\begin{enumerate}
    \item[(a)] $\cR^\infty$ is sensitive to large losses on $\cS$.
    \item[(b)] $\cR^\infty$ is sensitive to losses on $\cS$.
    \item[(c)] For every $X\in\cS$ with $\probp(X<0)>0$ there is $\lambda\in(0,\infty)$ such that $\cR(\lambda X)>0$.
\end{enumerate}
In particular, $\cR^\infty$ is sensitive to losses on $\cS$ whenever $\cR$ is sensitive to large losses on $\cS$.
\end{proposition}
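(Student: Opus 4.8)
The plan is to reduce the whole proposition to two elementary facts: that $\cR^\infty$ is positively homogeneous, and that strict positivity of $\cR^\infty(X)$ is merely a reformulation of the existence of a single unacceptable rescaling of $X$.

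First I would record that $\cR^\infty$ is itself a positively homogeneous risk functional. Normalization is immediate from $\cR(0)=0$, and monotonicity follows because $X \le Y$ forces $\lambda X \le \lambda Y$ and hence $\cR(\lambda X) \ge \cR(\lambda Y)$ for every $\lambda \in (0,\infty)$, a relation preserved after dividing by $\lambda$ and passing to the supremum. Finiteness from below is guaranteed by $\cR^\infty(X) \ge \cR(X) > -\infty$, obtained by taking $\lambda = 1$ in the defining supremum. Positive homogeneity follows from the substitution $\nu = \lambda\mu$ inside the supremum defining $\cR^\infty(\mu X)$ for $\mu \in (0,\infty)$.

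With positive homogeneity in hand, the equivalence of (a) and (b) is immediate: it is exactly Proposition~\ref{prop: positively homogeneous} applied to the positively homogeneous risk functional $\cR^\infty$. For the equivalence of (b) and (c), I would fix $X \in \cS$ with $\probp(X<0)>0$ and observe that, by definition of the supremum, $\cR^\infty(X) > 0$ holds if and only if there is some $\lambda \in (0,\infty)$ with $\cR(\lambda X)/\lambda > 0$, which --- since $\lambda > 0$ --- is the same as $\cR(\lambda X) > 0$. Quantifying this equivalence over all admissible $X$ turns the statement ``$\cR^\infty$ is sensitive to losses on $\cS$'' into (c).

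The concluding ``in particular'' assertion is then immediate: if $\cR$ is sensitive to large losses on $\cS$, then for each $X \in \cS$ with $\probp(X<0)>0$ there is a threshold $\lambda_X$ such that $\cR(\lambda X) > 0$ for all $\lambda > \lambda_X$; in particular some $\lambda \in (0,\infty)$ satisfies $\cR(\lambda X) > 0$, which is exactly (c), and hence (b) holds. I do not anticipate a genuine obstacle here; the only point requiring a little care is the supremum reformulation in (b)$\Leftrightarrow$(c), where one must check that strict positivity of the supremum really produces a witness $\lambda$ with $\cR(\lambda X) > 0$ (this is fine, since $0$ lies strictly below the supremum and is therefore exceeded by some member of the family), and symmetrically that a single such $\lambda$ forces the supremum above $0$.
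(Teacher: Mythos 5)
Your proof is correct and follows essentially the same route as the paper: the equivalence (a)$\Leftrightarrow$(b) via Proposition~\ref{prop: positively homogeneous} applied to the positively homogeneous functional $\cR^\infty$, and (b)$\Leftrightarrow$(c) via the observation that $\cR^\infty(X)>0$ holds precisely when $\cR(\lambda X)>0$ for some $\lambda\in(0,\infty)$. Your explicit verification that $\cR^\infty$ is indeed a positively homogeneous risk functional is a point the paper leaves implicit in its definition, but it is a welcome (and correct) addition rather than a deviation.
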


\smallskip

\begin{example}
\label{ex: recession functional}
Let $\cX = L^\infty$. (a) Consider the risk functional given for every $X\in\cX$ by
\[
\cR(X) := \min\{\esssup(-X),\VaR_\alpha(X)+1\}.
\]
It is easy to verify that $\cR^\infty$ is sensitive to losses on $\cX$ because, for every $X\in\cX$,
\[
\cR^\infty(X) = \sup_{\lambda\in(0,\infty)}\min\{\esssup(-X),\VaR_\alpha(X)+\tfrac{1}{\lambda}\} = \esssup(-X).
\]
However, $\cR$ is not sensitive to large losses on $\cX$. Indeed, consider the position $X:=-\ind_A\in\cX$ with $\probp(A)\in(0,\alpha)$ and note that $\probp(X<0)>0$ and $\cR(\lambda X)=\min\{0,-\lambda+1\}$ for every $\lambda\in(0,\infty)$. In particular, $\cR(\lambda X)\le0$ for $\lambda\in[1,\infty)$. Incidentally, note that $\cR$ is not star-shaped because, e.g., $\cR(2X)=-1<0=2\cR(X)$.

\smallskip

(b) Consider the utility functional given for every $X\in\cX$ by
\[
\cU(X) := \max\{\E[X\ind_{\{X<0\}}],\E[X]-1\}.
\]
It is easy to verify that $\cU^\infty$ is sensitive to losses on $\cX$ because, for every $X\in\cX$,
\[
\cU^\infty(X) = \inf_{\lambda\in(0,\infty)}\max\{\E[X\ind_{\{X<0\}}],\E[X]-\tfrac{1}{\lambda}\} = \E[X\ind_{\{X<0\}}].
\]
However, $\cU$ is not sensitive to large losses on $\cS$. Indeed, consider the position $X:=\ind_{A^c}-\ind_A\in\cS$ with $\probp(A)\in(0,1/4)$ and note that $\probp(X<0)>0$ and $\cU(\lambda X)=\max\{-\lambda\probp(A),\lambda(1-2\probp(A))-1\}$ for every $\lambda\in(0,\infty)$. In particular, $\cU(\lambda X)\ge0$ for $\lambda\in[1/(1-2\probp(A)),\infty)$. Incidentally, note that $\cU$ is not star-shaped because, e.g., $\cU(2X)=1-4\probp(A)>-2\probp(A)=2\cU(X)$.
\end{example}


\subsection{Star-shaped risk/utility functionals}

While sensitivity to large losses of a risk/utility functional is in general not equivalent to that of its associated recession functional, this becomes true in the broad class of star-shaped risk/utility functionals. This is convenient because it gives a simple test to verify sensitivity to large losses in concrete examples; see Section~\ref{sect: examples}. In the following result we establish this and a variety of other necessary and sufficient conditions for sensitivity to large losses that are economically meaningful. In particular, for a star-shaped risk/utility functional, sensitivity to large losses holds whenever the risk/utility of a position exposing the agent to some loss increases/decreases without limit with the position's size.

\begin{theorem}
\label{theo: recession functional}
Let $\cS\subset\cX$. For a star-shaped risk functional $\cR$, the following  are equivalent:
\begin{enumerate}
    \item[(a)] $\cR$ is sensitive to large losses on $\cS$.
    \item[(b)] For every $X\in\cS$ with $\probp(X<0)>0$, we have $\sup_{\lambda> 0}\cR(\lambda X) =\infty$.
    \item[(c)] For every $X\in\cS$ with $\probp(X<0)>0$, there exists $\lambda\in(0,\infty)$ such that $\cR(\lambda X)>0$.
    \item[(d)] $\cR^\infty$ is sensitive to losses on $\cS$.
\end{enumerate}
Moreover, if $\cR$ is sensitive to large losses on $\cS$, then the following statement holds:
\begin{enumerate}
    \item[(e)] For every $X\in\cS$ with $\probp(X<0)>0$, there exists $Y\in\cX$ such that $\cR(X+Y)>\cR(Y)$.
\end{enumerate}
\end{theorem}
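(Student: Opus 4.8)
The plan is to extract first the single structural fact that powers the whole equivalence: for a positively star-shaped risk functional the map $\lambda\mapsto \cR(\lambda X)/\lambda$ is nondecreasing on $(0,\infty)$ for each fixed $X$. Indeed, for $0<\lambda<\mu$ one writes $\mu X=(\mu/\lambda)(\lambda X)$ with $\mu/\lambda>1$ and applies positive star-shapedness to the position $\lambda X$, obtaining $\cR(\mu X)\ge(\mu/\lambda)\cR(\lambda X)$, i.e.\ $\cR(\mu X)/\mu\ge\cR(\lambda X)/\lambda$. In particular $\cR^\infty(X)=\sup_{\lambda>0}\cR(\lambda X)/\lambda=\lim_{\lambda\to\infty}\cR(\lambda X)/\lambda$. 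Everything else is organized around the observation that statements (a)--(c) all hinge on the existence of a \emph{single} scale at which $X$ becomes unacceptable, and that star-shapedness upgrades ``some $\lambda$'' to ``all large $\lambda$''.

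Concretely, I would run the cycle (a)$\Rightarrow$(c)$\Rightarrow$(b)$\Rightarrow$(a). The implication (a)$\Rightarrow$(c) is immediate. For (c)$\Rightarrow$(b), if $\cR(\lambda_0 X)>0$ for some $\lambda_0$, then star-shapedness gives $\cR(\mu\lambda_0 X)\ge\mu\,\cR(\lambda_0 X)\to\infty$ as $\mu\to\infty$, so $\sup_{\lambda>0}\cR(\lambda X)=\infty$. For (b)$\Rightarrow$(a), since the supremum is $+\infty$ there is $\lambda_0$ with $\cR(\lambda_0 X)>0$, and the monotonicity of $\lambda\mapsto\cR(\lambda X)/\lambda$ forces $\cR(\lambda X)/\lambda\ge\cR(\lambda_0 X)/\lambda_0>0$, hence $\cR(\lambda X)>0$, for every $\lambda\ge\lambda_0$; this is precisely sensitivity to large losses with $\lambda_X=\lambda_0$. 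The equivalence (c)$\Leftrightarrow$(d) I would simply read off Proposition~\ref{prop: first link with recession map}, whose conditions (c) and (b) coincide verbatim with the present (c) and (d).

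It then remains to establish (a)$\Rightarrow$(e). Fix $X\in\cS$ with $\probp(X<0)>0$. If $\cR(X)=\infty$, then $Y:=0$ already works, since $\cR(X+0)=\infty>0=\cR(0)$ by normalization; so I assume $\cR(X)\in\R$. From (a)$\Rightarrow$(b) I obtain $\cR(nX)\to\infty$ as $n\to\infty$: picking $\lambda_0$ with $\cR(\lambda_0 X)>0$, monotonicity gives $\cR(\lambda X)\ge\lambda\,\cR(\lambda_0 X)/\lambda_0\to\infty$. Since $\cR(X)<\infty$ while $\cR(nX)\to\infty$, the sequence $(\cR(nX))_{n\ge1}$ cannot be nonincreasing (otherwise $\cR(nX)\le\cR(X)<\infty$ for all $n$), so there exists $n\ge1$ with $\cR((n+1)X)>\cR(nX)$. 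Taking $Y:=nX\in\cX$ yields $\cR(X+Y)=\cR((n+1)X)>\cR(nX)=\cR(Y)$, which is (e).

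The only genuinely delicate point is this last step: rather than searching for an abstract aggregating position $Y$, one should realize that a suitable integer multiple $Y=nX$ of the position itself does the job, with the divergence $\cR(nX)\to\infty$ against the finiteness of $\cR(X)$ guaranteeing at least one strict upward jump along the integers. I would also be careful to dispatch the case $\cR(X)=\infty$ separately, where the divergence argument is vacuous but $Y=0$ works trivially; the rest is a routine exploitation of the monotonicity of $\lambda\mapsto\cR(\lambda X)/\lambda$, which is the real workhorse throughout.
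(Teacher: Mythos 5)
Your proof is correct. For the equivalence of (a)--(d) you follow essentially the same route as the paper: the workhorse is the monotonicity of $\lambda\mapsto\cR(\lambda X)/\lambda$ (the paper isolates this as Proposition~\ref{prop: appendix}), the cycle through (a), (b), (c) uses it in the same way (only the orientation of the cycle differs), and the equivalence (c) $\Leftrightarrow$ (d) is read off Proposition~\ref{prop: first link with recession map} exactly as in the paper.

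Where you genuinely diverge is the implication (a) $\Rightarrow$ (e). The paper proves the functional inequality $\cR^\infty(X)\le\widehat{\cR}(X):=\sup_{Y\in\dom(\cR)}\{\cR(X+Y)-\cR(Y)\}$ for every $X\in\cX$, by an induction showing $\cR(nX)\le n\widehat{\cR}(X)$ and then using star-shapedness to interpolate from integer multiples to all $\lambda=\xi n$ with $\xi\in(0,1]$; statement (e) then follows because (d) forces $\widehat{\cR}(X)\ge\cR^\infty(X)>0$. Your argument is the direct (contrapositive-free) version of the same telescoping idea: star-shapedness upgrades $\cR(\lambda_0X)>0$ to divergence $\cR(nX)\to\infty$, which against $\cR(X)<\infty$ forces a strict jump $\cR((n+1)X)>\cR(nX)$ at some integer $n$, so $Y=nX$ is an explicit witness; the case $\cR(X)=\infty$ is dispatched by $Y=0$. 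Your route is shorter and more elementary -- no auxiliary functional $\widehat{\cR}$, no $\lambda=\xi n$ decomposition. What the paper's formulation buys in exchange is the quantitative inequality $\cR^\infty\le\widehat{\cR}$ itself, which is precisely the half that gets paired with the reverse inequality $\cR^\infty\ge\widehat{\cR}$ established (under convexity and lower semicontinuity) in the proof of Theorem~\ref{theo: convex}; your jump argument yields (e) but not that reusable inequality. As stated, however, the theorem only asks for (e), so your proof is complete.
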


\smallskip

As shown in Theorem~\ref{theo: recession functional}, under a star-shaped risk/utility functional that is sensitive to large losses, it is not possible to reduce/increase the risk/utility of an arbitrary financial position by adding to it a position that exposes the agent to some losses. The next example shows that, for a star-shaped risk/utility functional, this condition is necessary but, in general, not sufficient to ensure sensitivity to large losses. We illustrate this by considering expected utility in the presence of a star-shaped utility function. In fact, this is also true if star-shapedness is replaced with the stronger positive homogeneity, as shown by Value at Risk.

\begin{example}
Let $\cX=L^\infty$. (a) Take $\alpha\in(0,1)$. Recall that $\VaR_\alpha$ is positively homogeneous and, hence, star-shaped. Moreover, $\VaR_\alpha$ is not sensitive to large losses on $\cX$. However, for every $X\in\cX$ with $\probp(X<0)>0$ there exists $Y\in\cX$ with $\VaR_\alpha(Y)=0$ such that $\VaR_\alpha(X+Y)>0$. Indeed, if $\probp(X<0)>\alpha$, then it suffices to take $Y=0$; otherwise, we can take $Y=-(1+\lnorm X\rnorm_\infty)\ind_A$ where $A\in\cF$ satisfies $A\subset\{X\ge 0\}$ and $\alpha-\probp(X<0)<\probp(A)\le\alpha$.

\smallskip

(b) Consider the star-shaped utility function given by
\[
u(x):=
\begin{cases}
1-e^{-x} & \mbox{if} \ x\ge0,\\
0 & \mbox{if} \ x<0.
\end{cases}
\]
Then $\cE_u$ is star-shaped but clearly not sensitive to large losses. However, for every $X\in\cX$ with $\probp(X<0)>0$ there exists $Y \in \cX$ such that $\cE_u(X + Y) < \cE_u(Y)$. Indeed, choose $\varepsilon\in(0,\infty)$ such that $\probp(X\le-\varepsilon)>0$. 
For each $n\in\N$ define $Y_n:= \Vert X \Vert_\infty +(n - X)\ind_{\{X>0\}}$ and note that 
\[
\cE_u(X+Y_n) \le \probp(X\le-\varepsilon)u(\Vert X \Vert_\infty-\varepsilon)+\probp(-\varepsilon<X\le0)u(\Vert X \Vert_\infty)+\probp(X>0)u(\Vert X \Vert_\infty+n),
\]
\[
\cE_u(Y_n) \ge \probp(X\le 0)u(\Vert X \Vert_\infty)+\probp(X>0)u(n).
\]
As a result, taking $n$ large enough yields
\begin{align*}
\cE_u(Y_n)-\cE_u(X+Y_n) &\ge \probp(X\le-\varepsilon)[u(\Vert X \Vert_\infty)-u(\Vert X \Vert_\infty-\varepsilon)]+\probp(X>0)[u(n)-u(\Vert X \Vert_\infty+n)] \\
&= \probp(X\le-\varepsilon)e^{-\Vert X \Vert_\infty}(e^\varepsilon-1)-\probp(X>0)e^{-n}(1-e^{-\Vert X \Vert_\infty}) > 0.
\end{align*}
\end{example}

\smallskip

If star-shapenedness is replaced with the stronger property of convexity/concavity, and we assume the risk/utility functional is lower/upper semicontinuous, then we obtain a further characterization of sensitivity to large losses in terms of \emph{risk reduction/utility increase}. More precisely, sensitivity to large losses holds if, and only if, every position that exposes the agent to some loss will increase/reduce the risk/utility of some other position after aggregation. In other words, sensitivity to large losses fails precisely when there exists a position with some loss potential that can be merged into any other position without increasing/reducing risk/utility.

\begin{theorem}
\label{theo: convex}
Let $\cS\subset\cX$. For a risk functional $\cR$ that is convex and lower semicontinuous the following statements are equivalent:
\begin{enumerate}
    \item[(a)] $\cR$ is sensitive to large losses on $\cS$.
    \item[(b)] For every $X\in\cS$ with $\probp(X<0)>0$ there exists $Y\in\cX$ such that $\cR(X+Y)>\cR(Y)$.
\end{enumerate}
\end{theorem}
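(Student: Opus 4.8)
The plan is to prove the two implications separately, exploiting that one of them is essentially free. Since a normalized convex risk functional is positively star-shaped (apply convexity to $X=\tfrac1\lambda(\lambda X)+(1-\tfrac1\lambda)0$ and use $\cR(0)=0$), Theorem~\ref{theo: recession functional} applies to $\cR$. Its implication (a)$\Rightarrow$(e) is \emph{verbatim} the implication (a)$\Rightarrow$(b) of the present statement, so for the forward direction I would simply cite Theorem~\ref{theo: recession functional} and do no further work.

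The substance is therefore (b)$\Rightarrow$(a), which I would establish by contraposition. Suppose $\cR$ is not sensitive to large losses on $\cS$. Using again that $\cR$ is star-shaped, the equivalence (a)$\Leftrightarrow$(c) of Theorem~\ref{theo: recession functional} lets me negate sensitivity in a convenient pointwise form: there exists a single $X_0\in\cS$ with $\probp(X_0<0)>0$ such that $\cR(\lambda X_0)\le0$ for \emph{every} $\lambda\in(0,\infty)$. The goal is then to show that this $X_0$ witnesses the failure of (b), that is, $\cR(X_0+Y)\le\cR(Y)$ for all $Y\in\cX$.

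Fix $Y\in\cX$; I may assume $\cR(Y)<\infty$, since otherwise $\cR(X_0+Y)\le\cR(Y)$ holds trivially. The key idea is to realize $X_0+Y$ as a limit of convex combinations in which the ``$X_0$-direction'' is rescaled to a large multiple where $\cR\le0$. Concretely, for $\lambda>1$ set
\[
W_\lambda:=\tfrac1\lambda(\lambda X_0)+\bigl(1-\tfrac1\lambda\bigr)Y=X_0+\bigl(1-\tfrac1\lambda\bigr)Y.
\]
Convexity together with $\cR(\lambda X_0)\le0$ gives $\cR(W_\lambda)\le\tfrac1\lambda\cR(\lambda X_0)+(1-\tfrac1\lambda)\cR(Y)\le(1-\tfrac1\lambda)\cR(Y)$. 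Since scalar multiplication and addition are continuous in the vector space topology of $\cX$, we have $W_\lambda\to X_0+Y$ as $\lambda\to\infty$, so lower semicontinuity yields
\[
\cR(X_0+Y)\le\liminf_{\lambda\to\infty}\cR(W_\lambda)\le\liminf_{\lambda\to\infty}\bigl(1-\tfrac1\lambda\bigr)\cR(Y)=\cR(Y),
\]
which is exactly the desired failure of (b).

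The step I expect to be most delicate is precisely this use of lower semicontinuity, because a priori it only controls $\cR$ from below along limits, whereas I need an \emph{upper} bound on $\cR(X_0+Y)$. The resolution is to approach $X_0+Y$ through the family $W_\lambda$ whose risk I can bound from above, so that lower semicontinuity points in the right direction. I would also flag why the seemingly more natural route---the recession inequality $\cR(X_0+Y)\le\cR(Y)+\cR^\infty(X_0)$ combined with $\cR^\infty(X_0)\le0$---is \emph{not} the one to pursue: deriving that inequality amounts to base-point independence of $\cR^\infty$, which would require right-continuity of $\lambda\mapsto\cR(\lambda Y)$ at $\lambda=1$, a property that convexity and lower semicontinuity do not guarantee in general. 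The convex-combination argument above sidesteps this obstruction entirely.
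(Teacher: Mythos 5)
Your proof is correct, but it takes a genuinely different route from the paper for the substantive implication. Both you and the paper dispatch (a)$\Rightarrow$(b) identically, by citing the implication (a)$\Rightarrow$(e) of Theorem~\ref{theo: recession functional}. For (b)$\Rightarrow$(a), however, the paper works with the quantity $\widehat{\cR}(X):=\sup_{Y\in\dom(\cR)}\{\cR(X+Y)-\cR(Y)\}$ and proves $\cR^\infty(X)\ge\widehat{\cR}(X)$ via Fenchel--Moreau--Rockafellar duality: it introduces the conjugate $\cR^\ast$, shows $\cR^\infty(X)=\sup_{\psi\in\dom(\cR^\ast)}\psi(X)$, deduces $\cR(X+Y)\le\cR^\infty(X)+\cR(Y)$ for $Y\in\dom(\cR)$, and then concludes via the equivalence (a)$\Leftrightarrow$(d) of Theorem~\ref{theo: recession functional}. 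Your contraposition argument replaces all of this with a primal computation: the convex-combination bound $\cR\bigl(X_0+(1-\tfrac1\lambda)Y\bigr)\le\tfrac1\lambda\cR(\lambda X_0)+(1-\tfrac1\lambda)\cR(Y)$ followed by lower semicontinuity along the net $W_\lambda\to X_0+Y$. This is more elementary (no dual space, no separation argument) and, notably, it is valid in an arbitrary topological vector space, whereas Fenchel--Moreau needs local convexity---a hypothesis the paper's standing assumptions on $\cX$ do not explicitly impose; what the duality route buys in exchange is the explicit representation $\cR^\infty(X)=\sup_{\psi\in\dom(\cR^\ast)}\psi(X)$, which is of independent interest. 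One correction to your closing remark: the recession inequality $\cR(X_0+Y)\le\cR(Y)+\cR^\infty(X_0)$ is not blocked by any base-point or right-continuity issue---your own argument proves it verbatim if you bound $\tfrac1\lambda\cR(\lambda X_0)$ by $\cR^\infty(X_0)$ instead of by $0$, and it is exactly the inequality the paper derives by duality. So the ``more natural route'' you dismiss is in fact the paper's route, and it is equally available by your primal means.
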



\subsection{Cash-additive risk/utility functionals}
\label{subsec:cash additive star-shaped risk measures}

In this section we focus on sensitivity to large losses in the broad class of cash-additive risk/utility functionals. Our first result shows that, when cash-additivity is coupled with star-shapedness and the chosen domain is stable under cash injections, sensitivity to large losses is equivalent to the recession functional reducing to an essential supremum/infimum. This gives a simple way to test for sensitivity to large losses for concrete examples of risk/utility functionals; see Section~\ref{sect: examples}. In addition, it reinforces that Value at Risk and Expected Shortfall cannot be sensitive to large losses on a domain that is ``too large''; see Example~\ref{ex: var and es}.

\begin{theorem}
\label{theo: cash-additive}
Let $\cS\subset\cX$ satisfy $\cS+\R\subset\cS$. For a cash-additive and star-shaped risk functional $\cR$ the following statements are equivalent:
\begin{enumerate}
    \item[(a)] $\cR$ is sensitive to large losses on $\cS$.
    \item[(b)] $\cR^\infty(X)=\esssup(-X)$ for every $X\in\cS$.
\end{enumerate}
\end{theorem}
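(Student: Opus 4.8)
The plan is to reduce the whole statement to a clean assertion about the recession functional $\cR^\infty$ and then exploit cash-additivity through a translation argument. First I would record the structural properties of $\cR^\infty$ inherited from $\cR$. Since $\cR$ is positively star-shaped, the map $\lambda \mapsto \cR(\lambda X)/\lambda$ is nondecreasing on $(0,\infty)$ for each fixed $X$ (for $0<\mu<\lambda$, writing $\lambda=(\lambda/\mu)\mu$ and applying star-shapedness gives $\cR(\lambda X)/\lambda \ge \cR(\mu X)/\mu$), so $\cR^\infty(X)=\lim_{\lambda\to\infty}\cR(\lambda X)/\lambda$ and $\cR^\infty$ is positively homogeneous, decreasing and normalized. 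A one-line computation from $\cR(\lambda X+\lambda m)=\cR(\lambda X)-\lambda m$ shows $\cR^\infty$ is cash-additive as well. Finally I would derive the universal upper bound $\cR^\infty(X)\le\esssup(-X)$: since $X\ge-\esssup(-X)$ almost surely, monotonicity and cash-additivity of $\cR^\infty$ yield $\cR^\infty(X)\le\cR^\infty(-\esssup(-X))=\esssup(-X)$. Consequently, statement (b) is equivalent to the reverse inequality $\cR^\infty(X)\ge\esssup(-X)$ holding for every $X\in\cS$.

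Next I would invoke Theorem~\ref{theo: recession functional}, whose equivalence (a)$\Leftrightarrow$(d) states that, for a star-shaped $\cR$, sensitivity to large losses of $\cR$ on $\cS$ coincides with sensitivity to losses of $\cR^\infty$ on $\cS$. Thus the theorem reduces to proving, for the positively homogeneous cash-additive functional $\cR^\infty$, that
\[
\cR^\infty \text{ is sensitive to losses on } \cS \iff \cR^\infty(X)=\esssup(-X) \text{ for all } X\in\cS.
\]
The direction (b)$\Rightarrow$(a) is then immediate: if $X\in\cS$ satisfies $\probp(X<0)>0$, then $\esssup(-X)>0$, so by (b) we get $\cR^\infty(X)=\esssup(-X)>0$, i.e.\ $\cR^\infty$ is sensitive to losses on $\cS$.

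The substantive direction is (a)$\Rightarrow$(b). Assuming $\cR^\infty$ is sensitive to losses on $\cS$ and fixing $X\in\cS$, in view of the upper bound it suffices to show $\cR^\infty(X)\ge\esssup(-X)$. Here I would use the translation stability $\cS+\R\subset\cS$ together with cash-additivity: for any real $m<\esssup(-X)$ the translate $X+m$ lies in $\cS$ and satisfies $\esssup(-(X+m))=\esssup(-X)-m>0$, which forces $\probp(X+m<0)>0$ (using the elementary fact that $\esssup(Z)>0$ entails $\probp(Z>0)>0$). Sensitivity to losses then gives $\cR^\infty(X+m)>0$, and cash-additivity turns this into $\cR^\infty(X)>m$. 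Letting $m\uparrow\esssup(-X)$ yields $\cR^\infty(X)\ge\esssup(-X)$, hence equality; the same reasoning handles $\esssup(-X)=\infty$, where letting $m\to\infty$ forces $\cR^\infty(X)=\infty$.

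The main obstacle is really the bookkeeping in the structural step: establishing the universal domination $\cR^\infty(X)\le\esssup(-X)$ and the cash-additivity of $\cR^\infty$ cleanly, so that (b) collapses to a single inequality. Once this is in place, the genuinely clever ingredient — the translation trick that converts sensitivity to losses of the homogeneous functional into the worst-case lower bound — is short. One must only take a little care with the possibly infinite value of $\esssup(-X)$ and with the monotone passage $m\uparrow\esssup(-X)$, but no deeper difficulty arises.
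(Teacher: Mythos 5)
Your proposal is correct and takes essentially the same route as the paper: both directions rest on Theorem~\ref{theo: recession functional} ((a)$\Leftrightarrow$(d)), on cash-additivity of $\cR^\infty$ inherited from $\cR$, and on the translation stability $\cS+\R\subset\cS$. The only difference is presentational: the paper identifies $\cA_{\cR^\infty}\cap\cS=\cX_+\cap\cS$ and computes $\cR^\infty(X)=\inf\{m\in\R \,; \ X+m\in\cA_{\cR^\infty}\}=\esssup(-X)$ in one chain, whereas you unpack the same computation into the universal upper bound $\cR^\infty(X)\le\esssup(-X)$ together with the lower bound $\cR^\infty(X)>m$ for every $m<\esssup(-X)$ obtained by translating $X$ inside $\cS$.
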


\smallskip

As a direct corollary of Theorem \ref{theo: cash-additive} it follows that the essential supremum/infimum is the unique cash-additive risk/utility functional that is simultaneously positively homogeneous and sensitive to large losses on a domain that is stable under cash injections.

\begin{corollary}
\label{cor: cash-additive}
Let $\cS\subset\cX$ satisfy $\cS+\R\subset\cS$. For a cash-additive and positively homogeneous risk functional $\cR$ the following statements are equivalent:
\begin{enumerate}
    \item[(a)] $\cR$ is sensitive to large losses on $\cS$.
    \item[(b)] $\cR(X)=\esssup(-X)$ for every $X\in\cS$.
\end{enumerate}
\end{corollary}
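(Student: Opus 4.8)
The plan is to derive the corollary as a direct specialization of Theorem~\ref{theo: cash-additive}, exploiting the fact that a positively homogeneous risk functional coincides with its own recession functional. The first step is to confirm that all hypotheses of Theorem~\ref{theo: cash-additive} are in force for $\cR$. The domain condition $\cS+\R\subset\cS$ and cash-additivity are assumed verbatim in the corollary. For the remaining hypothesis of star-shapedness, I would invoke Remark~(a), which records that positive homogeneity implies star-shapedness; hence every positively homogeneous risk functional is in particular a star-shaped one, and Theorem~\ref{theo: cash-additive} applies to $\cR$.

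The second step is the key (and essentially the only) computation: for a positively homogeneous $\cR$ one has $\cR^\infty=\cR$. Indeed, for every $X\in\cX$,
\[
\cR^\infty(X) = \sup_{\lambda\in(0,\infty)}\frac{\cR(\lambda X)}{\lambda} = \sup_{\lambda\in(0,\infty)}\frac{\lambda\cR(X)}{\lambda} = \cR(X),
\]
where the middle equality uses positive homogeneity. This identity is already flagged in the remark following the definition of the recession functional, so the verification is routine.

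Finally, I would combine the two observations. By Theorem~\ref{theo: cash-additive}, sensitivity to large losses of $\cR$ on $\cS$ is equivalent to $\cR^\infty(X)=\esssup(-X)$ for every $X\in\cS$; substituting $\cR^\infty=\cR$ turns this into $\cR(X)=\esssup(-X)$ for every $X\in\cS$, which is precisely the equivalence (a)~$\iff$~(b) of the corollary. Since the proof is a pure specialization of an already-established theorem, there is no substantial obstacle. The one point meriting a line of care is confirming that positive homogeneity is strong enough to trigger the star-shapedness hypothesis of Theorem~\ref{theo: cash-additive}, which Remark~(a) guarantees, together with the observation that the positively homogeneous case collapses the recession functional onto $\cR$ itself.
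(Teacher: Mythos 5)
Your proof is correct and matches the paper's intended argument exactly: the paper presents Corollary~\ref{cor: cash-additive} as a direct consequence of Theorem~\ref{theo: cash-additive}, using precisely the two facts you isolate, namely that positive homogeneity implies star-shapedness and that a positively homogeneous risk functional coincides with its recession functional, so that $\cR^\infty=\cR$ collapses condition (b) of the theorem onto condition (b) of the corollary.
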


\smallskip

\begin{remark}
It should be noted that neither of the above results hold in the case that the chosen domain of random variables fails to be stable under cash injections. For instance, for every $\alpha\in(0,1)$ we immediately see that $\ES_\alpha$ is sensitive to large losses on $\cS=\{X\in L^1 \,; \ X<0\}$; see also the results in Section~\ref{subsect: other loss sensitivity}.
\end{remark}

\smallskip

If we maintain cash additivity but replace positive homogeneity with convexity/concavity, and lower/upper semicontinuity is satisfied, then we obtain the following characterization of sensitivity to large losses on general domains of random variables as a direct consequence of Theorem~\ref{theo: convex}. More explicitly, the result shows that sensitivity to large losses fails whenever there exists an acceptable position with some loss potential that remains acceptable after aggregation with any position with zero risk/utility. The explanation here is as follows. A position with zero risk/utility lies right at the edge of acceptability, so it likely contains profits as well as losses. It is conceivable that merging such a position with another position that already contains losses could result into losing acceptability, especially when (part of) the respective profits are drained by the losses of the other position, so that merging turns out creating/draining extra risk/utility. This happens precisely when the reference risk/utility functional is sensitive to large losses.

\begin{corollary}
Let $\cS\subset\cX$. For a cash-additive, convex, and lower semicontinuous risk functional $\cR$ the following statements are equivalent:
\begin{enumerate}
    \item[(a)] $\cR$ is sensitive to large losses on $\cS$.
    \item[(b)] For every $X\in\cS$ with $\probp(X<0)>0$ there is $Y\in\cX$ such that $\cR(Y)=0$ and $\cR(X+Y)>0$.
\end{enumerate}
\end{corollary}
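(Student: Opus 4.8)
The plan is to obtain this corollary as an immediate consequence of Theorem~\ref{theo: convex}, using cash-additivity only to recast the risk-reduction condition into the normalized form stated in (b). Note first that statement (a) of the corollary is literally statement (a) of Theorem~\ref{theo: convex}, and the hypotheses match (convexity and lower semicontinuity), so it suffices to show that condition (b) here is equivalent to condition (b) of Theorem~\ref{theo: convex}, i.e.\ to the requirement that for every $X\in\cS$ with $\probp(X<0)>0$ there exists $Y\in\cX$ with $\cR(X+Y)>\cR(Y)$.

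One implication is immediate: if (b) of the corollary holds, then for each such $X$ we may pick $Y$ with $\cR(Y)=0$ and $\cR(X+Y)>0$, whence $\cR(X+Y)>0=\cR(Y)$, which is exactly condition (b) of Theorem~\ref{theo: convex}. For the converse, I would fix $X\in\cS$ with $\probp(X<0)>0$ and choose $Y\in\cX$ with $\cR(X+Y)>\cR(Y)$ as supplied by Theorem~\ref{theo: convex}. Since the codomain is $(-\infty,\infty]$, the strict inequality $\cR(X+Y)>\cR(Y)$ forces $\cR(Y)\in\R$; this finiteness is the one point that requires a moment's attention, because it is what legitimizes shifting $Y$ by the constant $\cR(Y)$. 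Setting $Y':=Y+\cR(Y)\in\cX$ and applying cash-additivity then yields $\cR(Y')=\cR(Y)-\cR(Y)=0$ and $\cR(X+Y')=\cR(X+Y)-\cR(Y)>0$, which is precisely condition (b) of the corollary.

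Combining the two implications with the equivalence furnished by Theorem~\ref{theo: convex} closes the argument. I do not anticipate any genuine obstacle: the proof is a routine normalization of the reducing position $Y$ via cash-additivity, and the only subtlety worth flagging is verifying that $\cR(Y)$ is finite before performing the constant shift, which follows automatically from the strict inequality and the structure of the codomain.
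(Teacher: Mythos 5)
Your proof is correct and follows exactly the route the paper intends: the corollary is stated there as a direct consequence of Theorem~\ref{theo: convex}, with cash-additivity used precisely to normalize the position $Y$ via the shift $Y':=Y+\cR(Y)$, and your observation that the strict inequality $\cR(X+Y)>\cR(Y)$ forces $\cR(Y)<\infty$ (so the shift is legitimate) is the right justification for that step. Nothing is missing.
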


\smallskip

We conclude this section by showing that sensitivity to large losses for a generic (not necessarily cash-additive) risk/utility functional can often be characterized by looking at a special cash-additive risk/utility functional, thus highlighting the special role played by cash-additive risk/utility functionals in the study of sensitivity to large losses. This will prove very useful in Sections~\ref{sect: examples} and~\ref{sec:examples utility} where we study sensitivity to large losses for a number of explicit risk/utility functionals. We start by introducing some standard notation that will be used to state the announced result.

\begin{definition}
For $\cA\subset\cX$ we define the functional $\cR_\cA:\cX\to[-\infty,\infty]$ by
\[
\cR_\cA(X) := \inf\{m\in\R \,; \ X+m\in\cA\}.
\]
Similarly, we define the functional $\cU_\cA:\cX\to[-\infty,\infty]$ by
\[
\cU_\cA(X) := \sup\{m\in\R \,; \ X-m\in\cA\}.
\]
\end{definition}

It is clear that every cash-additive risk functional $\cR$ coincides with $\cR_{\cA_\cR}$ and that every cash-additive utility functional $\cU$ coincides with $\cU_{\cA_\cU}$. It is also clear that $\cR_\cA$ and $\cU_\cA$ do not generally qualify as a risk and utility functional according to our definition. We refer to Proposition~\ref{prop: rho A risk measure} for an equivalent condition for this to hold in the case that interests us, namely when $\cA$ is the acceptance set of a given risk/utility functional (which need not be cash-additive). As a first step, we provide a sufficient condition based on cash-additivity for a generic (not necessarily cash-additive) risk/utility functional to be sensitive to large losses.

\begin{proposition}
\label{prop: sensitivity via acceptability preliminary}
Let $\cS\subset\cX$ and consider a risk functional $\cR$ such that $\cR_{\cA_\cR}$ is also a risk functional. If $\cR_{\cA_\cR}$ is sensitive to large losses on $\cS$, then $\cR$ is also sensitive to large losses on $\cS$.
\end{proposition}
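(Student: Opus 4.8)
The plan is to reduce the whole statement to a single pointwise sign comparison between the two functionals, read off directly from their acceptance sets, and then transport sensitivity to large losses across this comparison. The key observation is that $\cR$ and $\cR_{\cA_\cR}$ always satisfy the implication
\[
\cR_{\cA_\cR}(Z)>0 \implies \cR(Z)>0, \qquad Z\in\cX,
\]
which is exactly the direction required to push sensitivity from the adjusted functional $\cR_{\cA_\cR}$ back to the original functional $\cR$.

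First I would prove this implication through its contrapositive, using only normalization and the definition of the acceptance set. Suppose $\cR(Z)\le 0$; by definition this means $Z\in\cA_\cR$. Then $m=0$ is admissible in the infimum defining $\cR_{\cA_\cR}(Z)=\inf\{m\in\R \,; \ Z+m\in\cA_\cR\}$, since $Z+0=Z\in\cA_\cR$, whence $\cR_{\cA_\cR}(Z)\le 0$. This is precisely the contrapositive of the displayed implication. Note that neither cash-additivity nor star-shapedness of $\cR$ enters here; the hypothesis that $\cR_{\cA_\cR}$ is itself a risk functional is used only to guarantee that sensitivity to large losses is a meaningful property of $\cR_{\cA_\cR}$.

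Second I would apply the implication along dilations. Fix $X\in\cS$ with $\probp(X<0)>0$. Since $\cR_{\cA_\cR}$ is sensitive to large losses on $\cS$, there exists $\lambda_X\in(0,\infty)$ with $\cR_{\cA_\cR}(\lambda X)>0$ for all $\lambda\in(\lambda_X,\infty)$. Taking $Z=\lambda X$ in the sign implication yields $\cR(\lambda X)>0$ for every such $\lambda$, and with the very same threshold $\lambda_X$. As $X$ was arbitrary, $\cR$ is sensitive to large losses on $\cS$, as claimed.

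I do not expect a genuine obstacle here: the argument rests entirely on the elementary fact that acceptability of $Z$ under $\cR$ forces acceptability of $Z$ under $\cR_{\cA_\cR}$. The only point deserving care is the temptation to prove the stronger pointwise bound $\cR_{\cA_\cR}\le\cR$; since $\cR$ is merely decreasing and need not be cash-additive, such a bound would require a cash-subadditivity-type property that is not available, whereas the weaker sign implication above is all that sensitivity to large losses actually needs.
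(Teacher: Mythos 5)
Your proof is correct and follows essentially the same route as the paper: the paper also observes that $\cR_{\cA_\cR}(\lambda X)>0$ forces $\lambda X\notin\cA_\cR$ (equivalently, your contrapositive via $m=0$ being admissible in the infimum), whence $\cR(\lambda X)>0$ with the same threshold $\lambda_X$. Your side remark that the pointwise bound $\cR_{\cA_\cR}\le\cR$ is neither needed nor available without further assumptions is also accurate.
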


\smallskip

The next example shows that the converse implication fails in general, even if the reference risk/utility functional is assumed to be positively homogeneous and convex/concave.

\begin{example}
Let $\cX=L^1$. Take $A\in\cF$ with $\probp(A)\in(0,1)$ and set $\cS := \{X \in L^1 \,; \ \E[X\ind_A]\le0\}$. Moreover, for every $X\in L^1$, define
\begin{equation*}
\cR(X) := \begin{cases}
\infty, & \mbox{if} \ X\in\cS,\\
0, & \mbox{otherwise}.
\end{cases}
\end{equation*}
Clearly, $\cR$ is positively homogeneous, convex, and sensitive to (large) losses on $\cS$. It is also clear that $\cR_{\cA_\cR}$ is a risk functional. Indeed, for every $X\in L^1$,
\[
\cR_{\cA_\cR}(X) = \inf\{m\in\R \,; \ \E[(X+m)\ind_A]>0\} = \E[-X\vert A].
\]
However, $\cR_{\cA_\cR}$ fails to be sensitive to large losses on $\cS$ because, e.g., the position $X:=-\ind_{A^c}\in L^1$ belongs to $\cS$ but $\cR_{\cA_\cR}(\lambda X)=0$ for every $\lambda\in(0,\infty)$.
\end{example}

\smallskip

To conclude this section we explore under which conditions the converse implication in Proposition~\ref{prop: sensitivity via acceptability preliminary} is true. We start by showing that, under the mild property of lower/upper regularity, the converse implication always holds and, hence, whether or not a risk/utility functional is sensitive to large losses depends on the sensitivity to large losses of the cash-additive risk/utility functional associated with the acceptance set of the original risk/utility functional. Recall that any risk/utility functional that is either lower/upper semicontinuous or cash-additive is automatically lower/upper regular according to our definition.

\begin{theorem}
\label{the: sensitivity via acceptability v2}
Let $\cS\subset\cX$. For every lower regular risk functional $\cR$ such that $\cR_{\cA_\cR}$ is also a risk functional the following statements are equivalent:
\begin{enumerate}
    \item[(a)] $\cR$ is sensitive to large losses on $\cS$.
    \item[(b)] $\cR_{\cA_\cR}$ is sensitive to large losses on $\cS$.
\end{enumerate}
\end{theorem}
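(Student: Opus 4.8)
The plan is to prove the two implications separately, exploiting that one of them is already available. Indeed, (b) $\Rightarrow$ (a) is exactly the content of Proposition~\ref{prop: sensitivity via acceptability preliminary}, which needs no regularity hypothesis. Hence the entire burden falls on (a) $\Rightarrow$ (b), and it is precisely here that lower regularity must be used; the (unlabelled) example immediately preceding the theorem, where a positively homogeneous, convex $\cR$ fails lower regularity and the equivalence breaks down, shows that the assumption cannot be dropped.

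I would reduce (a) $\Rightarrow$ (b) to the pointwise claim that, for every $Y\in\cX$,
\[
\cR(Y)>0 \implies \cR_{\cA_\cR}(Y)>0.
\]
Granting this, sensitivity transfers at once: if $\cR$ is sensitive to large losses on $\cS$ and $X\in\cS$ satisfies $\probp(X<0)>0$, then there is $\lambda_X$ with $\cR(\lambda X)>0$ for all $\lambda>\lambda_X$, and applying the claim with $Y=\lambda X$ gives $\cR_{\cA_\cR}(\lambda X)>0$ for all $\lambda>\lambda_X$. This is exactly sensitivity to large losses of $\cR_{\cA_\cR}$ on $\cS$, with the \emph{same} threshold $\lambda_X$.

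To prove the claim I would unfold $\cR_{\cA_\cR}(Y)=\inf S$, where $S:=\{m\in\R \,;\ \cR(Y+m)\le 0\}$ is the set of cash injections rendering $Y$ acceptable. Two structural facts drive the argument. First, since $\cR$ is decreasing, $S$ is upward closed: $m\in S$ and $m'>m$ force $Y+m'\ge Y+m$, hence $\cR(Y+m')\le\cR(Y+m)\le 0$. Second, $\cR(Y)>0$ means $0\notin S$, and together with upward closedness this excludes any negative element of $S$, so that $\inf S\ge 0$ (the degenerate case $S=\emptyset$ giving $\inf S=+\infty$ is harmless). It then remains to rule out $\inf S=0$: if $\inf S=0$, one can pick $m_n\in S$ with $m_n\downarrow 0$, so $\cR(Y+m_n)\le 0$, and lower regularity yields $\cR(Y)=\cR(Y+0)\le\liminf_{n}\cR(Y+m_n)\le 0$, contradicting $\cR(Y)>0$. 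Therefore $\inf S>0$, i.e.\ $\cR_{\cA_\cR}(Y)>0$.

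The main (and essentially the only) obstacle is this boundary case $\inf S=0$, in which $Y$ becomes acceptable under arbitrarily small strictly positive cash injections yet is unacceptable with no injection at all. This is exactly the pathology that lower regularity forbids, and it is the single place where the hypothesis is consumed. A minor bookkeeping point worth recording is that $\cR_{\cA_\cR}(Y)\in(-\infty,\infty]$ because $\cR_{\cA_\cR}$ is assumed to be a risk functional, so the value is well defined and the strict inequality $\cR_{\cA_\cR}(Y)>0$ is meaningful.
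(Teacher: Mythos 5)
Your proposal is correct and takes essentially the same route as the paper: the implication (b) $\Rightarrow$ (a) is delegated to Proposition~\ref{prop: sensitivity via acceptability preliminary}, and (a) $\Rightarrow$ (b) is proved by contradiction, extracting a sequence of acceptable cash injections $(m_n)$ converging to a nonpositive limit and combining monotonicity of $\cR$ with lower regularity to force the impossible chain $0<\cR(\lambda X)\le \liminf_n \cR(\lambda X+m_n)\le 0$. Your repackaging into the pointwise claim $\cR(Y)>0 \Rightarrow \cR_{\cA_\cR}(Y)>0$ (first excluding negative elements of the acceptability set $S$ via monotonicity, then ruling out $\inf S=0$ via lower regularity at $0$) is a cosmetic reorganization of the same argument, not a different proof.
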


\smallskip

As a next result, we establish the converse implication for star-shaped risk/utility functionals when sensitivity to large losses is postulated over a domain that is stable under cash injections. 

\begin{theorem}
\label{thm:sensitivity via acceptability v3}
Let $\cS\subset\cX$ satisfy $\cS+\R=\cS$. For every star-shaped risk functional $\cR$ such that $\cR_{\cA_\cR}$ is also a risk functional the following statements are equivalent:
\begin{enumerate}
    \item[(a)] $\cR$ is sensitive to large losses on $\cS$.
    \item[(b)] $\cR_{\cA_\cR}$ is sensitive to large losses on $\cS$.
\end{enumerate}
\end{theorem}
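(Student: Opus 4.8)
The plan is to settle the easy implication at once and reduce the hard one to the star-shaped recession criterion. The implication (b)~$\Rightarrow$~(a) is precisely Proposition~\ref{prop: sensitivity via acceptability preliminary}, so no work is needed there, and all the content lies in (a)~$\Rightarrow$~(b). The strategy is to show that $\cR_{\cA_\cR}$, although not assumed cash-additive a priori, is in fact a cash-additive \emph{and} star-shaped risk functional, so that Theorem~\ref{theo: recession functional} becomes available for it, and then to verify its condition (d) using the sensitivity of $\cR$ together with the cash-injection stability of $\cS$.

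First I would record the structural properties of $\cR_{\cA_\cR}$. Cash-additivity is immediate from the definition via the substitution $m\mapsto m-k$. For star-shapedness the key preliminary observation is that positive star-shapedness of $\cR$ makes $\cA_\cR$ star-shaped at $0$: if $\cR(Z)\le0$ and $t\in(0,1)$, applying the star-shapedness inequality with exponent $1/t$ to $tZ$ gives $\cR(Z)\ge\frac{1}{t}\cR(tZ)$, hence $\cR(tZ)\le t\cR(Z)\le0$. Given that $\cA_\cR$ is closed under scaling down, the computation
\[
\cR_{\cA_\cR}(\lambda X)=\lambda\inf\{n\in\R \,; \ \lambda(X+n)\in\cA_\cR\}
\]
combined with the inclusion $\{n \,; \ \lambda(X+n)\in\cA_\cR\}\subseteq\{n \,; \ X+n\in\cA_\cR\}$ for $\lambda>1$ (again by scaling down) yields $\cR_{\cA_\cR}(\lambda X)\ge\lambda\cR_{\cA_\cR}(X)$. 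Since $\cR_{\cA_\cR}$ is a risk functional by hypothesis, it is a star-shaped risk functional and Theorem~\ref{theo: recession functional} applies.

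By that theorem it suffices to verify condition (d), namely that $(\cR_{\cA_\cR})^\infty$ is sensitive to losses on $\cS$: for every $X\in\cS$ with $\probp(X<0)>0$ I must show $(\cR_{\cA_\cR})^\infty(X)>0$. Fix such an $X$ and pick $c\in(0,\esssup(-X))$, which is possible because $\probp(X<0)>0$ forces $\esssup(-X)>0$. Stability $\cS+\R=\cS$ gives $X+c\in\cS$, and $\esssup(-(X+c))=\esssup(-X)-c>0$ gives $\probp(X+c<0)>0$. Invoking sensitivity to large losses of $\cR$ at the position $X+c$, there is $\lambda_0$ with $\cR(\lambda(X+c))>0$ for all $\lambda>\lambda_0$. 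Substituting $m=\lambda c'$ and factoring out $\lambda>0$ gives
\[
\frac{\cR_{\cA_\cR}(\lambda X)}{\lambda}=\inf\{c'\in\R \,; \ \cR(\lambda(X+c'))\le0\}.
\]
Because $\cR$ is decreasing, the set on the right is an up-set in $c'$, and $\cR(\lambda(X+c))>0$ shows $c$ does not belong to it; hence the set is contained in $(c,\infty)$ and its infimum is at least $c$. Thus $\frac{\cR_{\cA_\cR}(\lambda X)}{\lambda}\ge c$ for every $\lambda>\lambda_0$, and taking the supremum over $\lambda$ gives $(\cR_{\cA_\cR})^\infty(X)\ge c>0$, which is exactly (d).

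I expect the main obstacle to be conceptual rather than computational. Without lower regularity one cannot simply transfer acceptability back and forth between $\cR$ and $\cR_{\cA_\cR}$, since the infimum defining $\cR_{\cA_\cR}$ need not be attained and the two acceptance sets may genuinely differ; a direct argument at a fixed position therefore fails, which is precisely why this case is separated from Theorem~\ref{the: sensitivity via acceptability v2}. The device that circumvents this is to pass to the recession functional and exploit that sensitivity to large losses of $\cR$ delivers the \emph{strict} inequality $\cR(\lambda(X+c))>0$ for all large $\lambda$; strictness is what pins the infimum defining $\cR_{\cA_\cR}(\lambda X)/\lambda$ below by $c$. The hypothesis $\cS+\R=\cS$ is indispensable, as it is exactly what keeps the shifted position $X+c$ inside the domain on which the sensitivity of $\cR$ may be invoked.
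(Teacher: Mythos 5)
Your proof is correct, and while it shares the paper's overall skeleton, it verifies the crucial step by a genuinely different and more elementary argument. Like the paper, you dispatch (b)~$\Rightarrow$~(a) via Proposition~\ref{prop: sensitivity via acceptability preliminary}, prove that $\cR_{\cA_\cR}$ inherits positive star-shapedness from $\cR$ by essentially the same scaling computation, and conclude by applying the equivalence (d)~$\Leftrightarrow$~(a) of Theorem~\ref{theo: recession functional} to $\cR_{\cA_\cR}$. The divergence lies in how condition (d) is established. The paper proceeds structurally: it first uses Theorem~\ref{theo: recession functional} to get that $\cR^\infty$ is sensitive to losses on $\cS$, computes $\cR_{\cA_{\cR^\infty}}(X)=\esssup(-X)$ on $\cS$ (exactly as in the proof of Theorem~\ref{theo: cash-additive}), and then identifies $(\cR_{\cA_\cR})^\infty=\cR_{\cA_{\cR^\infty}}$ via the recession-set identity $(\cA_\cR)^\infty=\cA_{\cR^\infty}$ and Lemma~\ref{lem: recession functionals and cones}. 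You instead bound $(\cR_{\cA_\cR})^\infty(X)$ from below directly: for any $c\in(0,\esssup(-X))$ you shift to $X+c\in\cS$ (this is where $\cS+\R=\cS$ enters in your argument, versus its use inside the paper's chain of infimum identities), invoke the sensitivity of $\cR$ itself — not of $\cR^\infty$ — at the shifted position, and read off $\cR_{\cA_\cR}(\lambda X)/\lambda\ge c$ for large $\lambda$ straight from the definition of $\cR_{\cA_\cR}$, using monotonicity of $\cR$ and the strictness of the inequality $\cR(\lambda(X+c))>0$. This bypasses Lemma~\ref{lem: recession functionals and cones} and the worst-case identity entirely, making your proof shorter and self-contained; what the paper's route buys in exchange is the structural by-product that $(\cR_{\cA_\cR})^\infty$ coincides on $\cS$ with the worst-case risk measure $\esssup(-\,\cdot\,)$, an identity of independent interest that parallels Theorem~\ref{theo: cash-additive}. (In effect your estimate, after taking the supremum over $c$, recovers the inequality $(\cR_{\cA_\cR})^\infty\ge\esssup(-\,\cdot\,)$ on $\cS$ anyway.) One cosmetic point: the cash-additivity of $\cR_{\cA_\cR}$ that you record at the outset is true but never used, since Theorem~\ref{theo: recession functional} requires only star-shapedness.
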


\subsection{Localized versions}
\label{subsect: other loss sensitivity}

We have defined the notion of sensitivity to (large) losses with respect to a subset $\cS$ of the whole space $\cX$. The rationale for doing so is to give flexibility to that notion by capturing other special versions of ``sensitivity to large losses''. In this section we illustrate this by focusing on three canonical choices for $\cS$, two of which have been considered in the extant literature.

\begin{definition}
A risk functional $\cR$, resp.\ a utility functional $\cU$, is said to be
\begin{enumerate}
    \item[(i)] {\em sensitive to sure losses} if, for every $X\in\cX$ with $X<0$, we have $\cR(X) > 0$, resp.\ $\cU(X) < 0$. It is called {\em sensitive to large sure losses} if, for every $X\in\cX$ with $X<0$, there exists $\lambda_X\in(0,\infty)$ such that $\cR(\lambda X) > 0$, resp.\ $\cU(\lambda X) < 0$, for each $\lambda\in(\lambda_X,\infty)$. 
    \item[(ii)] {\em sensitive to pure losses} if, for every nonzero $X\in\cX$ with $X\leq 0$, we have $\cR(X) > 0$, resp.\ $\cU(X) < 0$. It is called {\em sensitive to large pure losses} if, for every nonzero $X\in\cX$ with $X\le0$, there exists $\lambda_X\in(0,\infty)$ such that $\cR(\lambda X) > 0$, resp.\ $\cU(\lambda X) < 0$, for each $\lambda\in(\lambda_X,\infty)$.
    \item[(iii)] {\em sensitive to expected losses} if, for every nonzero $X\in\cX$ with $\E[X] \leq 0$, we have $\cR(X) > 0$, resp.\ $\cU(X) < 0$. It is called {\em sensitive to large expected losses} if, for every nonzero $X\in\cX$ with $\E[X]\le0$, there exists $\lambda_X\in(0,\infty)$ such that $\cR(\lambda X) > 0$, resp.\ $\cU(\lambda X) < 0$,  for each $\lambda\in(\lambda_X,\infty)$.
\end{enumerate}
\end{definition}

\begin{remark}
While sensitivity to (large) sure losses does not appear to have been explicitly investigated in the literature before, sensitivity to pure losses was studied under the name of \emph{relevance} in the context of risk functionals in \cite{artzner1999coherent, delbaen2002coherent}. The property of sensitivity to large expected losses was recently considered in \cite{HerdegenKhan2024}  to prove the existence of optimal portfolios in a mean-risk framework extending the classical mean-variance Markowitz's setting.
\end{remark}

\smallskip

It is readily seen that sensitivity to (large) sure/pure/expected losses corresponds to sensitivity to (large) losses on the three domains
\[
\cX_{--}:=\{X\in\cX \,; \ X<0\}, \ \ \ \ \cX_{-}:=\{X\in\cX \,; \ X\le0\}, \ \ \ \ \cX_e:=\{X\in\cX \,; \ \E[X]\le0\},
\]
respectively. Since these three domains are ordered by inclusion, we have the following result. In the example below we show that none of the converse implications generally hold.

\begin{proposition}
Let $\cR$ be a risk functional.
\begin{enumerate}
\item[(a)] If $\cR$ is sensitive to (large) expected losses, it is sensitive to (large) pure losses.
\item[(b)] If $\cR$ is sensitive to (large) pure losses, it is sensitive to (large) sure losses.
\end{enumerate}
\end{proposition}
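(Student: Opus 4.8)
The plan is to reduce both implications to a single, almost trivial monotonicity principle: sensitivity to (large) losses is preserved under shrinking the domain. Concretely, I would record that if $\cS'\subseteq\cS\subseteq\cX$ and $\cR$ is sensitive to (large) losses on $\cS$, then $\cR$ is sensitive to (large) losses on $\cS'$. This is immediate from the definition, since the defining requirement is a statement universally quantified over the positions $X\in\cS$ with $\probp(X<0)>0$; restricting the quantifier to the smaller set $\cS'$ can only make the requirement easier to fulfil. The same argument applies verbatim to the non-localized (``losses'') and localized (``large losses'') versions, so a single lemma covers both cases in the statement.

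The first concrete step is to record the chain of inclusions $\cX_{--}\subseteq\cX_{-}\subseteq\cX_e$. The inclusion $\cX_{--}\subseteq\cX_{-}$ holds because $X<0$ almost surely forces $X\le0$ almost surely, and the inclusion $\cX_{-}\subseteq\cX_e$ holds because $X\le0$ almost surely forces $\E[X]\le0$.

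The second step is to combine these inclusions with the identification---already noted immediately before the statement---of sensitivity to (large) sure/pure/expected losses with sensitivity to (large) losses on $\cX_{--}$, $\cX_{-}$, $\cX_e$, respectively. For part (a), sensitivity to (large) expected losses is by definition sensitivity to (large) losses on $\cX_e$; since $\cX_{-}\subseteq\cX_e$, the monotonicity principle yields sensitivity to (large) losses on $\cX_{-}$, that is, sensitivity to (large) pure losses. For part (b), sensitivity to (large) pure losses is sensitivity to (large) losses on $\cX_{-}$; since $\cX_{--}\subseteq\cX_{-}$, the same principle yields sensitivity to (large) losses on $\cX_{--}$, that is, sensitivity to (large) sure losses.

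I do not expect any genuine obstacle. The only point that warrants care is the verification that the three named sensitivity properties really coincide with sensitivity to (large) losses on the corresponding domains. This rests on the observation that a nonzero $X\le0$ (respectively a nonzero $X$ with $\E[X]\le0$) automatically satisfies $\probp(X<0)>0$: indeed, if $X\ge0$ almost surely and $\E[X]\le0$, then $\E[X]=0$ and hence $X=0$ almost surely, contradicting nonzeroness. Once this equivalence of formulations is granted, the result is purely a matter of domain inclusion, so the proof is short.
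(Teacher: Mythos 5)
Your proposal is correct and follows exactly the paper's own (implicit) argument: the paper justifies the proposition with the one-line remark that the three domains $\cX_{--}\subseteq\cX_{-}\subseteq\cX_e$ are ordered by inclusion, which is precisely your monotonicity-of-the-domain principle. Your additional verification that the named sensitivities really coincide with sensitivity to (large) losses on these domains (via the observation that a nonzero $X$ with $\E[X]\le 0$ must satisfy $\probp(X<0)>0$) is a careful spelling-out of what the paper dismisses as ``readily seen.''
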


\smallskip

\begin{example}
\label{ex: var}
(a) Let $\cX=L^\infty$ and $\alpha\in(0,1)$. Value at Risk generally fails to be sensitive to (large) pure losses and hence also to (large) expected losses. Indeed, assume there is $A \in \cF$ with $\probp(A)\in(0,\alpha)$ and set $X := -\ind_A\in\cX$ to derive $\VaR_\alpha(X) = 0$. Nevertheless, it is sensitive to (large) sure losses. Indeed, for every $X\in\cX$ with $X < 0$ there exists $m\in(0,\infty)$ such that $\P(X +m <0) > \alpha$, whence $\VaR_\alpha(X) > 0$.

\smallskip

(b) Consider the $S$-shaped utility function given by
\begin{equation*}
u(x) := 
\begin{cases}
\sqrt{x} & \text{if } x \geq 0, \\
-\sqrt{-x} & \text{if } x < 0.
\end{cases}
\end{equation*}
Then the expected utility functional $\cE_u$ is clearly sensitive to (large) pure losses but fails to be sensitive to large expected losses because for every $X \in \cX$ whose distribution is symmetric at $0$, we have $\E[X] =0$ but $\cE_u(\lambda X) = 0$ for every $\lambda\in(0,\infty)$.
\end{example}

\smallskip

The following result gives practical sufficient conditions for sensitivity to large pure and expected losses for star-shaped risk/utility functionals in terms of their recession functionals being suitably bounded by the expected loss of a position.

\begin{proposition}
\label{prop:sens large exp/pure:1}
Let $\cR$ be a star-shaped risk functional.
\begin{enumerate}
\item[(a)] If $\cR^\infty(X) > \E[-X]$ for every nonconstant $X\in\cX$, then $\cR$ is sensitive to large expected losses.
\item[(b)] If $\cR^\infty(X) \geq \E[-X]$ for every $X\in\cX$, then $\cR$ is sensitive to large pure losses.
\end{enumerate}
\end{proposition}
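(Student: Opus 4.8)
The plan is to deduce both parts from Theorem~\ref{theo: recession functional}. Recall from the discussion preceding the statement that sensitivity to large expected (resp.\ pure) losses is precisely sensitivity to large losses on the domain $\cX_e=\{X\in\cX \,; \ \E[X]\le0\}$ (resp.\ $\cX_-=\{X\in\cX \,; \ X\le0\}$). Since $\cR$ is assumed star-shaped, the equivalence (a)$\Leftrightarrow$(d) in Theorem~\ref{theo: recession functional} reduces each claim to showing that the recession functional $\cR^\infty$ is \emph{sensitive to losses} on the respective domain, i.e.\ that $\cR^\infty(X)>0$ for every $X$ in that domain with $\probp(X<0)>0$. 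I would carry out this verification for each part in turn.

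For part (b) this is immediate and requires no case distinction. If $X\in\cX_-$ satisfies $\probp(X<0)>0$, then $X\le0$ almost surely together with $\probp(X<0)>0$ forces $\E[X]<0$, and hence $\E[-X]>0$. The hypothesis $\cR^\infty(X)\ge\E[-X]$ then yields $\cR^\infty(X)>0$. Note that this argument applies verbatim to constant positions $c<0$ as well, precisely because the bound is assumed for \emph{all} $X\in\cX$.

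For part (a) I would split according to whether $X$ is constant. If $X\in\cX_e$ is nonconstant with $\probp(X<0)>0$, then $\E[-X]=-\E[X]\ge0$, and the hypothesis gives $\cR^\infty(X)>\E[-X]\ge0$, so $\cR^\infty(X)>0$. The remaining, and main, difficulty is a negative constant $X=c<0$: it lies in $\cX_e$ with $\probp(X<0)>0$, yet it is explicitly excluded from the hypothesis, so the bound cannot be invoked directly. To handle it I would exploit that $\cR^\infty$ is a decreasing risk functional and transfer the bound from a dominating nonconstant position. Concretely, fix an event $A\in\cF$ with $\probp(A)=:p\in(0,1)$ and set $Y:=c\ind_{A^c}\in\cX$. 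Then $Y$ is nonconstant and $c\le Y$ almost surely, so monotonicity gives $\cR^\infty(c)\ge\cR^\infty(Y)$, while the hypothesis applied to the nonconstant $Y$ gives $\cR^\infty(Y)>\E[-Y]=-c(1-p)>0$. Hence $\cR^\infty(c)>0$, which completes the verification and, via Theorem~\ref{theo: recession functional}, the proof.

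The crux of the argument is thus the constant case in (a): since the hypothesis there is phrased only for nonconstant positions, one cannot apply it to $c<0$ directly and must instead route the bound through a nonconstant minorant using monotonicity of $\cR^\infty$. Everything else is routine bookkeeping of the defining inequalities, with the reduction to $\cR^\infty$ through Theorem~\ref{theo: recession functional} doing the heavy lifting. I am implicitly using that the underlying space carries some event $A$ with $0<\probp(A)<1$; on a trivial probability space every position is constant, the hypothesis of (a) becomes vacuous, and the statement degenerates.
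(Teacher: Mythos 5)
Your proof is correct, and structurally it follows the same route as the paper: both arguments reduce the claim via Theorem~\ref{theo: recession functional} (for star-shaped $\cR$, sensitivity to large losses on $\cS$ is equivalent to $\cR^\infty$ being sensitive to losses on $\cS$) and then verify strict positivity of $\cR^\infty$ on the relevant positions in $\cX_e$, resp.\ $\cX_-$, using monotonicity of $\cR^\infty$ and the assumed bound. Where you genuinely differ is in the key step of part (a). The paper's proof takes a nonzero $X$ with $\E[X]\le 0$, sets $Y:=X-\E[X]$, and asserts that $Y$ is nonconstant so that the hypothesis applies to $Y$; this assertion fails exactly when $X$ is a negative constant, in which case $Y=0$ and the argument breaks down. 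In other words, the paper's proof silently skips the case you correctly single out as the crux. Your treatment --- applying the hypothesis directly to nonconstant $X$ (the centering $Y=X-\E[X]$ is in fact unnecessary there, since $\E[-X]=-\E[X]\ge 0$ already) and handling a constant $c<0$ by dominating it with the nonconstant position $c\ind_{A^c}$ --- closes that lacuna. Your closing caveat is also substantive rather than pedantic: on a trivial probability space the hypothesis of (a) is vacuous while the conclusion can fail (e.g.\ for $\cR\equiv 0$, which is a star-shaped risk functional), so the existence of an event $A$ with $\probp(A)\in(0,1)$ is a genuine implicit assumption of the proposition that neither its statement nor the paper's proof makes explicit. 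Finally, your spelled-out argument for (b) is exactly what the paper dismisses as trivial, so no discrepancy there.
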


\smallskip

For a star-shaped risk/utility functional that is also cash-additive the previous sufficient condition for sensitivity to large expected losses is also necessary. By contrast, this is not true for sensitivity to large pure losses even if star-shapedness is replaced by positive homogeneity and convexity, as shown by the example below.

\begin{theorem}
\label{thm:sens large exp/pure:2}
Let $\cR$ be a star-shaped and cash-additive risk functional. Then
$\cR$ is sensitive to large expected losses if and only if $\cR^\infty(X) > \E[-X]$ for all non-constant $X \in \cX$.
\end{theorem}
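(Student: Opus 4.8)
The plan is to dispatch the two implications separately, noting that one of them is essentially free. The implication ``$\cR^\infty(X)>\E[-X]$ for all non-constant $X$ $\implies$ $\cR$ is sensitive to large expected losses'' is exactly the content of Proposition~\ref{prop:sens large exp/pure:1}(a), so nothing further is needed there; this is also the only place where star-shapedness is used. The real work lies in the reverse implication, which I would establish by contraposition: assuming the strict inequality fails for some non-constant position, I produce a single witness showing that $\cR$ is not sensitive to large expected losses.

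Before starting, I would record the elementary observation that $\cR^\infty$ inherits cash-additivity from $\cR$: for $X\in\cX$ and $m\in\R$,
\[
\cR^\infty(X+m)=\sup_{\lambda>0}\frac{\cR(\lambda X+\lambda m)}{\lambda}=\sup_{\lambda>0}\Big(\frac{\cR(\lambda X)}{\lambda}-m\Big)=\cR^\infty(X)-m.
\]
This is the only structural fact about the recession functional I need, and it is precisely where cash-additivity of $\cR$ enters the reverse implication.

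For the contrapositive, suppose there is a non-constant $X_0\in\cX$ with $\cR^\infty(X_0)\le\E[-X_0]$. Since $X_0\in L^1$, the constant $\E[X_0]$ is finite, and I set $Y:=X_0-\E[X_0]\in\cX$. Then $Y$ is nonzero (because $X_0$ is non-constant) and $\E[Y]=0$, so $Y$ lies in the relevant domain $\cX_e=\{X\in\cX \,; \ \E[X]\le0\}$. Using cash-additivity of $\cR^\infty$ together with $\E[-X_0]=-\E[X_0]$, I compute
\[
\cR^\infty(Y)=\cR^\infty(X_0)+\E[X_0]\le\E[-X_0]+\E[X_0]=0.
\]
By the very definition of the recession functional as a supremum, $\cR^\infty(Y)\le0$ forces $\cR(\lambda Y)/\lambda\le\cR^\infty(Y)\le0$, hence $\cR(\lambda Y)\le0$, for \emph{every} $\lambda\in(0,\infty)$. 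Thus $Y$ is a nonzero position with $\E[Y]\le0$ whose rescalings never become unacceptable, so $\cR$ fails to be sensitive to large expected losses, contradicting the hypothesis.

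The step I expect to require the most care is conceptual rather than computational: one must resist the temptation to invoke Theorem~\ref{theo: cash-additive}, since $\cX_e$ is \emph{not} stable under cash injections (adding a positive constant can raise the mean above zero), so that characterization does not apply on this domain. The crucial idea is therefore the recentering $Y=X_0-\E[X_0]$, which moves an arbitrary non-constant counterexample onto the mean-zero ``boundary'' of $\cX_e$ and, via cash-additivity of $\cR^\infty$, converts the inequality $\cR^\infty(X_0)\le\E[-X_0]$ into the clean statement $\cR^\infty(Y)\le0$. Once this reduction is in place the conclusion is immediate, which also explains why only the strict inequality (not equality to $\esssup(-X)$) is the right characterizing condition on this particular domain.
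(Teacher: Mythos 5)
Your proof is correct and follows essentially the same route as the paper: sufficiency is delegated to Proposition~\ref{prop:sens large exp/pure:1}(a), and necessity is handled by recentering to $Y:=X-\E[X]$ and exploiting cash-additivity of $\cR^\infty$. The only cosmetic differences are that you argue the necessity direction by contraposition and unfold the definition of the recession functional directly, where the paper argues directly and cites Theorem~\ref{theo: recession functional}; the mathematical content is identical.
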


\smallskip

\begin{example}
Let $\cX=L^\infty$. Take $A\in\cF$ such that $\probp(A)\in(1/2,1)$ and consider the risk functional given by $\cR(X):=(2-1/\probp(A))\E[-X\ind_A]+2\E[-X\ind_{A^c}]$ for $X\in\cX$. It is straightforward to check that $\cR$ is positively homogenous, convex and cash-additive. In addition, $\cR$ is clearly sensitive to (large) pure losses. However, $\cR$ does not satisfy the condition in Proposition \ref{prop:sens large exp/pure:1} (b) because for $Y:=-\ind_A+\frac{\probp(A)}{\probp(A^c)}\ind_{A^c}$ we get $\cR^\infty(Y)=\cR(Y)=-1<0=\E[-Y]$.
\end{example}

\smallskip

The following corollary shows that for positively homogeneous and cash-additive risk measures, sensitivity to (large) expected losses is equivalent to strict expectation boundedness.

\begin{corollary}
\label{cor:sens large exp/pure:2}
Let $\cR$ be a positively homogeneous and cash-additive risk functional. Then $\cR$ is sensitive to (large) expected losses if and only if $\cR(X) > \E[-X]$ for all non-constant $X \in \cX$.
\end{corollary}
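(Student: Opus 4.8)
The plan is to obtain this as an immediate corollary of Theorem~\ref{thm:sens large exp/pure:2}. First I would observe that positive homogeneity implies star-shapedness (the Remark following the list of properties), so $\cR$ falls squarely within the hypotheses of Theorem~\ref{thm:sens large exp/pure:2}, being both star-shaped and cash-additive. The crucial simplification is that a positively homogeneous risk functional coincides with its own recession functional. Indeed, since $\cR(\lambda X) = \lambda\cR(X)$ for every $\lambda \in (0,\infty)$,
\[
\cR^\infty(X) = \sup_{\lambda \in (0,\infty)} \frac{\cR(\lambda X)}{\lambda} = \sup_{\lambda \in (0,\infty)} \cR(X) = \cR(X)
\]
for every $X \in \cX$, exactly as noted after the definition of the recession functional.

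Feeding the identity $\cR^\infty = \cR$ into Theorem~\ref{thm:sens large exp/pure:2} then does all the work: that theorem characterizes sensitivity to large expected losses by the inequality $\cR^\infty(X) > \E[-X]$ for all non-constant $X \in \cX$, which under positive homogeneity reads precisely $\cR(X) > \E[-X]$ for all non-constant $X \in \cX$. This proves the equivalence for the ``large'' version of the property. To justify the parenthetical ``(large)'' in the statement -- that is, that sensitivity to expected losses and sensitivity to large expected losses can be used interchangeably here -- I would invoke Proposition~\ref{prop: positively homogeneous} with $\cS = \cX_e$: under positive homogeneity it gives that sensitivity to large losses on $\cS$ is equivalent to sensitivity to losses on $\cS$. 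Unwinding the definitions, sensitivity to (large) losses on $\cX_e$ is exactly sensitivity to (large) expected losses, so both versions of the property hold simultaneously and are characterized by the same strict expectation-boundedness inequality.

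There is no substantive obstacle in this argument: the result is a genuine corollary, obtained by substituting $\cR^\infty = \cR$ into an already-established theorem, with Proposition~\ref{prop: positively homogeneous} disposing of the parenthetical. The only point requiring a moment's care is confirming that positive homogeneity yields $\cR^\infty = \cR$, but this is immediate from the definition of the recession functional, as displayed above.
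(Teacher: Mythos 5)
Your proposal is correct and is precisely the argument the paper intends: the corollary follows from Theorem~\ref{thm:sens large exp/pure:2} by noting that positive homogeneity implies star-shapedness and forces $\cR^\infty=\cR$, while the parenthetical ``(large)'' is disposed of by Proposition~\ref{prop: positively homogeneous} applied to $\cS=\cX_e$ (using the paper's observation that sensitivity to (large) expected losses is sensitivity to (large) losses on $\cX_e$). No gaps; this matches the paper's route.
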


\smallskip

\begin{example}
\label{ex: ES}
Unlike Value at Risk, Expected Shortfall at any level $\alpha \in (0, 1)$ is sensitive to (large) expected and hence (large) pure/sure losses. This follows directly from Corollary \ref{cor:sens large exp/pure:2} and the fact $\ES_\alpha$ is strictly expectation bounded. Indeed, a slight simplification of the proof of \cite[Lemma 3.6]{KochMunariSvindland2018} shows that, for every non-constant $X\in L^1$,
\begin{align*}
\E[-X] &= \int_0^1\VaR_\beta(X)d\beta = \alpha\ES_\alpha(X)+\int_\alpha^1\VaR_\beta(X)d\beta \\
&\le \alpha\ES_\alpha(X)+(1-\alpha)\VaR_\alpha(X) \le \ES_\alpha(X),
\end{align*}
where at least one of the two inequalities must be strict for otherwise $X$ would be constant.
\end{example}

\subsection{Sensitivity to loss concentrations}
\label{sect: concentrations}
We end this section by looking at a property that is closely related to sensitivity to large losses in that it axiomatizes sensitivity to the accumulation or concentration of losses in specific scenarios. To the best of our knowledge, this property, which is named sensitivity to loss concentrations, has never been explicitly investigated in the literature.

\begin{definition}
A risk functional $\cR:\cX\to(-\infty,\infty]$, resp.\ a utility functional $\cU:\cX\to[-\infty,\infty)$, is called {\em sensitive to loss concentrations} if, for all $X\in\cX$ and $A\in\cF$ with $\probp(A)>0$, there exists $\lambda_X \in(0,\infty)$ such that $\cR(X-\lambda\ind_A)>0$, resp.\ $\cU(X-\lambda\ind_A)<0$, for each $\lambda \in (\lambda_X, \infty)$.
\end{definition}

\smallskip

The relationship between sensitivity to loss concentrations and sensitivity to large losses is delicate. The aim of this section is to show that sensitivity to loss concentrations strictly lies in between sensitivity to large expected losses and sensitivity to large pure losses. We begin our discussion by showing that sensitivity to loss concentrations is implied by sensitivity to large expected losses. As illustrated below, the converse implication generally fails. This is true even if one requires positive homogeneity and convexity/concavity.

\begin{proposition}
\label{prop: loss concentrations}
Any risk functional $\cR$ that is sensitive to large expected losses is sensitive to loss concentrations.
\end{proposition}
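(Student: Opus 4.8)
The plan is to reduce the loss-concentration statement, which concerns the whole family $X-\lambda\ind_A$, to the scaling of a \emph{single} fixed position, so that the hypothesis (sensitivity to large expected losses) can be applied directly. The bridge between the two notions is the fact that $\cR$ is decreasing: if I can dominate $X-\lambda\ind_A$ pointwise by a scalar multiple $\mu_\lambda W$ of one fixed $W\in\cX$ with $\E[W]\le 0$ and $W\neq 0$, in such a way that $\mu_\lambda\to\infty$ as $\lambda\to\infty$, then $\cR(X-\lambda\ind_A)\ge\cR(\mu_\lambda W)$ by monotonicity, and the right-hand side is eventually strictly positive by sensitivity to large expected losses.

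To build such a $W$, fix $X\in\cX$ and $A\in\cF$ with $\probp(A)>0$, and set $c:=\E[X^+]/\probp(A)+1$ (finite since $X\in L^1$) together with $W:=X^+-c\ind_A$. Here $X^+=X\ind_{\{X>0\}}$ lies in $\cX$ because $\cX$ is closed under multiplication by indicator functions, so $W\in\cX$; moreover $\E[W]=\E[X^+]-c\probp(A)=-\probp(A)<0$, which simultaneously guarantees $\E[W]\le 0$ and $W\neq 0$ (a random variable with nonzero expectation cannot be null). For $\lambda\ge c$ I would take $\mu:=\lambda/c\ge 1$; then, using $X\le X^+\le \mu X^+$ and $\mu c=\lambda$, one gets $X-\lambda\ind_A\le \mu X^+-\lambda\ind_A=\mu(X^+-c\ind_A)=\mu W$, so decreasingness yields $\cR(X-\lambda\ind_A)\ge\cR(\mu W)$. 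Finally, sensitivity to large expected losses applied to $W$ provides $\nu_0\in(0,\infty)$ with $\cR(\nu W)>0$ for all $\nu>\nu_0$; since $\mu=\lambda/c\to\infty$, choosing $\lambda>\max\{c,c\nu_0\}$ gives $\cR(X-\lambda\ind_A)\ge\cR(\mu W)>0$, which is exactly the required conclusion with threshold $\lambda_X:=\max\{c,c\nu_0\}$.

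The one genuine obstacle is that $X$ need not be bounded, so a naive two-valued dominating position of the form $-a\ind_A+b\ind_{A^c}$ cannot absorb the positive excursions of $X$ on $A^c$ (nor the term $X-\lambda$ on $A$ when $X$ is large there). The device that resolves this is to let $X^+$ itself carry those excursions inside $W$: closure of $\cX$ under multiplication by indicators is precisely what makes $X^+\in\cX$ legitimate, and choosing the scaling $\mu=\lambda/c$ so that $\mu c=\lambda$ collapses the domination to the trivial inequality $X\le\mu X^+$. I would double-check the degenerate cases $\probp(A)=1$ and $X\le 0$ a.s.\ (where $X^+=0$ and $c=1$, $W=-\ind_A$), which the same computation covers without modification.
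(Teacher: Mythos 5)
Your proof is correct and follows essentially the same route as the paper's: both construct a nonpositive-expectation position out of $X^+$ and an indicator loss term (yours is $W=X^+-c\ind_A$; the paper's is $Y=-b\ind_B+X^+\ind_{B^c}$ with $B=A\cap\{X\le a\}$), apply sensitivity to large expected losses to that position, and transfer the conclusion to $X-\lambda\ind_A$ through a pointwise domination and the decreasingness of $\cR$. Your exact scaling $\mu=\lambda/c$ makes the ``for all sufficiently large $\lambda$'' part of the conclusion slightly more explicit than the paper's argument (which obtains a single $\xi$ and implicitly relies on monotonicity in $\lambda$), but the underlying idea is identical.
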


\smallskip

\begin{example}
Let $\cX=L^\infty$ and consider the risk functional defined by $\cR(X):=\E[-X]$ for $X\in\cX$. Then $\cR$ is clearly sensitive to loss concentrations. However, it is not sensitive to large expected losses because any nonzero $X \in \cX$ with $\E[X] =0$ satisfies $\cR(\lambda X) = 0$ for each $\lambda > 0$.
\end{example}

We proceed to show that sensitivity to loss concentrations implies sensitivity to large pure losses. Again the converse generally fails as illustrated by the example below.

\begin{proposition}
\label{prop: necessary for sensitivity to loss concentrations}
Any risk functional $\mathcal{R}$ that is sensitive to loss concentrations is sensitive to large pure losses.
\end{proposition}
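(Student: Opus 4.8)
The goal is to show that sensitivity to loss concentrations implies sensitivity to large pure losses. Recall that sensitivity to large pure losses means: for every nonzero $X\in\cX$ with $X\le 0$, there exists $\lambda_X\in(0,\infty)$ such that $\cR(\lambda X)>0$ for all $\lambda>\lambda_X$. The natural strategy is to take such a nonzero $X\le 0$ and manufacture an event $A$ together with a baseline position on which the defining property of sensitivity to loss concentrations can be invoked, and then compare $\cR(\lambda X)$ with $\cR$ evaluated at a concentrated-loss position using monotonicity (the decreasing property of $\cR$).

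Here is the plan. Let $X\in\cX$ be nonzero with $X\le0$. Since $X\le 0$ and $X\neq 0$, the set $\{X<0\}$ has positive probability, so we may pick $\varepsilon\in(0,\infty)$ and an event $A:=\{X\le-\varepsilon\}$ with $\probp(A)>0$. On $A$ we have $X\le-\varepsilon$, so for every $\lambda>0$ the inequality $\lambda X\le -\lambda\varepsilon\,\ind_A + 0$ holds pointwise, more precisely $\lambda X \le -\lambda\varepsilon\ind_A$ because off $A$ we still have $\lambda X\le 0$ and on $A$ we have $\lambda X\le-\lambda\varepsilon$. The first idea is to apply sensitivity to loss concentrations with the baseline position taken to be $0$ (or some fixed bounded position) and the event $A$: this yields a threshold $\mu_A$ such that $\cR(0-\mu\ind_A)>0$ for all $\mu>\mu_A$. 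I would then use that $\cR$ is decreasing together with the pointwise bound $\lambda X\le-\lambda\varepsilon\ind_A$ to conclude $\cR(\lambda X)\ge\cR(-\lambda\varepsilon\ind_A)>0$ once $\lambda\varepsilon>\mu_A$, i.e.\ once $\lambda>\mu_A/\varepsilon$. Setting $\lambda_X:=\mu_A/\varepsilon$ gives exactly the required conclusion.

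The key steps, in order, are: (i) from $X\le0$, $X\neq0$ produce an event $A$ of positive probability with $X\le-\varepsilon$ on $A$ for some $\varepsilon>0$; (ii) establish the pointwise domination $\lambda X\le-\lambda\varepsilon\ind_A$ for $\lambda>0$; (iii) invoke sensitivity to loss concentrations applied to the fixed position $X':=0$ and the event $A$ to obtain a threshold beyond which $\cR(-\mu\ind_A)>0$; and (iv) chain monotonicity of $\cR$ with this strict positivity to transfer positivity to $\cR(\lambda X)$ for all sufficiently large $\lambda$.

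The main obstacle is a subtle point in step (iii)--(iv): sensitivity to loss concentrations is stated for an arbitrary \emph{fixed} $X\in\cX$ perturbed by $-\lambda\ind_A$, so I must be careful that the position I feed into the definition is the constant baseline (here $0\in\cX$, which lies in $\cX$ since $\cX$ contains all bounded random variables) rather than a moving target. Using the baseline $0$, the definition gives $\cR(-\mu\ind_A)>0$ for all $\mu$ beyond some $\mu_A$; since $-\lambda\varepsilon\ind_A$ is exactly of this form with $\mu=\lambda\varepsilon$, the domination $\lambda X\le-\lambda\varepsilon\ind_A$ combined with the decreasing property ($\cR$ is order-reversing) yields $\cR(\lambda X)\ge\cR(-\lambda\varepsilon\ind_A)>0$, completing the argument. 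One should double-check the direction of the inequality in the decreasing property, namely that $Y_1\le Y_2$ implies $\cR(Y_1)\ge\cR(Y_2)$, which is precisely how a worse loss profile is assigned higher risk.
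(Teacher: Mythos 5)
Your proposal is correct and follows essentially the same argument as the paper: extract an event $A=\{X\le-\varepsilon\}$ of positive probability, dominate $\lambda X\le-\lambda\varepsilon\ind_A$, invoke sensitivity to loss concentrations at the baseline position $0$, and transfer strict positivity via the decreasing property of $\cR$. The only cosmetic difference is that the paper fixes a single $\gamma$ with $\cR(-\gamma\ind_A)>0$ and compares $\lambda X\le-\gamma\ind_A$ for $\lambda>\gamma/\varepsilon$, whereas you use the whole range of admissible $\mu=\lambda\varepsilon$ beyond the threshold; these are interchangeable.
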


\smallskip

\begin{example}
Let $\cX=L^\infty$. Take $A\in\cF$ such that $\probp(A)\in(0,1)$ and define a risk functional by $\cR(X):=\min\{\esssup(-X),\esssup((1-X)\ind_A)\}$ for $X\in\cX$. Note that $\cR$ is sensitive to (large) pure losses because, for every nonzero $X\in\cX$ with $X\le 0$, we have $\cR(X)\ge\min\{\esssup(-X),1\}>0$. However, setting $X:=\ind_A$ we see that $\cR(X-\lambda\ind_{A^c})=\min\{\lambda,0\}=0$ for every $\lambda\in(0,1)$, showing that $\cR$ is not sensitive to loss concentrations.
\end{example}

\smallskip

Note that in the counterexample above, the risk functional is not star-shaped. We conclude this section by showing that for a star-shaped risk/utility functional that is lower/upper semicontinuous, sensitivity to loss concentrations is equivalent to sensitivity to large pure losses.

\begin{theorem}
\label{thm: loss concentrations part two}
A star-shaped and lower semicontinuous risk functional $\cR$ is sensitive to loss concentrations if and only if it is sensitive to large pure losses.
\end{theorem}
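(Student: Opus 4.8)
The implication that sensitivity to loss concentrations forces sensitivity to large pure losses is precisely Proposition~\ref{prop: necessary for sensitivity to loss concentrations} and needs neither star-shapedness nor lower semicontinuity, so the plan is to establish only the reverse implication. Thus I would assume that $\cR$ is star-shaped, lower semicontinuous and sensitive to large pure losses, fix $X\in\cX$ and $A\in\cF$ with $\probp(A)>0$, and aim to produce $\lambda_X\in(0,\infty)$ with $\cR(X-\lambda\ind_A)>0$ for all $\lambda>\lambda_X$. The first step is to exploit sensitivity to large pure losses on the nonzero position $-\ind_A\le0$: this yields some $\mu\in(0,\infty)$ with $\cR(-\mu\ind_A)>0$, which will serve as the anchor of the argument.

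The key device is to factor out the scale rather than treat $X-\lambda\ind_A$ additively. For $\lambda>\mu$ I would write
\[
X-\lambda\ind_A=\tfrac{\lambda}{\mu}\Bigl(\tfrac{\mu}{\lambda}X-\mu\ind_A\Bigr),
\]
and set $Z_\lambda:=\tfrac{\mu}{\lambda}X-\mu\ind_A$. Since $\tfrac{\lambda}{\mu}>1$, star-shapedness gives
\[
\cR(X-\lambda\ind_A)\ \ge\ \tfrac{\lambda}{\mu}\,\cR(Z_\lambda),
\]
so it suffices to show that $\cR(Z_\lambda)>0$ for all sufficiently large $\lambda$; the bound then delivers a positive multiple of a positive quantity (and is trivially fine should $\cR(Z_\lambda)=\infty$).

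To control $\cR(Z_\lambda)$ I would pass to the limit $\lambda\to\infty$. Since $\cX$ is a topological vector space, continuity of scalar multiplication gives $\tfrac{\mu}{\lambda}X\to0$, hence $Z_\lambda\to-\mu\ind_A$ in $\cX$ along the net $\lambda\in(\mu,\infty)$. Lower semicontinuity then yields
\[
\cR(-\mu\ind_A)\ \le\ \liminf_{\lambda\to\infty}\cR(Z_\lambda).
\]
As $\cR(-\mu\ind_A)>0$, there is $\lambda_X>\mu$ with $\cR(Z_\lambda)>0$ for every $\lambda>\lambda_X$, and the star-shapedness bound then gives $\cR(X-\lambda\ind_A)>0$ for all $\lambda>\lambda_X$, which is exactly sensitivity to loss concentrations.

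The only genuinely delicate point is the fixed shift $X$, which may be unbounded and whose positive part on $A$ could a priori compensate the accumulating loss $-\lambda\ind_A$ and keep the risk nonpositive; it is not clear how to dominate $X-\lambda\ind_A$ directly. The rescaling identity is precisely what neutralizes this: it converts the additive perturbation $X$ into the vanishing multiplicative perturbation $\tfrac{\mu}{\lambda}X$, so that star-shapedness extracts the growing factor $\tfrac{\lambda}{\mu}$ while lower semicontinuity transfers the strict positivity of $\cR(-\mu\ind_A)$ to $\cR(Z_\lambda)$ for large $\lambda$. I would also remark that star-shapedness is essential, as the counterexample preceding the theorem shows that the implication fails without it.
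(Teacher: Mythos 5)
Your proof is correct and is essentially the paper's argument: both reduce to the reverse implication via Proposition~\ref{prop: necessary for sensitivity to loss concentrations}, anchor at $\cR(-\mu\ind_A)>0$ from sensitivity to large pure losses, rewrite $X-\lambda\ind_A$ as a large multiple of the perturbation $\tfrac{\mu}{\lambda}X-\mu\ind_A$, and combine star-shapedness with lower semicontinuity to transfer the strict positivity. The only (cosmetic) difference is that you parametrize continuously in $\lambda$, which yields $\cR(X-\lambda\ind_A)>0$ for all large $\lambda$ directly, whereas the paper works along the integer sequence $n\lambda$ and implicitly fills the gaps by monotonicity.
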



\section{Examples I: Risk functionals}
\label{sect: examples}

We have already seen in Example \ref{ex: var and es} that the two most prominent risk functionals in practice, Value at Risk and Expected Shortfall, fail to be sensitive to large losses on $L^\infty$ (and a fortiori on~$L^1$). Nevertheless, we have also seen that they are sensitive to large losses on suitable subdomains of $L^\infty$, cf.~Examples \ref{ex: var} and \ref{ex: ES}. In this section we focus on other risk functionals encountered in the literature, some of which are modifications of Value at Risk and Expected Shortfall, and use our results to determine whether they are sensitive to large losses or not. Throughout the entire section we assume that, for each $\varepsilon \in (0, 1)$, there exists $A \in \cF$ with $0 < \mathbb{P}(A) \leq  \varepsilon$. This property is implied by, but far weaker than, the requirement of $(\Omega,\cF,\probp)$ being non-atomic. To keep the analysis self-contained, we investigate sensitivity to large losses only on the domains $L^1$ and $L^\infty$, which are equipped with their canonical topological lattice structure.

\subsection{Loss Value at Risk}

Let $\alpha:(-\infty,0]\to[0,1)$ be increasing. The {\em Loss Value at Risk} of $X \in L^1$ with \emph{benchmark loss distribution} $\alpha$ is defined by (see \cite{BignozziBurzoniMunari2020})
\[
\LVaR_\alpha(X) := \sup_{\ell\in(-\infty,0]}\{\VaR_{\alpha(\ell)}(X)+\ell\}, \quad \textnormal{where} \quad 
\VaR_0(X):=\esssup(-X).
\]
This is a cash-additive and star-shaped risk functional according to our general definition. The function $\alpha$ can be seen as a threshold distribution for acceptability because a position $X \in L^1$ is acceptable if and only if $\probp(X < \ell) \leq \alpha(\ell)$ for all $\ell\in (-\infty, 0]$. In particular, if there exists $\alpha_0\in(0,1)$ such that $\alpha(\ell)=\alpha_0$ for every $\ell \in(-\infty,0]$, then $\LVaR_\alpha$ coincides with Value at Risk at level $\alpha_0$. This implies that $\LVaR_\alpha$ cannot always be sensitive to large losses. However, the next proposition shows that sensitivity to large losses holds whenever the function $\alpha$ is not bounded away from zero. If we recall that Value at Risk is not sensitive to large losses on $L^\infty$, then we can interpret this result as showing how to modify Value at Risk by assigning specific weights to each loss level in order to gain sensitivity to large losses on the whole space.

\begin{proposition}
\label{prop: loss var}
Let $\alpha:(-\infty,0]\to(0,1)$ be increasing and set $\underline{\alpha}:=\inf_{\ell \in(-\infty,0]}\alpha(\ell)$. The following are equivalent:
\begin{enumerate}
    \item[(a)] $\LVaR_\alpha$ is sensitive to large losses on $L^1$.
    \item[(b)] $\LVaR_\alpha$ is sensitive to large losses on $L^\infty$.
    \item[(c)] $\underline{\alpha}=0$.
\end{enumerate}
\end{proposition}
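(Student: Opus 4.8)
The plan is to reduce the whole proposition to Theorem~\ref{theo: cash-additive}. Since $\LVaR_\alpha$ is cash-additive and star-shaped, and both $L^1$ and $L^\infty$ are stable under cash injections (that is, $\cS+\R\subset\cS$), that theorem tells us that $\LVaR_\alpha$ is sensitive to large losses on $L^p$ for $p\in\{1,\infty\}$ if and only if $\LVaR_\alpha^\infty(X)=\esssup(-X)$ for every $X\in L^p$. Hence everything follows once I compute the recession functional $\LVaR_\alpha^\infty$ and determine for which $\alpha$ it reduces to the worst-case functional $X\mapsto\esssup(-X)$.

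The key step is the recession-functional computation. Using the positive homogeneity of each $\VaR_\beta$ and interchanging the two suprema (a joint supremum over the pair $(\lambda,\ell)$), I expect, for every $X\in L^1$,
\begin{align*}
\LVaR_\alpha^\infty(X) &= \sup_{\lambda\in(0,\infty)}\frac{1}{\lambda}\sup_{\ell\in(-\infty,0]}\bigl\{\VaR_{\alpha(\ell)}(\lambda X)+\ell\bigr\} = \sup_{\ell\in(-\infty,0]}\sup_{\lambda\in(0,\infty)}\Bigl\{\VaR_{\alpha(\ell)}(X)+\tfrac{\ell}{\lambda}\Bigr\} \\
&= \sup_{\ell\in(-\infty,0]}\VaR_{\alpha(\ell)}(X),
\end{align*}
where the last equality uses $\sup_{\lambda>0}\tfrac{\ell}{\lambda}=0$ for every $\ell\le0$. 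Because $\alpha$ is increasing and $\beta\mapsto\VaR_\beta(X)$ is decreasing, this supremum is attained in the limit $\ell\to-\infty$, so that $\LVaR_\alpha^\infty(X)=\lim_{\beta\downarrow\underline\alpha}\VaR_\beta(X)$.

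For (c)$\Rightarrow$(a),(b) I would invoke the quantile identity $\lim_{\beta\downarrow0}\VaR_\beta(X)=\esssup(-X)$: the bound ``$\le$'' holds since $\VaR_\beta(X)\le\VaR_0(X)=\esssup(-X)$ for all $\beta\ge0$, while ``$\ge$'' follows because any $m<\esssup(-X)$ satisfies $\probp(X+m<0)>0$, forcing $\VaR_\beta(X)\ge m$ for all sufficiently small $\beta$. Thus if $\underline\alpha=0$, the recession functional equals $\esssup(-X)$ on all of $L^1\supset L^\infty$, and Theorem~\ref{theo: cash-additive} gives both (a) and (b). Conversely, for $\neg$(c)$\Rightarrow\neg$(a),$\neg$(b), suppose $\underline\alpha>0$; then $\underline\alpha\in(0,1)$, and by the standing assumption of the section there is $A\in\cF$ with $0<\probp(A)\le\underline\alpha$. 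Setting $X:=-\ind_A$, a direct computation gives $\VaR_{\alpha(\ell)}(-\ind_A)=0$ for every $\ell\le0$ (since $\probp(A)\le\underline\alpha\le\alpha(\ell)$), whence $\LVaR_\alpha^\infty(X)=0<1=\esssup(-X)$; so condition (b) of Theorem~\ref{theo: cash-additive} fails on both $L^\infty$ and $L^1$, excluding sensitivity to large losses. Equivalently and more concretely, one checks directly that $\LVaR_\alpha(\lambda X)=\sup_{\ell\le0}\ell=0$ for all $\lambda>0$.

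The main obstacle I anticipate is the recession-functional computation itself, and in particular pinning down the limit $\lim_{\beta\downarrow\underline\alpha}\VaR_\beta(X)$. The identity $\lim_{\beta\downarrow0}\VaR_\beta(X)=\esssup(-X)$, together with its compatibility with the $\VaR_0$ convention built into the definition of $\LVaR_\alpha$, is the delicate quantile fact underpinning the argument; one must also check that the whole chain of equalities remains valid in the extended reals, handling the case $\esssup(-X)=+\infty$ for unbounded $X\in L^1$.
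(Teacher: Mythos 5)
Your proposal is correct and follows essentially the same route as the paper: both compute $\LVaR_\alpha^\infty$ by exchanging the suprema via positive homogeneity of $\VaR_{\alpha(\ell)}$, reducing it to $\sup_{\ell\le 0}\VaR_{\alpha(\ell)}(X)$, and then invoke Theorem~\ref{theo: cash-additive}. The only cosmetic difference is that the paper handles the converse direction by citing the counterexample of Example~\ref{ex: var and es}(a) for $\VaR_{\underline{\alpha}}$, while you re-derive that same counterexample $X=-\ind_A$ with $0<\probp(A)\le\underline{\alpha}$ directly (and you spell out the quantile limit $\lim_{\beta\downarrow 0}\VaR_\beta(X)=\esssup(-X)$, which the paper leaves implicit).
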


\subsection{Adjusted Expected Shortfall}

Let $g:(0,1]\to[0,\infty]$ be decreasing with $g(1) = 0$. The {\em adjusted Expected Shortfall} of $X \in L^1$ with \emph{risk profile} $g$ is defined by (see \cite{BurzoniMunariWang2022}\footnote{Note that because we work on $L^1$, we have changed the domain of $g$ so that it does not include $0$. Nevertheless all results in \cite{BurzoniMunariWang2022} carry over.})
\[
\ES^g(X) := \sup_{\alpha\in(0,1]}\{\ES_\alpha(X)-g(\alpha)\}, \quad \textnormal{where} \quad \ES_1(X):=\mathbb{E}[-X].
\]
This is a cash-invariant and convex risk functional according to our general definition. The function $g$ can be seen as the threshold between acceptable and unacceptable ES profiles because a position $X \in L^1$ is acceptable if and only if $\ES_\alpha(X) \leq g(\alpha)$ for all $\alpha\in (0, 1]$. For given $\alpha\in(0,1]$, if $g=\infty$ on $(0,\alpha)$ and $g=0$ on $[\alpha,1]$, then $\ES^g$ coincides with Expected Shortfall at level $\alpha$. As such, $\ES^g$ cannot generally be sensitive to large losses on $L^1$. However, we can always ensure sensitivity to large losses by choosing a finite function $g$. In other words, this gives a way to modify Expected Shortfall in order to make it sensitive to large losses on the whole space.

\begin{proposition}
\label{prop: adjusted es}
Let $g:(0,1]\to[0,\infty]$ be decreasing with $g(1)=0$. The following are equivalent:
\begin{enumerate}
    \item[(a)] $\ES^g$ is sensitive to large losses on $L^1$.
    \item[(b)] $\ES^g$ is sensitive to large losses on $L^\infty$.
    \item[(c)] $g$ is finite everywhere.
\end{enumerate}
\end{proposition}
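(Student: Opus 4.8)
The plan is to prove the three equivalences $(a) \Leftrightarrow (b) \Leftrightarrow (c)$ by exploiting that $\ES^g$ is cash-additive and convex, so that the characterization of sensitivity to large losses in terms of the recession functional reducing to the worst-case risk measure (Theorem~\ref{theo: cash-additive}) applies on any domain stable under cash injections, in particular $L^1$ and $L^\infty$. Since $\ES^g$ is a supremum of the cash-additive, positively homogeneous functionals $\ES_\alpha - g(\alpha)$, the constant shifts $g(\alpha)$ wash out under the recession operation, and I expect to compute
\[
(\ES^g)^\infty(X) = \sup_{\alpha \in (0,1]} \ES_\alpha(X) \quad \text{over those } \alpha \text{ with } g(\alpha)<\infty,
\]
so that the recession functional ``sees'' only the finite part of $g$. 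The key identity to establish is that $\sup_{\alpha \in (0,1]} \ES_\alpha(X) = \esssup(-X)$, using that $\ES_\alpha(X) \to \esssup(-X)$ as $\alpha \downarrow 0$ (a standard property of Expected Shortfall).

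For the direction $(c) \Rightarrow (b)$ (hence the stronger $(c)\Rightarrow(a)$, since $L^\infty \subset L^1$ and one must check sensitivity on the larger domain), I would assume $g$ is finite everywhere. Then for every $X$ and every $\alpha \in (0,1]$ we have $(\ES^g)(\lambda X)/\lambda \ge \ES_\alpha(X) - g(\alpha)/\lambda$, and letting $\lambda \to \infty$ gives $(\ES^g)^\infty(X) \ge \ES_\alpha(X)$ for each $\alpha$; taking the supremum over $\alpha$ and using the limit $\ES_\alpha(X)\to\esssup(-X)$ as $\alpha\downarrow 0$ yields $(\ES^g)^\infty(X) \ge \esssup(-X)$. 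The reverse inequality is automatic since each $\ES_\alpha(X) \le \esssup(-X)$. Thus $(\ES^g)^\infty(X)=\esssup(-X)$ on all of $L^1$, and Theorem~\ref{theo: cash-additive} delivers sensitivity to large losses. For the contrapositive $\neg(c) \Rightarrow \neg(a)$, I would suppose $g(\alpha_0) = \infty$ for some $\alpha_0 \in (0,1]$; since $g$ is decreasing, $g = \infty$ on $(0,\alpha_0)$. I would then construct an explicit position $X$ with $\probp(X<0)>0$ (mimicking the Expected Shortfall counterexample in Example~\ref{ex: var and es}(b)) that remains acceptable under all rescalings: concretely, a position supported on a set $A$ with $\probp(A) \ge \alpha_0$ paired with a large gain on $A^c$, so that $\ES_\alpha(\lambda X) - g(\alpha) \le 0$ for every $\alpha \ge \alpha_0$ (the profit dominates) and $= -\infty$ effectively penalized away for $\alpha < \alpha_0$ (where $g=\infty$), giving $\ES^g(\lambda X) \le 0$ for all $\lambda$.

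The main obstacle I anticipate is the $\neg(c)\Rightarrow\neg(a)$ counterexample: I need the single position $X$ to survive simultaneously all levels $\alpha \ge \alpha_0$ under arbitrary scaling. The levels $\alpha \in (0,\alpha_0)$ are harmless because $g(\alpha)=\infty$ there makes $\ES_\alpha(X)-g(\alpha)=-\infty$, so only the ``tail'' $\alpha \in [\alpha_0,1]$ matters. On that range $\ES_\alpha$ is positively homogeneous and cash-additive, and I must arrange the gain on $A^c$ to be large enough relative to the loss on $A$ that $\sup_{\alpha \ge \alpha_0}\ES_\alpha(X) \le 0$; this reduces to the fact that $\ES_{\alpha_0}$ is \emph{not} sensitive to large losses on $L^\infty$ (Example~\ref{ex: var and es}(b)), so the same witness works here. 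This requires the standing assumption that for each $\varepsilon$ there is $A\in\cF$ with $0<\probp(A)\le\varepsilon$, which guarantees a set $A$ with $\probp(A)$ in the required range. The equivalence $(a)\Leftrightarrow(b)$ then follows since $(c)$ implies both and the constructed counterexample under $\neg(c)$ lies in $L^\infty \subset L^1$, defeating sensitivity on both domains at once.
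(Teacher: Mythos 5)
Your proposal takes essentially the same route as the paper's proof. For (c) $\Rightarrow$ (a), both arguments compute the recession functional of $\ES^g$, use positive homogeneity of each $\ES_\alpha$ so that the finite penalties $g(\alpha)$ wash out in the limit, conclude $(\ES^g)^\infty(X)=\sup_{\alpha\in(0,1]}\ES_\alpha(X)=\esssup(-X)$ on $L^1$, and invoke Theorem~\ref{theo: cash-additive}. For $\neg$(c), the paper computes $(\ES^g)^\infty=\ES_p$ on $L^\infty$ with $p:=\sup\{\alpha\in(0,1]\,;\ g(\alpha)=\infty\}$ and then cites Example~\ref{ex: var and es}(b), whereas you directly exhibit a bounded witness $X$ with $\probp(X<0)>0$ and $\ES^g(\lambda X)\le 0$ for all $\lambda>0$; these are the same idea (both hinge on the Example~\ref{ex: var and es}(b) construction), and your bookkeeping for closing the cycle of implications through $L^\infty\subset L^1$ is correct.

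There is, however, one concrete slip in your description of the witness: you require the loss set $A$ to satisfy $\probp(A)\ge\alpha_0$. That inequality is backwards, and taken literally the construction fails. Indeed, with $X=-\ind_A+n\ind_{A^c}$ and $\probp(A)\ge\alpha_0$, for levels $\alpha$ slightly above $\alpha_0$ the $\alpha$-tail of $\lambda X$ consists (almost) entirely of outcomes equal to $-\lambda$, so $\ES_\alpha(\lambda X)>0$ no matter how large $n$ is; since the finite region $\{g<\infty\}$ is contained in $(\alpha_0,1]$ and may come arbitrarily close to $\alpha_0$, your claim that $\ES_\alpha(\lambda X)-g(\alpha)\le0$ for every $\alpha\ge\alpha_0$ breaks down precisely there. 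What one actually needs is $\ES_\alpha(X)\le 0$ for all $\alpha>\alpha_0$, which for this two-point witness amounts to $\probp(A)(1+\tfrac{1}{n})\le\alpha_0$ and hence forces $\probp(A)<\alpha_0$ strictly. This is exactly the configuration in Example~\ref{ex: var and es}(b), where $\probp(A)\in(0,\alpha)$, and your standing-assumption remark guarantees such a set exists; so the error is internal to your prose and is repaired by the very example you invoke, rather than a gap in the method.
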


\subsection{Shortfall risk measures}

For an increasing function $\ell:\R \to (-\infty, \infty]$ satisfying $\ell(0) = 0$ and $\limsup_{x \to -\infty} \tfrac{\ell(x)}{x} < \infty$, the {\em shortfall risk measure} of $X\in L^1$ is defined by (see \cite{FoellmerSchied2016})
\begin{equation*}
    \cR_\ell(X) := \inf \{ m\in\R \,; \ X+m \in \cA_\ell \}, \quad \cA_\ell:=\{X\in L^1 \,; \ \E[\ell(-X)] \leq 0\}.
\end{equation*}
The function $\ell$ is referred to as a \emph{loss function}. It reflects how risk averse an agent is. If $\ell(x)>0$ for every $x\in(0,\infty)$, then $\cR_\ell$ is a cash-additive risk functional in the sense of our general definition. 

\smallskip

The next result shows that whether a shortfall risk measure is sensitive to large losses or not depends only on the ratio of the tails of the underlying loss function. More precisely, sensitivity to large losses is fulfilled whenever the gain tail asymptotically dominates the loss tail.

\begin{proposition}
\label{prop:SR SLL}
Let $\ell$ be a loss function that is positively star-shaped (e.g. convex) on $\mathbb{R}_-$ and assume $\ell(x) > 0$ for every $x \in (0, \infty)$. The following are equivalent:
\begin{enumerate}
    \item[(a)] $\cR_\ell$ is sensitive to large losses on $L^1$.
    \item [(b)] $\cR_\ell$ is sensitive to large losses on $L^\infty$.
    \item[(c)] $\limsup_{x \to \infty} \tfrac{\ell(x)}{\ell(-x)} = -\infty$.
\end{enumerate}
\end{proposition}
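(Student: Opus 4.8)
The plan is to reduce everything to the sign of $\E[\ell(-\lambda X)]$ and then run the cycle (a)$\Rightarrow$(b)$\Rightarrow$(c)$\Rightarrow$(a). Since $\ell(x)>0$ for $x>0$, the functional $\cR_\ell$ is cash-additive with acceptance set $\cA_\ell=\{Y\in\cX\,;\ \E[\ell(-Y)]\le 0\}$ and $\cR_\ell=\cR_{\cA_\ell}$. From the definition of $\cR_{\cA_\ell}$ and the monotonicity of $m\mapsto\E[\ell(-Y-m)]$ I only need the two elementary facts that hold for \emph{merely increasing} $\ell$ (no continuity): $\E[\ell(-Y)]\le 0$ implies $\cR_\ell(Y)\le 0$, and $\E[\ell(-Y-\varepsilon)]>0$ for some $\varepsilon>0$ implies $\cR_\ell(Y)\ge\varepsilon>0$. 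Hence sensitivity to large losses of $\cR_\ell$ is governed by whether $\E[\ell(-\lambda X)]$ (or a fixed shift thereof) becomes and stays positive as $\lambda\to\infty$.

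The implication (a)$\Rightarrow$(b) is immediate since $L^\infty\subset L^1$. For (b)$\Rightarrow$(c) I argue by contraposition. If (c) fails then, discarding the degenerate case $\ell\equiv 0$ on $\R_-$ (in which (c) holds and $\cR_\ell$ is even sensitive to losses), there are a finite $M$ and a sequence $x_n\to\infty$ with $\ell(x_n)\le M|\ell(-x_n)|$ and $\ell(-x_n)<0$. Using the standing assumption I pick $A\in\cF$ with $0<\P(A)<\tfrac{1}{M+1}$ and set $X:=-\ind_A+\ind_{A^c}\in L^\infty$, so $\P(X<0)>0$. Along $\lambda_n:=x_n\to\infty$,
\[
\E[\ell(-\lambda_n X)]=\P(A)\ell(x_n)-\P(A^c)|\ell(-x_n)|\le|\ell(-x_n)|\big(M\P(A)-\P(A^c)\big)<0,
\]
since $M\P(A)<\P(A^c)$; hence $\cR_\ell(\lambda_n X)\le 0$ for arbitrarily large $\lambda_n$, and $\cR_\ell$ is not sensitive to large losses on $L^\infty$.

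The main work is (c)$\Rightarrow$(a). Fix $X\in L^1$ with $\P(X<0)>0$ and split, using $\ell(0)=0$,
\[
\E[\ell(-\lambda X)]=\underbrace{\E[\ell(\lambda X^-)\ind_{\{X<0\}}]}_{=:P(\lambda)\ \ge\,0}-\underbrace{\E[|\ell(-\lambda X^+)|\ind_{\{X>0\}}]}_{=:N(\lambda)\ \ge\,0}.
\]
For the loss part, choose $\delta\in(0,1]$ with $\P(X\le-\delta)>0$; monotonicity of $\ell$ gives $P(\lambda)\ge\ell(\lambda\delta)\,\P(X\le-\delta)$. For the gain part the positive star-shapedness of $\ell$ on $\R_-$ is the key input: it is equivalent to $s\mapsto|\ell(-s)|/s$ being non-increasing on $(0,\infty)$, which yields $|\ell(-\lambda X^+)|\le (X^++1)|\ell(-\lambda)|$ and hence $N(\lambda)\le K|\ell(-\lambda)|$ with $K:=\E[X^+]+\P(X>0)<\infty$. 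Finally I convert condition (c), i.e.\ $|\ell(-x)|/\ell(x)\to 0$, into a comparison at \emph{different} scales via the same monotone ratio: writing $\mu=\lambda\delta$ and using $\delta\le 1$,
\[
\frac{|\ell(-\lambda)|}{\ell(\lambda\delta)}=\frac{|\ell(-\mu/\delta)|}{\ell(\mu)}\le\frac1\delta\,\frac{|\ell(-\mu)|}{\ell(\mu)}\xrightarrow[\lambda\to\infty]{}0.
\]
Combining the three estimates gives $N(\lambda)/P(\lambda)\to 0$, so $\E[\ell(-\lambda X)]\ge P(\lambda)/2>0$ for all large $\lambda$; rerunning the same bounds with the fixed shift $-1$ in place of $0$ (the gain bound is unchanged and the loss bound only degrades $\ell(\lambda\delta)$ to $\ell(\lambda\delta/2)$) shows $\E[\ell(-\lambda X-1)]>0$, hence $\cR_\ell(\lambda X)\ge 1>0$, for all large $\lambda$.

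The crux, and the step I expect to be most delicate, is the control of the gain contribution $N(\lambda)$ for \emph{unbounded} positions in $L^1$, where the losses and the gains of $X$ occur at genuinely different magnitudes. The naive linear bound $N(\lambda)\lesssim\lambda$ coming from star-shapedness alone is too crude (it fails, e.g., when $\ell$ is linear on $\R_-$ but merely superlinear on $\R_+$, or conversely when $\ell$ is linear on $\R_+$ and sublinear on $\R_-$). What rescues the argument is the scaling identity above, which transfers the gain argument $\lambda$ down to the loss level $\lambda\delta$ at the cost of the factor $1/\delta$ and then invokes (c) at the common scale $\mu$. The remaining points are routine: the integrability $K<\infty$ in $L^1$, and the bookkeeping to pass from $\E[\ell(-\lambda X)]>0$ to the strict inequality $\cR_\ell(\lambda X)>0$ without assuming $\ell$ continuous, which is exactly why the fixed shift $-1$ is carried through.
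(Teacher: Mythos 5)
Your proof is correct, but it takes a genuinely different route from the paper's. The paper disposes of this proposition in one line: it observes that $\cR_\ell = -\cM_u$ for $u(x):=-\ell(-x)$, so that the statement becomes exactly Proposition~\ref{prop:Mu SLL} on $u$-mean certainty equivalents; that proposition is in turn deduced from Theorem~\ref{thm:expect utility} together with the regularity-transfer result of Theorem~\ref{the: sensitivity via acceptability v2}, applied after replacing $u$ by its right-continuous modification. You instead stay entirely in the loss-function picture and argue from scratch: your two elementary facts about the sign of $\E[\ell(-Y-m)]$ replace the upper-regularity machinery; your decomposition $\E[\ell(-\lambda X)]=P(\lambda)-N(\lambda)$ with the star-shape bound $|\ell(-\lambda X^+)|\le (X^++1)|\ell(-\lambda)|$ plays the role of the paper's layering of $X^+$ into intervals of width $\varepsilon$ in the proof of Theorem~\ref{thm:expect utility}; your scale-conversion inequality substitutes for the paper's device of keeping losses and gains at the common scale $\lambda\varepsilon$; and your fixed shift by $-1$ replaces the right-continuous modification. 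The paper's route buys brevity (given its infrastructure) and exhibits the structural identity between shortfall risk measures and certainty equivalents; yours buys a self-contained argument that avoids both the abstract regularity theorem and the modification of the loss function.

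One step needs a small patch. After the shift by $-1$ the loss part is no longer nonnegative: on $\{-\delta<X<0\}$ and on $\{X=0\}$ the integrand $\ell(-\lambda X-1)$ can be as low as $\ell(-1)$, so ``the loss bound only degrades $\ell(\lambda\delta)$ to $\ell(\lambda\delta/2)$'' is not quite accurate; the correct estimate is
\[
\E[\ell(-\lambda X-1)] \ \ge\ \ell(\lambda\delta/2)\,\P(X\le-\delta)\;-\;|\ell(-1)|\;-\;K\,|\ell(-\lambda)|, \qquad \lambda\ge \max\{1,2/\delta\}.
\]
The extra constant is harmless, but absorbing it requires $\ell(\lambda\delta/2)\to\infty$, which you never record. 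Outside your degenerate case this follows from (c): if $\ell(-x_1)<0$ for some $x_1>0$, then $|\ell(-x)|\ge|\ell(-x_1)|>0$ for $x\ge x_1$, whence $\ell(x)= |\ell(-x)|\cdot\tfrac{\ell(x)}{|\ell(-x)|}\ge |\ell(-x_1)|\cdot\tfrac{\ell(x)}{|\ell(-x)|}\to\infty$; in the degenerate case $\ell(-1)=0$ and no constant appears. With this observation inserted, your argument is complete.
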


\smallskip

\begin{example}
Let $\ell(x) = e^{\gamma x} -1$ for $x\in\R$, where $\gamma\in(0,\infty)$ denotes the parameter of absolute risk aversion. In this case, we have for every $X\in L^1$
\[
\cR_\ell(X) = \frac{1}{\gamma}\log(\E[e^{-\gamma X}]),
\]
showing that $\cR_\ell$ is the well-known entropic risk measure. It follows directly from the proposition above that $\cR_\ell$ is sensitive to large losses on $L^1$.
\end{example}


\section{Examples II: Utility functionals}
\label{sec:examples utility}

In this section we turn to reviewing a variety of utility functionals and apply the results developed so far to determine whether they are sensitive to large losses or not. Once again, we assume that, for each $\varepsilon \in (0, 1)$, there exists $A \in \cF$ with $0 < \mathbb{P}(A) \leq  \varepsilon$ and test sensitivity to large losses on the two natural domains $L^1$ and $L^\infty$, which are equipped with their canonical topological lattice structure.

\subsection{Expected utility}

We begin our discussion by studying the expected utility functional $\cE_u(X) := \E[u(X)]$, where $u:\mathbb{R} \to [-\infty,\infty)$ is a utility function, i.e., it is increasing and satisfies $u(0) = 0$ as well as 
$\limsup_{x \to \infty} \tfrac{u(x)}{x} < \infty$, cf.~Example~\ref{ex: VaR, ES, Eu}. The next theorem shows that whether expected utility is sensitive to large losses or not depends only on the ratio of the tails of the underlying utility function. More precisely, sensitivity to large losses is fulfilled whenever the loss tail asymptotically dominates the gain tail. This simple criterion can be easily checked for many concrete examples of utility functions. For instance, it follows immediately that expected utility is sensitive to large losses for many common concave utility functions including exponential, logarithmic, quadratic and power utility. As illustrated below, the result yields a simple criterion also for nonconcave utility functions.

\begin{theorem}
\label{thm:expect utility}
Let $u$ be a utility function that is negatively star-shaped (e.g.~concave) on $\R_+$. 
The following are equivalent:\footnote{Here, we agree that $\frac{u(-x)}{u(x)} =0$ if $u(-x) = u(x) =0$.
}
\begin{enumerate}
    \item[(a)] $\cE_u$ is sensitive to large losses on $L^1$.
    \item[(b)] $\cE_u$ is sensitive to large losses on $L^\infty$.
    \item[(c)] $\limsup_{x\to\infty}\frac{u(-x)}{u(x)}=-\infty$.
\end{enumerate}
\end{theorem}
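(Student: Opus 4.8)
The plan is to prove the cycle of implications (a) $\Rightarrow$ (b) $\Rightarrow$ (c) $\Rightarrow$ (a). The implication (a) $\Rightarrow$ (b) is immediate because $L^\infty\subset L^1$, so sensitivity to large losses on the larger domain $L^1$ forces it on $L^\infty$. The real content lies in (c) $\Rightarrow$ (a) and in the converse (b) $\Rightarrow$ (c), and it is worth stressing that only the former will use negative star-shapedness, whose sole purpose is to control the gain contribution.

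For the key implication (c) $\Rightarrow$ (a), I would first record the analytic consequence of negative star-shapedness on $\R_+$: the map $t\mapsto u(t)/t$ is decreasing on $(0,\infty)$. This yields, for $\lambda>0$, the two bounds $u(\lambda x)\le x\,u(\lambda)$ for $x\ge 1$ and $u(\lambda)\le\varepsilon^{-1}u(\lambda\varepsilon)$ for $\varepsilon\in(0,1)$. Now fix $X\in L^1$ with $\probp(X<0)>0$ and choose $\varepsilon\in(0,1)$ with $p:=\probp(X\le-\varepsilon)>0$. Splitting $\cE_u(\lambda X)=\E[u(\lambda X)\ind_{\{X>0\}}]+\E[u(\lambda X)\ind_{\{X<0\}}]$ (the set $\{X=0\}$ contributes $u(0)=0$), I would bound the gain term by $\E[u(\lambda X)\ind_{\{X>0\}}]\le K\,u(\lambda)\le (K/\varepsilon)\,u(\lambda\varepsilon)$, where $K:=\probp(0<X<1)+\E[X\ind_{\{X\ge1\}}]<\infty$ since $X\in L^1$, using monotonicity of $u$ on $\{0<X<1\}$ and the first bound on $\{X\ge1\}$. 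For the loss term, monotonicity of $u$ gives $\E[u(\lambda X)\ind_{\{X<0\}}]\le p\,u(-\lambda\varepsilon)$. Writing $y:=\lambda\varepsilon\to\infty$, this yields
\[
\cE_u(\lambda X)\le \tfrac{K}{\varepsilon}u(y)+p\,u(-y).
\]
If $u(\infty):=\lim_{y\to\infty}u(y)>0$, then $u(y)>0$ for large $y$ and I factor out $u(y)$ to write the right-hand side as $u(y)\big(K/\varepsilon+p\,u(-y)/u(y)\big)$; condition (c) forces the bracket to $-\infty$, so the product tends to $-\infty$. The remaining case $u(\infty)=0$ forces $u\equiv 0$ on $\R_+$ (an increasing function with $u(0)=0$ and vanishing limit must vanish), so the gain term drops out and (c) guarantees that $u(-y)$ stays bounded away from $0$, making the bound negative for large $y$. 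In all cases $\cE_u(\lambda X)<0$ for $\lambda$ large enough, which is exactly sensitivity to large losses on $L^1$.

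For (b) $\Rightarrow$ (c) I would argue by contraposition. If (c) fails, then $\limsup_{x\to\infty}u(-x)/u(x)>-\infty$, so there are $M>0$ and $x_n\to\infty$ with $u(-x_n)\ge -M\,u(x_n)$ and $u(x_n)>0$ (the degenerate subcase $u(x_n)=u(-x_n)=0$ along a sequence forces $u\equiv 0$, hence trivial insensitivity). Invoking the standing assumption of the section, I pick $A\in\cF$ with $0<\probp(A)<\tfrac{1}{1+M}$ and set $X:=\ind_{A^c}-\ind_A\in L^\infty$, so that $\probp(X<0)=\probp(A)>0$. Then along $\lambda=x_n$,
\[
\cE_u(x_nX)=\probp(A^c)u(x_n)+\probp(A)u(-x_n)\ge u(x_n)\big(\probp(A^c)-M\probp(A)\big)>0,
\]
so $\cE_u(\lambda X)$ does not stay negative for all large $\lambda$, and $\cE_u$ fails to be sensitive to large losses on $L^\infty$. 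The main obstacle of the whole argument is the gain estimate in (c) $\Rightarrow$ (a): because $X\in L^1$ may carry an arbitrarily heavy positive tail, one must both linearize $u$ against the integrable weight $X$ and rescale $u(\lambda)$ down to the common scale $u(\lambda\varepsilon)$ of the loss term, so that the single-variable hypothesis (c) can be applied; negative star-shapedness is precisely what makes both steps possible.
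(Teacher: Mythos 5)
Your proof is correct and follows essentially the same route as the paper: the trivial inclusion for (a) $\Rightarrow$ (b), the same two-point counterexample $X=\ind_{A^c}-\ind_A$ with $\probp(A)$ small for (b) $\Rightarrow$ (c), and for (c) $\Rightarrow$ (a) the same gain/loss split in which negative star-shapedness is used to linearize the gain term against the integrable weight $X^+$ before comparing it with the loss term via (c). The only differences are ones of execution rather than substance: you exploit the decreasing-slope bound $u(\lambda x)\le x\,u(\lambda)$ directly where the paper discretizes $X^+$ into $\varepsilon$-slabs and applies star-shapedness at integer multiples, and you treat the degenerate cases ($u\equiv 0$ on $\R_+$, possible division by $u(y)=0$) more explicitly than the paper does.
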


\begin{remark}
An inspection of the proof shows that the implication ``(b) $\Rightarrow$ (c)'' is true without the assumption that $u$ is negatively star-shaped on $\R_+$.
\end{remark}

\smallskip

The following example shows that expected utility may fail to be sensitive to large losses even if the chosen utility function is concave and that, in the case of power-type $S$-shaped utility functions, sensitivity to large losses may or may not hold depending on the tail behaviour.

\begin{example}
(i) In general, a concave utility function does not imply sensitivity to large losses, the simplest example being given by the linear utility function $u(x)=x$ for $x\in\R$.

\smallskip

(ii) Let $\alpha,\beta\in(0,1)$. Let $u : \R \to \R$ be the $S$-shaped utility function given by
\begin{equation*}
u(x) = 
\begin{cases}
x^\alpha & \text{if } x \geq 0, \\
-(-x)^\beta & \text{if } x < 0.
\end{cases}
\end{equation*}
Theorem \ref{thm:expect utility} shows that $u$ is sensitive to large losses if and only if $\alpha < \beta$, i.e., it is less risk-seeking for losses than risk-averse for gains.
\end{example}

\subsection{Certainty equivalent}
In this section we consider different types of certainty equivalents, which qualify as special utility functionals based on the expected utility functional from the previous section.
\subsubsection{The classical certainty equivalent}

For a utility function $u$, we define its generalized inverse $u^{-1}:  u(\R) \to [-\infty, \infty)$ by
\begin{equation*}
    u^{-1}(y) := \inf\{ x \in \R : u(x) \geq y \}.
\end{equation*}
The \emph{(classical) certainty equivalent} of a random variable $X \in L^1$ is the least sure payoff $\cC_u(X)$ that leaves the decision maker at least as satisfied as with $X$, i.e.,
\begin{equation*}
    \cC_u(X):= u^{-1}(\mathbb{E}[u(X)]).
\end{equation*}
The functional $\cC_u$ is a utility functional in our general sense if, and only if, $u(x) < 0$ for every $x \in (-\infty, 0)$. We can fully characterize when $\cC_u$ is sensitive to large losses for a large class of utility functions. Not surprisingly, the characterization is identical to the one obtained for expected utility. This is because, as stated in the proof, the certainty equivalent is sensitive to large losses precisely when the associated expected utility is.

\begin{proposition}
\label{prop:Cu SLL}
Let $u$ be a utility function that is negatively star-shaped (e.g.~concave) on $\R_+$, left-continuous at $0$ and $u(x) < 0$ for every $x \in (-\infty, 0)$. The following statements are equivalent:
\begin{enumerate}
    \item[(a)] $\cC_u$ is sensitive to large losses on $L^1$.
    \item[(b)] $\cC_u$ is sensitive to large losses on $L^\infty$.
    \item[(c)] $\limsup_{x\to\infty}\frac{u(-x)}{u(x)}=-\infty$.
\end{enumerate}
\end{proposition}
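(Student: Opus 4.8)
The plan is to reduce everything to Theorem~\ref{thm:expect utility} for the expected utility functional $\cE_u$, by establishing the pointwise sign equivalence $\cC_u(X) < 0 \iff \cE_u(X) < 0$ for every $X \in L^1$. Once this is in place, sensitivity to large losses of $\cC_u$ and of $\cE_u$ coincide on any domain, and the equivalence of (a), (b), (c) transfers immediately from the corresponding statements for $\cE_u$.

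First I would extract the behaviour of the generalized inverse $u^{-1}$ near $0$. Since $u$ is increasing with $u(x) < 0$ for all $x < 0$ and $u(0) = 0$, left-continuity at $0$ yields $\sup_{x<0} u(x) = \lim_{x \to 0^-} u(x) = 0$. I claim this gives, for every $y \in [-\infty,\infty)$, the equivalence $u^{-1}(y) < 0 \iff y < 0$. For the forward direction, if $u^{-1}(y) = \inf\{x : u(x) \geq y\} < 0$, then some $x < 0$ satisfies $u(x) \geq y$, whence $y \leq u(x) < 0$. For the reverse, if $y < 0$ then, because $u(x) \to 0 > y$ as $x \to 0^-$, there is some $x < 0$ with $u(x) > y$, so the set $\{x : u(x) \geq y\}$ contains negative points and hence $u^{-1}(y) < 0$; the case $y = -\infty$ is immediate since then $u^{-1}(y) = -\infty$.

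With this in hand, the sign equivalence is direct: setting $y := \cE_u(X) = \E[u(X)] \in [-\infty,\infty)$ and recalling $\cC_u(X) = u^{-1}(y)$, the previous step gives $\cC_u(X) < 0$ precisely when $\cE_u(X) < 0$. Consequently, for any $\cS \subset \cX$, any $X \in \cS$ with $\probp(X<0)>0$, and any $\lambda \in (0,\infty)$, we have $\cC_u(\lambda X) < 0 \iff \cE_u(\lambda X) < 0$; thus $\cC_u$ is sensitive to large losses on $\cS$ if and only if $\cE_u$ is. Applying Theorem~\ref{thm:expect utility}, whose hypothesis that $u$ is a utility function negatively star-shaped on $\R_+$ is part of our assumptions, to $\cS = L^1$ and $\cS = L^\infty$ then closes the argument.

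The main obstacle is the careful treatment of the generalized inverse at $0$, i.e., the equivalence $u^{-1}(y) < 0 \iff y < 0$. This is exactly where the two extra hypotheses on $u$ enter and cannot be dropped: strict negativity $u(x) < 0$ for $x < 0$ is needed so that a nonnegative $y$ cannot be attained on the negative axis, forcing $\cC_u(X) \geq 0$ whenever $\cE_u(X) \geq 0$, while left-continuity at $0$ rules out a downward jump of $u$ at $0$ that would otherwise produce values $y < 0$ with $u^{-1}(y) = 0$, breaking the implication $\cE_u(X) < 0 \Rightarrow \cC_u(X) < 0$. The borderline value $y = 0$ and the degenerate case $y = -\infty$ should be checked explicitly, but beyond this the result is a direct transfer to the already-established expected-utility theorem.
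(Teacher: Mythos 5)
Your proposal is correct and follows essentially the same route as the paper: the paper's proof likewise reduces to Theorem~\ref{thm:expect utility} by noting that $\cC_u(X)\ge 0$ if and only if $\cE_u(X)\ge 0$ for each $X\in L^1$, though it leaves this sign equivalence unproved, whereas you supply the careful analysis of $u^{-1}$ near $0$ (correctly isolating where left-continuity at $0$ and strict negativity of $u$ on $(-\infty,0)$ are needed). No gaps.
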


\begin{remark}
Proposition \ref{prop:Cu SLL} may fail without the assumption that $u$ is left-continuous at $0$. For example if $u(x)=-1$ for $x < 0$ and $u(x) = 0$ for $x > 0$, then the limit condition in point (c) holds but, for every $X\in L^1$ with $\mathbb{P}(X < 0) < 1$, we have $\cC_u(X) = 0$.
\end{remark}

\subsubsection{The u-mean certainty equivalent}

Based on the ``principle of zero utility'', for a strictly increasing utility function $u$, B{\"u}hlmann \cite{buhlmann2007mathematical} defined the \emph{$u$-mean certainty equivalent} $\cM_u(X)$ of a random variable $X \in L^1$ as the unique solution to the equation 
\begin{equation*}
    \mathbb{E}[u(X-\cM_u(X))] = 0.
\end{equation*}
A natural extension of this definition (see, for example, \cite[Section 5.2]{ben2007old}) that can be applied to any utility function $u$ is 
\begin{equation*}
    \cM_u(X):= \sup \{ m \in \R \,; \ \mathbb{E}[u(X-m)] \geq 0 \}.
\end{equation*}
This is a utility functional in our generalised sense if, and only if, $u(x) < 0$ for every $x \in (-\infty, 0)$. Contrary to the case of classical certainty equivalents, it is not clear to what extent sensitivity to large losses of $u$-mean certainty equivalents and $u$-based expected utility are related to each other. The next result, whose proof is somewhat delicate, shows that $\cM_u$ is sensitive to large losses precisely when the associated expected utility $\cE_u$ is, making the verification of this property solely based on the tail behaviour of the utility function at both ends of its spectrum.

\begin{proposition}
\label{prop:Mu SLL}
Let $u$ be a utility function that is negatively star-shaped (e.g.\ concave) on $\mathbb{R}_+$ and assume $u(x) < 0$ for every $x \in (-\infty,0)$. The following are equivalent:
\begin{enumerate}
    \item[(a)] $\cM_u$ is sensitive to large losses on $L^1$.
    \item[(b)] $\cM_u$ is sensitive to large losses on $L^\infty$.
    \item[(c)] $\limsup_{x \to \infty} \tfrac{u(-x)}{u(x)} = -\infty$.
\end{enumerate}
\end{proposition}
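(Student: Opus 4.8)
The plan is to reduce everything to the corresponding statement for expected utility in Theorem~\ref{thm:expect utility}, exploiting the elementary observation that, since $m\mapsto\E[u(Z-m)]$ is non-increasing, the set $\{m\in\R \,; \ \E[u(Z-m)]\ge0\}$ is an interval with right endpoint $\cM_u(Z)$. Consequently, for every $Z\in L^1$,
\[
\cM_u(Z)<0 \quad\Longleftrightarrow\quad \E[u(Z+\delta)]<0 \ \text{ for some } \delta>0 .
\]
I would use this characterisation throughout and prove the cycle (a)$\Rightarrow$(b)$\Rightarrow$(c)$\Rightarrow$(a). The implication (a)$\Rightarrow$(b) is immediate because $L^\infty\subset L^1$, so sensitivity on the larger domain forces it on the smaller one. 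For (b)$\Rightarrow$(c), since $u$ is increasing we have $\E[u(Z)]\le\E[u(Z+\delta)]$, so the displayed equivalence gives $\cM_u(Z)<0\Rightarrow\cE_u(Z)<0$ for every $Z$. Hence sensitivity to large losses of $\cM_u$ on $L^\infty$ forces the same for $\cE_u$ on $L^\infty$, and Theorem~\ref{thm:expect utility} (the implication (b)$\Rightarrow$(c) there, which does not need negative star-shapedness) yields condition (c).

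The substance is (c)$\Rightarrow$(a). Fix $X\in L^1$ with $\P(X<0)>0$ and choose $\varepsilon>0$ with $p:=\P(X\le-\varepsilon)>0$. I would show directly that $\E[u(\lambda X+1)]<0$ for all large $\lambda$, which by the equivalence above gives $\cM_u(\lambda X)<0$ and hence (a). Splitting the expectation over $\{X\le-\varepsilon\}$, $\{-\varepsilon<X\le0\}$ and $\{X>0\}$, the first block is at most $p\,u(-\lambda\varepsilon/2)$ once $\lambda\varepsilon\ge2$, and the middle block is at most $u(1)$. The crux is the gain block $\E[u(\lambda X+1)\ind_{\{X>0\}}]$. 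Here I would use that negative star-shapedness on $\R_+$ makes $t\mapsto u(t)/t$ non-increasing on $(0,\infty)$, which yields the pointwise bound $u(\lambda X+1)\le u(\lambda\varepsilon/2)(1+\tfrac{\lambda X+1}{\lambda\varepsilon/2})$ on $\{X>0\}$; integrating and using $\E[X^+]<\infty$ (this is exactly where $X\in L^1$ enters) bounds the gain block by $K\,u(\lambda\varepsilon/2)$ for a finite constant $K$ independent of $\lambda\ge1$. Condition (c) then says $u(\lambda\varepsilon/2)=o(|u(-\lambda\varepsilon/2)|)$, so the gain block is negligible compared with the loss block, and collecting terms gives $\E[u(\lambda X+1)]\le u(1)-\tfrac{p}{2}\,|u(-\lambda\varepsilon/2)|$ for all large $\lambda$.

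To finish I would check that this upper bound is eventually negative. If $u\equiv0$ on $\R_+$ then $u(1)=0$ and the bound is already strictly negative; otherwise $u(x_*)>0$ for some $x_*>0$, and then (c) together with $u(x)\ge u(x_*)>0$ for $x\ge x_*$ forces $|u(-\lambda\varepsilon/2)|\to\infty$, so that $u(1)-\tfrac{p}{2}|u(-\lambda\varepsilon/2)|\to-\infty$. Either way $\E[u(\lambda X+1)]<0$ for large $\lambda$, completing (c)$\Rightarrow$(a). I expect the main obstacle to be precisely this gain-block estimate: one must control the positive tail of $u$ composed with an \emph{unbounded} $X\in L^1$ and show it is dominated by the loss tail, for which the crude linear bound coming from non-superlinearity fails and the sharper non-increasing-ratio bound together with (c) is essential. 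A secondary subtlety, absent for $\cE_u$, is that unacceptability of $\cM_u$ requires strict negativity with a genuinely positive cushion $\delta$, which is why I work with $\E[u(\lambda X+1)]$ rather than with $\E[u(\lambda X)]$.
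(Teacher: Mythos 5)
Your proposal is correct, and for the decisive implication (c)$\Rightarrow$(a) it takes a genuinely different route from the paper. The easy parts agree: (a)$\Rightarrow$(b) is trivial in both, and your (b)$\Rightarrow$(c), via the monotonicity implication $\cM_u(Z)<0\Rightarrow\cE_u(Z)<0$, is the contrapositive of exactly what the paper does (if (c) fails, Theorem~\ref{thm:expect utility} yields $X\in L^\infty$ and $\lambda_n\to\infty$ with $\E[u(\lambda_n X)]\ge 0$, whence $\cM_u(\lambda_n X)\ge 0$). For (c)$\Rightarrow$(a), however, the paper does not argue directly: it passes to the right-continuous modification $\tilde u\ge u$, verifies that $\cE_{\tilde u}$ is upper regular (a dominated/monotone convergence argument, which is where right-continuity is needed), and then combines the abstract transfer result Theorem~\ref{the: sensitivity via acceptability v2} with Theorem~\ref{thm:expect utility}(c)$\Rightarrow$(a), concluding via $\cM_u\le\cM_{\tilde u}$. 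You instead build the cash cushion into the estimate from the outset: showing $\E[u(\lambda X+1)]<0$ for all large $\lambda$ gives $\cM_u(\lambda X)\le -1$ by your interval observation, and your three-block estimate is sound --- the pointwise bound $u(\lambda X+1)\le u(\lambda\varepsilon/2)\bigl(1+\tfrac{\lambda X+1}{\lambda\varepsilon/2}\bigr)$ on $\{X>0\}$ follows from the non-increasing ratio $t\mapsto u(t)/t$ on $(0,\infty)$ together with $u\ge 0$ on $\R_+$, the constant $K$ is uniform over $\lambda\ge 1$ because $\E[X^+]<\infty$, and your dichotomy ($u\equiv 0$ on $\R_+$ versus $u(x_*)>0$ for some $x_*>0$) correctly handles the degenerate case in which (c) carries no growth information about $|u(-x)|$. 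What each approach buys: yours is self-contained and sidesteps the discontinuity issue entirely, since the fixed cushion $\delta=1$ makes any jump of $u$ harmless, so neither the modification $\tilde u$ nor the regularity theorem is needed; the paper's route is shorter given its general machinery and exploits the structural fact that $\cM_u$ is the cash-additive utility functional associated with the acceptance set of $\cE_u$, which is precisely what Theorem~\ref{the: sensitivity via acceptability v2} is designed for. Your closing remark is also apt: the crude linear bound from non-superlinearity would not suffice, because (c) does not force $|u(-x)|$ to grow superlinearly, and this is exactly the role negative star-shapedness plays in the paper's own estimate for Theorem~\ref{thm:expect utility}.
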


\subsubsection{The optimized certainty equivalent}

In \cite{ben1986expected}, another certainty equivalent was introduced, called the \emph{optimized certainty equivalent}, which was further studied by the same authors in \cite{ben2007old}. Here, we consider the same concept but in far greater generality since we do not require $u$ to be concave (or even star-shaped). Given a utility function $u$ such that $u(x) \leq x$ for every $x \in \R$, the optimized certainty equivalent of a random variable $X\in L^1$ is defined as
\begin{equation*}
\OCE_{u}(X) := \sup_{\eta \in \R} \{ \eta + \mathbb{E}[u(X - \eta)] \}.
\end{equation*}
This is a cash-additive utility functional in the sense of our general definition (the property $u(x) \leq x$ for every $x \in \R$ ensures $\OCE_u$ is well-defined and normalised). Moreover, it is “sandwiched” between the expectation and expected utility, i.e., for $X \in L^1$,
\begin{equation*}
    \mathbb{E}[X] = \sup_{\eta \in \R} \{ \eta + \mathbb{E}[X - \eta] \} \geq \OCE_u(X) \geq 0 + \mathbb{E}[u(X - 0)] = \mathcal{E}_u(X).
\end{equation*}
The interpretation of the optimized certainty equivalent is as follows: a decision maker expects a future uncertain income $X$, and can consume part of $X$ at present. If they choose to consume $\eta$, then the resulting present value is $\eta + \mathbb{E}[u(X-\eta)]$. Thus, the optimized certainty equivalent is the result of an optimal allocation of $X$ between present and future consumption. The next result establishes exactly when it is sensitive to large losses. Interestingly enough, the asymptotic condition on the utility function that turns out to be equivalent to sensitivity to large losses is different from the one we have encountered thus far. In fact, as illustrated below, it is strictly stronger. This implies that sensitivity to large losses of the optimized certainty equivalent is sufficient, but generally not necessary, for sensitivity to large losses of expected utility or of the classical and $u$-mean certainty equivalent.

\begin{theorem}
\label{theo: OCE}
Let $u$ be a utility function that satisfies $u(x) \leq x$ for $x \in \R$. The following are equivalent:
\begin{enumerate}
    \item[(a)] $\OCE_u$ is sensitive to large losses on $L^1$.
        \item[(b)] $\OCE_u$ is sensitive to large losses on $L^\infty$.
    \item[(c)] $\liminf_{x\to-\infty}\tfrac{u(x)}{x} = \infty$ and $\limsup_{x\to\infty}\tfrac{u(x)}{x}=0$.
\end{enumerate}
\end{theorem}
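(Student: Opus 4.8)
The scheme is to prove the cyclic chain of implications, noting first that (a)$\Rightarrow$(b) is immediate, since $L^\infty\subset L^1$ and sensitivity to large losses on the larger domain restricts to the smaller one. The substance lies in (c)$\Rightarrow$(a) and in the contrapositive of (b)$\Rightarrow$(c). Since $u$ is only assumed to satisfy $u(x)\le x$ and is \emph{not} required to be (negatively) star-shaped, $\OCE_u$ need not be star-shaped, so I cannot appeal to Theorem~\ref{theo: cash-additive}; instead I argue directly from the definition $\OCE_u(\lambda X)=\sup_{\eta\in\R}\{\eta+\E[u(\lambda X-\eta)]\}$.

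For (c)$\Rightarrow$(a), I first extract two affine majorants of $u$ from the limit conditions. From $\limsup_{x\to\infty}u(x)/x=0$ (together with $u$ increasing and $u(0)=0$, which forces $u\ge0$ on $\R_+$ and hence $\lim_{x\to\infty}u(x)/x=0$) I obtain, for each $\delta\in(0,1)$, a constant $C_\delta\ge0$ with $u(y)\le\delta y^+ + C_\delta$ for all $y\in\R$. From $\liminf_{x\to-\infty}u(x)/x=\infty$ I obtain, for each $K>0$, a constant $D_K\ge0$ with $u(y)\le Ky+D_K$ for all $y\le0$. Now fix $X\in L^1$ with $\probp(X<0)>0$ and choose $\varepsilon>0$ with $p:=\probp(X\le-\varepsilon)>0$. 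The plan is to bound $g(\eta):=\eta+\E[u(\lambda X-\eta)]$ \emph{uniformly in} $\eta$, splitting at $\eta=-\lambda\varepsilon$. On $\{\eta\ge-\lambda\varepsilon\}$, the event $B:=\{X\le-\varepsilon\}$ satisfies $\lambda X-\eta\le0$, so I apply the steep majorant $u\le Ky+D_K$ there and the flat majorant $u\le\delta y^++C_\delta$ on $B^c$; using $\E[X\ind_B]\le-\varepsilon p$ and $(\lambda X-\eta)^+\le\lambda X^++\eta^-$ this yields
\[
g(\eta)\le \eta(1-Kp)+\delta\eta^- +\lambda\big(\delta\E[X^+]-K\varepsilon p\big)+(D_Kp+C_\delta).
\]
On $\{\eta<-\lambda\varepsilon\}$ the flat majorant alone gives $g(\eta)\le\eta(1-\delta)+\delta\lambda\E[X^+]+C_\delta$. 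Choosing $K>1/p$ (so $1-Kp<0$) and then $\delta\in(0,1)$ small enough that $\delta\E[X^+]<\varepsilon(1-\delta)$, a short case check on the sign of $\eta$ shows that in both regions the $\eta$-dependent part is dominated and the coefficient of $\lambda$ is a strictly negative constant $c_1$. Hence $\OCE_u(\lambda X)=\sup_\eta g(\eta)\le c_1\lambda+c_2\to-\infty$, which gives sensitivity to large losses on $L^1$.

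For (b)$\Rightarrow$(c) I argue by contraposition: if (c) fails then at least one of the two limit statements fails, and in each case I exhibit $X\in L^\infty$ with $\probp(X<0)>0$ and a sequence $\lambda_n\to\infty$ with $\OCE_u(\lambda_n X)\ge0$, contradicting sensitivity on $L^\infty$. I use two-valued positions $X=c\ind_{A^c}-d\ind_A$ with $\probp(A)=q$ and a single well-chosen $\eta$ (lower-bounding the supremum). If $\limsup_{x\to\infty}u(x)/x=\kappa>0$, I pick $x_n\to\infty$ with $u(x_n)\ge(\kappa/2)x_n$, take $\lambda_n:=x_n/(c+d)$ and $\eta=-\lambda_n d$, so the loss state contributes $u(0)=0$ and the gain state contributes $u(x_n)$; the bound becomes $\OCE_u(\lambda_n X)\ge\lambda_n\big[(1-q)(\kappa/2)(c+d)-d\big]$, which is nonnegative once $d$ and $q$ are small. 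If instead $\liminf_{x\to-\infty}u(x)/x=L<\infty$, I pick $x_n\to-\infty$ with $u(x_n)\ge(L+1)x_n$, take $\lambda_n:=|x_n|/(c+d)$ and $\eta=-\lambda_n d-x_n$ (so the loss-state argument is exactly $x_n$ and the gain-state argument is $0$); using $u\ge0$ on $\R_+$ and $u(x_n)\ge(L+1)x_n$ the bound becomes $\OCE_u(\lambda_n X)\ge|x_n|\big[(1-q(L+1))-\tfrac{d}{c+d}\big]$, again nonnegative for $q,d$ small, with $\lambda_n\to\infty$.

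The main obstacle is the uniform-in-$\eta$ estimate in (c)$\Rightarrow$(a): because the optimized certainty equivalent takes a supremum over the consumption level $\eta$ and is not star-shaped, one cannot reduce to the recession functional, and a naive bound using only $u\le\delta y^++C_\delta$ fails to control large $\eta$ (it even suggests $g(\eta)\to+\infty$). The resolution is the split at $\eta=-\lambda\varepsilon$ together with the two-sided majorant structure, which is precisely what lets both the superlinear loss penalty (the condition at $-\infty$) and the sublinear gain ceiling (the condition at $+\infty$) enter; the delicate point is the joint choice of $K>1/p$ and small $\delta$ that makes the coefficient of $\lambda$ strictly negative in every $\eta$-regime simultaneously.
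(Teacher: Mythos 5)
Your proof is correct, and its substantive part --- the implication (c)$\Rightarrow$(a) --- takes a genuinely different route from the paper. For (a)$\Rightarrow$(b) and (b)$\Rightarrow$(c) you do essentially what the paper does: contraposition, two-valued positions, and a single well-chosen $\eta$ to lower-bound the supremum (the paper's choices $\eta=\lambda_n$ with $X-1=-2\ind_A$, respectively $\eta=-\lambda_n$ with $X+1=\tfrac{2}{b^2}\ind_B$, are just a reparametrization of your shifts that put one state at $0$ and the other at $x_n$). For (c)$\Rightarrow$(a), however, the paper does not argue directly: it first replaces $u$ by its concave hull $\tilde u$ (using that $\OCE$ is monotone in the utility function and that both asymptotic conditions survive the passage to the hull), and then proves the concave case in a separate lemma (Lemma \ref{lem:OCE concave utility}) by contradiction, progressively restricting the supremum over $\eta$ to a compact interval $[0,\overline{\eta}_X]$ and finally invoking the expected-utility result, Theorem \ref{thm:expect utility}. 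You instead bound $g(\eta)=\eta+\E[u(\lambda X-\eta)]$ uniformly in $\eta$ from scratch: the two affine majorants $u(y)\le Ky+D_K$ on $\R_-$ and $u(y)\le\delta y^++C_\delta$ on all of $\R$ are legitimate consequences of (c) together with monotonicity and normalization of $u$, and I checked that the split at $\eta=-\lambda\varepsilon$ with the choices $K>1/p$ and $\delta\E[X^+]<\varepsilon(1-\delta)$ does make the coefficient of $\lambda$ strictly negative in all three regimes ($\eta\ge 0$; $-\lambda\varepsilon\le\eta<0$, where the positive term $\eta(1-Kp-\delta)$ is absorbed because $|\eta|\le\lambda\varepsilon$; and $\eta<-\lambda\varepsilon$). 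What your approach buys: it is self-contained (no concave hull, no appeal to Theorem \ref{thm:expect utility}, no proof by contradiction) and it yields the quantitative conclusion $\OCE_u(\lambda X)\le c_1\lambda+c_2$ with $c_1<0$, i.e., linear divergence to $-\infty$, which is stronger than what sensitivity requires. What the paper's approach buys: the lemma's localization of the optimal $\eta$ to a compact interval is a structural fact about the OCE of concave utilities that may be useful elsewhere, and the reduction recycles the already-established expected-utility theorem instead of redoing the tail estimates.
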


\begin{remark}
(a) The asymptotic condition in (c) always implies that $\limsup_{x\to\infty}\tfrac{u(-x)}{u(x)} = -\infty$, which was proved to be equivalent to sensitivity to large losses of expected utility and of both the classical and the $u$-mean certainty equivalent. To see this, it suffices to observe that (c) yields
\[
\limsup_{x\to\infty}\tfrac{u(-x)}{u(x)} \le \limsup_{x\to\infty}\tfrac{u(-x)}{x}\limsup_{x\to\infty}\tfrac{x}{u(x)} = -\frac{\liminf_{x\to-\infty}\tfrac{u(x)}{x}}{\liminf_{x\to\infty}\tfrac{u(x)}{x}} = -\tfrac{\infty}{0} = -\infty.
\]

(b) Let $u$ be the utility function given by 
\begin{equation*}
u(x) = \begin{cases}
x^\alpha, & \text{if } x > 1, \\
x, & \text{if } 0 \leq x \leq 1, \\
\beta x, & \text{if } x < 0,
\end{cases}
\end{equation*}
where $0 < \alpha < 1 \leq \beta$. Then $\OCE_u$ fails to be sensitive to large losses whereas $\cE_u$ is sensitive to large losses. It also shows that (c) is generally not necessary for $\limsup_{x\to\infty}\tfrac{u(-x)}{u(x)} = -\infty$ to hold.
\end{remark}


\section{Conclusion and further research}
\label{sec:conclusion}

In this paper we have revisited the classical topics of tail risk and loss aversion by introducing the new property of {\em sensitivity to large losses} for risk and utility functionals. This property axiomatizes the requirement that a sufficiently large rescaling of a financial position that exposes an agent to losses should be marked by a positive risk or a negative utility. We have established a number of necessary and sufficient conditions for sensitivity to large losses, which highlight the broad applicability and implementability of this new axiom. Our examples cover standard and more recent risk functionals including (loss) Value at Risk and (adjusted) Expected Shortfall as well as utility functionals like expected utility and (optimized) certainty equivalents associated with both concave and nonconcave (including star- and $S$-shaped) utility functions.

\smallskip

Complementing the general results established in this paper, we envision two main directions of future research. Both of them have to do with the choice of the underlying domain of financial positions and yield important financial applications. In particular, when the chosen risk functionals are Value at Risk and Expected Shortfall, this has relevant implications for financial regulation.

\smallskip

First, it would be interesting to investigate in a more comprehensive way the dependence of sensitivity to large losses on the choice of the underlying domain. We already know that important risk or utility functionals such as Expected Shortfall or expected utility functionals may or may not be sensitive to large losses based on that choice. It is therefore a natural follow-up question to find the {\em maximal domain} of sensitivity to large losses for a given functional and construct an associated \emph{index of sensitivity to large losses} to allow for better comparison across different functionals. Here, the case of expected utility functionals seems particularly tractable: the index could be defined in terms of the asymptotic behaviour of the utility function as described in Theorem \ref{thm:expect utility}.

\smallskip

Second, it would be worthwhile to study sensitivity to large losses on {\em specific domains} of financial positions. On the one hand, this is interesting in itself, and on the other hand it reveals some deep connections between sensitivity to large losses and some fundamental problems in finance. For example, consider a one-period arbitrage-free frictionless financial market where a finite number of assets with terminal payoffs $S_1,\dots,S_N\in\cX$ are traded. Denote by $\cM$ the linear space spanned by the assets' payoffs and by $\pi:\cM\to\R$ a linear pricing functional. The set
\[
\cS=\{X\in\cM \,; \ \pi(X)\le0\}
\]
consists of all replicable payoffs that can be purchased at zero cost. Expected Shortfall is sensitive to large losses on $\cS$ precisely when every nonzero replicable payoff with zero cost has a strictly positive risk, i.e., $\ES_\alpha(X)>0$ for every nonzero $X\in\cS$. This condition can be interpreted as the absence of a weak form of arbitrage opportunity, which could be termed {\em good deal induced by Expected Shortfall} in the language of, e.g., \cite{ArducaMunari2023,BaesKochMunari2020,CernyHodges2002,JaschkeKuechler2001}. In a similar spirit, let $N=d+1$ and assume the first asset is risk-free with relative return $r\in(-1,\infty)$. The remaining $d$ assets are risky and their relative returns are denoted by $R_1,\dots,R_d\in\cX$. The set
\[
\cS = \left\{ X_\pi:=\sum_{i=1}^d\pi_i(R_i-r) \,; \ \pi=(\pi_1,\dots,\pi_d)\in\R^d \right\}
\]
consists of all excess returns over the risk-free rate. It turns out that Expected Shortfall is sensitive to large losses on $\cS$ if, and only if, there is no portfolio sequence $(\pi_n)\subset\R^d$ such that $\E[X_{\pi_n}]\uparrow\infty$ and $\ES_\alpha(X_{\pi_n})\le 0$ for every $n\in\N$. This can also be interpreted as the absence of a weak form of arbitrage opportunity, which is termed {\em $\rho$-arbitrage}, more specifically $\ES_\alpha$-arbitrage, in \cite{herdegen2022mean,HerdegenKhan2024}. 

\smallskip
These examples show that a thorough study of sensitivity to large losses can lead to the development of new, perhaps surprising, results in the field of (arbitrage) pricing and (mean-risk) portfolio selection. To this effect, a key step is to target a dual description of sensitivity to large losses where the dual variables play the role of special stochastic discount factors or risk-neutral probabilities thereby creating the link with the existing literature on good deal pricing and mean-risk portfolio selection.

\appendix

\section{Auxiliary results and proofs}
\label{sec:appendix}

\begin{proposition}
\label{prop: appendix}
Let $\cH :\cX \to(-\infty,\infty]$ be positively star-shaped. Then for all $X \in \cX$ and $\lambda_1,\lambda_2\in(0,\infty)$ with $\lambda_1<\lambda_2$ we have $\frac{\cH(\lambda_1 X)}{\lambda_1}\le\frac{\cH(\lambda_2 X)}{\lambda_2}$. In particular,
\[
\sup_{\lambda\in(0,\infty)}\tfrac{\cH(\lambda X)}{\lambda}=\sup_{\lambda\in(1,\infty)}\tfrac{\cH(\lambda X)}{\lambda}=\lim_{\lambda\to\infty}\tfrac{\cH(\lambda X)}{\lambda}.
\]
\end{proposition}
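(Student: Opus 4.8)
The plan is to reduce everything to the single inequality defining positive star-shapedness, applied not at $X$ but at a rescaled copy of $X$.

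First I would prove the monotonicity of the map $\lambda \mapsto \cH(\lambda X)/\lambda$. Given $0 < \lambda_1 < \lambda_2$, set $\mu := \lambda_2/\lambda_1 \in (1,\infty)$ and $Y := \lambda_1 X \in \cX$ (using that $\cX$ is a vector space). Applying positive star-shapedness to $Y$ with factor $\mu$ yields
\[
\cH(\lambda_2 X) = \cH(\mu Y) \ge \mu\, \cH(Y) = \tfrac{\lambda_2}{\lambda_1}\cH(\lambda_1 X).
\]
Dividing through by $\lambda_2 > 0$ gives $\cH(\lambda_1 X)/\lambda_1 \le \cH(\lambda_2 X)/\lambda_2$, which is the asserted inequality. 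The only point requiring a word of care is the extended-real arithmetic: if $\cH(\lambda_1 X) = \infty$, then $\mu\,\cH(\lambda_1 X) = \infty$ forces $\cH(\lambda_2 X) = \infty$, so both quotients equal $\infty$ and the inequality holds trivially; otherwise all quantities are finite and the manipulation is literal.

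Second, I would deduce the three-way equality. Writing $\phi(\lambda) := \cH(\lambda X)/\lambda$, the first part says precisely that $\phi$ is non-decreasing on $(0,\infty)$, and for such a function two standard facts suffice. (i) $\sup_{\lambda \in (0,\infty)}\phi(\lambda) = \sup_{\lambda \in (1,\infty)}\phi(\lambda)$: the inequality $\ge$ is trivial, while for $\le$ any value $\phi(\lambda_0)$ with $\lambda_0 \le 1$ is dominated by $\phi(2)$ by monotonicity, so trimming the domain to $(1,\infty)$ cannot lower the supremum. (ii) A non-decreasing $[-\infty,\infty]$-valued function on $(0,\infty)$ satisfies $\lim_{\lambda\to\infty}\phi(\lambda) = \sup_{\lambda\in(0,\infty)}\phi(\lambda)$, with the limit existing in $(-\infty,\infty]$ precisely because of monotonicity. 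Chaining (i) and (ii) produces the displayed identity.

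I do not anticipate a genuine obstacle: once the rescaling trick (applying the defining inequality at $\lambda_1 X$ rather than at $X$) is spotted, the proposition is immediate, and the ``in particular'' clause is just elementary monotone-function theory. The only mild subtlety is keeping track of the possibility $\cH(\lambda X) = \infty$, which I would handle by the explicit case distinction above rather than by any limiting argument.
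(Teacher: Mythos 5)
Your proof is correct and follows essentially the same route as the paper: the paper's argument is exactly the rescaling step $\cH(\lambda_2 X)=\cH(\tfrac{\lambda_2}{\lambda_1}\lambda_1 X)\ge\tfrac{\lambda_2}{\lambda_1}\cH(\lambda_1 X)$, from which monotonicity of $\lambda\mapsto\cH(\lambda X)/\lambda$ and the sup/limit identities follow. Your additional care with the value $\infty$ and the explicit monotone-function facts are fine elaborations of steps the paper leaves implicit.
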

\begin{proof}
Note that, by positive star-shapedness, for all $X \in \cH$ and $\lambda_1,\lambda_2\in(0,\infty)$ with $\lambda_1<\lambda_2$,
\[
\cH(\lambda_2 X) = \cH(\tfrac{\lambda_2}{\lambda_1}\lambda_1 X) \ge \tfrac{\lambda_2}{\lambda_1}\cH(\lambda_1 X).
\]
This yields the first statement, which in turn yields the other statement.
\end{proof}

\smallskip

{\bf Proof of Proposition~\ref{prop: first link with recession map}}. The equivalence between (a) and (b) follows from Proposition~\ref{prop: positively homogeneous} by positive homogeneity of $\cR^\infty$. To establish equivalence between (b) and (c), take any $X\in\cS$ with $\probp(X<0)>0$ and note that $\cR^\infty(X)>0$ if, and only if, $\cR(\lambda X)>0$ for some $\lambda\in(0,\infty)$.\hfill\qed

\medskip

{\bf Proof of Theorem~\ref{theo: recession functional}}. It is clear that (b) implies (c), and it follows from Proposition~\ref{prop: first link with recession map} that (c) and (d) are equivalent. By Proposition~\ref{prop: appendix}, for any $X\in\cX$ and $\lambda_1,\lambda_2\in(0,\infty)$ with $\lambda_1<\lambda_2$,
\begin{equation}
\label{eq: star-shaped}
\tfrac{\cR(\lambda_1X)}{\lambda_1} \le \tfrac{\cR(\lambda_2X)}{\lambda_2}.
\end{equation}
To show that (c) implies (a), let $X\in\cS$ with $\probp(X<0)>0$. By (c), there exists $\lambda_X\in(0,\infty)$ such that $\cR(\lambda_X X)>0$. Then \eqref{eq: star-shaped} gives $\cR(\lambda X)>0$ for every $\lambda\in(\lambda_1,\infty)$. Finally, to prove that (a) implies (b), assume that $\cR$ is sensitive to large losses on $\cS$. By assumption, for every $X\in\cS$ with $\probp(X<0)>0$, there exists $\lambda_1\in(0,\infty)$ such that $\cR(\lambda_1X)>0$. As a consequence of \eqref{eq: star-shaped},
\[
\sup_{\lambda\in(0,\infty)}\cR(\lambda X) \ge \sup_{\lambda\in(\lambda_1,\infty)}\big\{\tfrac{\cR(\lambda X)}{\lambda}\lambda\big\} \ge \sup_{\lambda\in(\lambda_1,\infty)}\big\{\tfrac{\cR(\lambda_1 X)}{\lambda_1}\lambda\big\} = \infty.
\]
This concludes the proof of the equivalence. As a last step, assume that (a) holds. In view of the equivalence between (a) and (d), to establish (e) it suffices to prove that, for fixed $X\in\cX$,
\begin{equation}
\label{eq: rho hat def}
\cR^\infty(X) \le \widehat{\cR}(X) := \sup_{Y\in\dom(\cR)}\{\cR(X+Y)-\cR(Y)\}.
\end{equation}
To this effect, assume with no loss of generality that $\widehat{\cR}(X)<\infty$. We show by induction that
\begin{equation}
\label{eq: rho hat}
\cR(nX)\le n\widehat{\cR}(X), \ \ \ \forall n\in\N.   
\end{equation}
The statement is clear when $n=1$ because $0\in\dom(\cR)$ and $\cR(X)\le\cR(X+0)-\cR(0)\le\widehat{\cR}(X)$. If it holds for some $n\in\N$, then $nX\in\dom(\cR)$ and thus
\[
\cR((n+1)X)=\cR(X+nX)\le\widehat{\cR}(X)+\cR(nX)\le(n+1)\widehat{\cR}(X).
\]
This yields~\eqref{eq: rho hat}. Take now an arbitrary $\lambda\in(0,\infty)$. We find $n\in\N$ and $\xi\in(0,1]$ such that $\lambda=\xi n$. It then follows from star-shapedness of $\cR$ and~\eqref{eq: rho hat} that
\[
\tfrac{\cR(\lambda X)}{\lambda} = \tfrac{\cR(\xi n X)}{\xi n} \le \tfrac{\xi\cR(nX)}{\xi n} \le \widehat{\cR}(X).
\]
Taking a supremum over $\lambda$ delivers~\eqref{eq: rho hat def} and concludes the proof.\hfill\qed

\medskip

{\bf Proof of Theorem~\ref{theo: convex}}. As every convex risk functional is star-shaped, we only have to prove that (e) implies (d) by Theorem \ref{theo: recession functional}. To this end, it suffices to show that, for every $X\in\cX$,
\begin{equation}
\label{eq: rho hat again}
\cR^\infty(X) \ge \widehat{\cR}(X) := \sup_{Y\in\dom(\cR)}\{\cR(X+Y)-\cR(Y)\}.
\end{equation}
To this effect, denote by $\cX^\ast$ the topological dual of $\cX$ and define the conjugate function $\cR^\ast:\cX^\ast\to(-\infty,\infty]$ by setting
\[
\cR^\ast(\psi) := \sup_{X\in\cX}\{\psi(X)-\cR(X)\}.
\]
Now, fix an arbitrary $X\in\cX$. As $\cR$ is convex and lower semicontinuous, it follows from the classical Fenchel-Moreau-Rockafellar representation, see, e.g., \cite[Theorem 2.3.3]{Zalinescu}, that
\begin{equation}
\label{eq: dual representation}
\cR(X) = \sup_{\psi\in\dom(\cR^\ast)}\{\psi(X)-\cR^\ast(\psi)\},
\end{equation}
where $\dom(\cR^\ast):=\{\psi\in\cX^\ast \,; \ \cR^\ast(\psi)<\infty\}$. Note that, for every $\psi\in\dom(\cR^\ast)$, we have $\cR^\ast(\psi)\ge\psi(0)-\cR(0)\ge0$. As a result, the dual representation in~\eqref{eq: dual representation} gives
\[
\cR^\infty(X) = \sup_{\lambda\in(0,\infty)}\tfrac{1}{\lambda}\sup_{\psi\in\dom(\cR^\ast)}\{\psi(\lambda X)-\cR^\ast(\psi)\} = \sup_{\psi\in\dom(\cR^\ast)}\sup_{\lambda\in(0,\infty)}\{\psi(X)-\tfrac{\cR^\ast(\psi)}{\lambda}\} = \sup_{\psi\in\dom(\cR^\ast)}\psi(X).
\]
Now, take any $Y\in\dom(\cR)$. Another application of~\eqref{eq: dual representation} yields
\begin{align*}
\cR^\infty(X)+\cR(Y) &= \sup_{\psi\in\dom(\cR^\ast)}\psi(X)+\sup_{\psi\in\dom(\cR^\ast)}\{\psi(Y)-\cR^\ast(\psi)\} \\ &\ge \sup_{\psi\in\dom(\cR^\ast)}\{\psi(X+Y)-\cR^\ast(\psi)\} = \cR(X+Y).
\end{align*}
Reshuffling and taking a supremum over $Y$ delivers~\eqref{eq: rho hat again} and concludes the proof.\hfill\qed

\medskip

{\bf Proof of Theorem~\ref{theo: cash-additive}}. If (b) holds, then $\cR^\infty$ is sensitive to large losses on $\cS$ and, hence, (a) holds by Theorem~\ref{theo: recession functional}. To prove that (a) implies (b), assume that $\cR$ is sensitive to large losses on $\cS$. As $\cR^\infty$ inherits monotonicity from $\cR$ and $\cR^\infty(0)=0$, we deduce that $\cX_+\subset\cA_{\cR^\infty}$. Conversely, again by Theorem~\ref{theo: recession functional}, we have $\cA_{\cR^\infty}\cap\cS\subset\cX_+$. As a result, $\cA_{\cR^\infty}\cap\cS=\cX_+\cap\cS$. As $\cR^\infty$ is cash-additive, for every $X\in\cS$,
\begin{align*}
\cR^\infty(X) &=
\inf\{m\in\R \,; \ X+m\in\cA_{\cR^\infty}\} = \inf\{m\in\R \,; \ X+m\in\cA_{\cR^\infty}\cap\cS\} \\
&= \inf\{m\in\R \,; \ X+m\in\cX_+\cap\cS\} = \inf\{m\in\R \,; \ X+m\in\cX_+\} = \esssup(-X),
\end{align*}
where we used that $\cS+\R\subset\cS$. This concludes the proof of the equivalence.\hfill\qed

\medskip

\begin{proposition}
\label{prop: rho A risk measure}
For a risk functional $\cR$ the following statements are equivalent:
\begin{enumerate}
    \item[(a)] $\cR_{\cA_\cR}$ is a risk functional.
    \item[(b)] $\cR(m)>0$ for $m\in(-\infty,0)$ and for any $X\in\cX$ we have $\cR(X+m)>0$ for some $m\in\R$.
\end{enumerate}
\end{proposition}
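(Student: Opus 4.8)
The plan is to reduce the assertion ``$\cR_{\cA_\cR}$ is a risk functional'' to the two conditions that are not automatic, and then match each of them to one clause of (b). Recall that $\cR_{\cA_\cR}(X) = \inf\{m\in\R \,; \ \cR(X+m)\le0\}$, with the convention $\inf\emptyset = +\infty$. First I would observe that $\cR_{\cA_\cR}$ is automatically decreasing: since $\cR$ is decreasing, its acceptance set $\cA_\cR$ is upward closed, so for $X\le Y$ one has $\{m \,; \ X+m\in\cA_\cR\}\subseteq\{m \,; \ Y+m\in\cA_\cR\}$ and hence $\cR_{\cA_\cR}(Y)\le\cR_{\cA_\cR}(X)$. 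Consequently, the only defining properties of a risk functional that remain to be checked are normalization, $\cR_{\cA_\cR}(0)=0$, and the requirement that $\cR_{\cA_\cR}$ never take the value $-\infty$ (the value $+\infty$ being permitted by the codomain $(-\infty,\infty]$).

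For normalization I would use that $m\mapsto\cR(m)$ is decreasing on $\R$ with $\cR(0)=0$, so $\{m\,; \ \cR(m)\le0\}$ is an up-set containing $[0,\infty)$. Its infimum equals $0$ exactly when it contains no negative constant, i.e.\ exactly when $\cR(m)>0$ for every $m\in(-\infty,0)$; otherwise the infimum is strictly negative. This identifies $\cR_{\cA_\cR}(0)=0$ with the first clause of (b). For finiteness from below I would fix $X\in\cX$ and note that $m\mapsto\cR(X+m)$ is again decreasing, so $\{m\,; \ \cR(X+m)\le0\}$ is an up-set. If some $m_0$ satisfies $\cR(X+m_0)>0$, then $\cR(X+m)\ge\cR(X+m_0)>0$ for all $m\le m_0$, so the set is bounded below by $m_0$ and $\cR_{\cA_\cR}(X)>-\infty$; conversely, if $\cR(X+m)\le0$ for every $m$, the set is all of $\R$ and $\cR_{\cA_\cR}(X)=-\infty$. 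Hence requiring $\cR_{\cA_\cR}(X)>-\infty$ for every $X$ is exactly the second clause of (b). Combining the two equivalences yields (a) $\iff$ (b).

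The argument is essentially bookkeeping, so there is no deep obstacle; the only points needing care are to keep the two failure modes of $\cR_{\cA_\cR}$ (loss of normalization versus attaining $-\infty$) cleanly separated and to handle the convention $\inf\emptyset=+\infty$ consistently. The conceptual engine is simply that monotonicity of $\cR$ simultaneously forces monotonicity of $\cR_{\cA_\cR}$ and the decreasing behaviour of $m\mapsto\cR(X+m)$ in the cash direction, which is precisely what makes each clause of (b) drop out.
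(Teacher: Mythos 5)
Your proof is correct and follows essentially the same route as the paper's: monotonicity of $\cR_{\cA_\cR}$ is automatic from monotonicity of $\cR$, normalization $\cR_{\cA_\cR}(0)=0$ is matched to the absence of acceptable negative constants (the first clause of (b)), and finiteness $\cR_{\cA_\cR}(X)>-\infty$ is matched to the existence of some $m$ with $\cR(X+m)>0$ (the second clause). The only difference is organizational—you establish two unconditional equivalences while the paper argues the two implications separately—but the ingredients are identical.
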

\begin{proof}
First, assume that (b) holds. As $\cA_\cR$ is monotone and contains $0$ by monotonicity and normalization of $\cR$, we easily deduce that $\cR_{\cA_\cR}+\cX_+\subset\cR_{\cA_\cR}$ and $\cR_{\cA_\cR}(0)\le0$. Since $\cA_\cR\cap(-\infty,0)=\emptyset$ by assumption, we in fact have $\cR_{\cA_\cR}(0)=0$. Finally, again by assumption, for every $X\in\cX$ there exists $m\in\R$ such that $\cR(X+m)>0$, whence $\cR_{\cA_\cR}(X)\ge m>-\infty$ by monotonicity of $\cR$. This shows that (a) holds. Conversely, suppose that (a) holds. Then, $\cR_{\cA_\cR}(0)=0$, showing that $\cA_\cR\cap(-\infty,0)=\emptyset$. Furthermore, for every $X\in\cX$ we have $\cR_{\cA_\cR}(X)>-\infty$, implying that one must find $m\in\R$ such that $X+m\notin\cA_\cR$. As a result, (b) holds.
\end{proof}

\medskip

{\bf Proof of Proposition~\ref{prop: sensitivity via acceptability preliminary}}. Take an arbitrary $X\in\cS$ with $\probp(X<0)>0$. If $\cR_{\cA_\cR}$ is sensitive to large losses on $\cS$, then we find $\lambda_X\in(0,\infty)$ such that each $\lambda\in(\lambda_X,\infty)$ satisfies $\cR_{\cA_\cR}(\lambda X)>0$, which implies $\lambda X\notin\cA_\cR$, whence $\cR(\lambda X)>0$. Then, $\cR$ is also sensitive to large losses on $\cS$.\hfill\qed

\medskip

{\bf Proof of Theorem~\ref{the: sensitivity via acceptability v2}}. By Proposition \ref{prop: sensitivity via acceptability preliminary}, we only have to prove that (a) implies (b). To this effect, assume that $\cR$ is regular. Take any $X\in\cS$ with $\probp(X<0)>0$. By assumption, we find $\lambda_X\in(0,\infty)$ such that $\cR(\lambda X)>0$ for every $\lambda\in(\lambda_X,\infty)$. If there existed $\lambda\in(\lambda_X,\infty)$ for which $\cR_{\cA_\cR}(\lambda X)\le0$, then we would find a sequence $(m_n)\subset\R$ and $m\in(-\infty,0]$ satisfying $m_n\to m$ and $\lambda X+m_n\in\cA_\cR$ for every $n\in\N$. However, monotonicity and lower regularity would entail
\[
0 < \cR(\lambda X) \le \cR(\lambda X+m) \le \liminf_{n\to\infty}\cR(\lambda X+m_n) \le 0, 
\]
which is clearly impossible. As a consequence, $\cR_{\cA_\cR}(\lambda X)>0$ for every $\lambda\in(\lambda_X,\infty)$, showing that $\cR_{\cA_\cR}$ is sensitive to large losses on $\cS$ and concluding the proof.\hfill\qed

\medskip

In the proof of Theorem~\ref{thm:sensitivity via acceptability v3} we use that the recession functional of a risk/utility functional defined through a set $\cA\subset\cX$ coincides with the risk/utility functional defined through its recession set
\[
\cA^\infty := \{X\in\cX \,; \ \lambda X\in\cA, \ \forall \lambda\in(0,\infty)\}.
\]

\begin{lemma}
\label{lem: recession functionals and cones}
Let $\cA\subset\cX$. For every $X\in\cX$, we have $(\cR_\cA)^\infty(X) = \cR_{\cA^\infty}(X)$.
\end{lemma}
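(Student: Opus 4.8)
The plan is to reduce the claimed identity to a single statement about a family of subsets of $\R$. Fix $X\in\cX$ and, for each $\lambda\in(0,\infty)$, set
\[
S_\lambda := \{k\in\R \,; \ \lambda(X+k)\in\cA\},
\]
identifying the real number $k$ with the corresponding constant position. First I would record two elementary reformulations. On the one hand, $X+m\in\cA^\infty$ holds precisely when $\lambda(X+m)\in\cA$ for every $\lambda$, i.e.\ when $m\in\bigcap_{\lambda>0}S_\lambda$; hence $\cR_{\cA^\infty}(X)=\inf\bigcap_{\lambda>0}S_\lambda$ straight from the definition of $\cR_{\cA^\infty}$. On the other hand, substituting $m=\lambda k$ (a bijection of $\R$ since $\lambda>0$) in $\cR_\cA(\lambda X)=\inf\{m \,; \ \lambda X+m\in\cA\}$ and pulling the positive factor through the infimum gives $\cR_\cA(\lambda X)=\lambda\inf S_\lambda$, so that $(\cR_\cA)^\infty(X)=\sup_{\lambda>0}\inf S_\lambda$. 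Thus the lemma is equivalent to the purely order-theoretic identity $\sup_{\lambda>0}\inf S_\lambda=\inf\bigcap_{\lambda>0}S_\lambda$.

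The inequality $\le$ is immediate and needs nothing about $\cA$: any $m$ in the intersection lies in every $S_\lambda$, so $\inf S_\lambda\le m$ for all $\lambda$, whence $\sup_\lambda\inf S_\lambda\le m$; taking the infimum over such $m$ yields $(\cR_\cA)^\infty(X)\le\cR_{\cA^\infty}(X)$ (when the intersection is empty the right-hand side is $+\infty$ and there is nothing to prove).

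The reverse inequality is the crux, and this is where I expect the only genuine obstacle to sit: in general, $\inf S_\lambda\le m$ for every $\lambda$ does \emph{not} force $m\in\bigcap_\lambda S_\lambda$. The property that repairs this is that $\cA$ is upward closed, which holds in the case of interest where $\cA=\cA_\cR$ is the acceptance set of the \emph{decreasing} functional $\cR$. Monotonicity of $\cA$ makes each $S_\lambda$ upward closed in $\R$: if $k\in S_\lambda$ and $k'\ge k$, then $\lambda(X+k')=\lambda(X+k)+\lambda(k'-k)\ge\lambda(X+k)\in\cA$, so $k'\in S_\lambda$. Writing $s:=\sup_\lambda\inf S_\lambda$, every $m>s$ then lies in the intersection: for each $\lambda$ one has $\inf S_\lambda\le s<m$, so some $k\in S_\lambda$ satisfies $k<m$, and upward closedness gives $m\in S_\lambda$. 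Hence $m\in\bigcap_\lambda S_\lambda$ for all $m>s$, so $\inf\bigcap_\lambda S_\lambda\le s$, i.e.\ $\cR_{\cA^\infty}(X)\le(\cR_\cA)^\infty(X)$. Combining the two inequalities gives the claim; the degenerate values $s=\pm\infty$ and the case of an empty $S_\lambda$ are all covered by the same argument under the usual conventions.

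A word on the hypotheses, which I would make explicit rather than silent: the reverse inequality really does use that $\cA$ is upward closed. For a non-monotone $\cA$ one can arrange the points of $\cA$ to sit on pairwise distinct rays so that each $S_\lambda$ collapses to a single point, making $\sup_\lambda\inf S_\lambda$ finite while $\bigcap_\lambda S_\lambda=\emptyset$ and $\cR_{\cA^\infty}(X)=+\infty$. So the single step to treat with care is the passage from ``small in every $S_\lambda$'' to ``small in the intersection'', and a clean write-up should flag that it rests on monotonicity of the acceptance set, not on any positive homogeneity or star-shapedness of $\cR_\cA$ itself.
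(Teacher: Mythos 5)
Your proof is correct and, modulo the $S_\lambda$ bookkeeping, it is the same argument as the paper's: the easy inequality $(\cR_\cA)^\infty(X)\le\cR_{\cA^\infty}(X)$ is proved exactly as you prove it, and the reverse inequality by showing that any $m>(\cR_\cA)^\infty(X)$ satisfies $X+m\in\cA^\infty$. The one genuine point of difference is your closing remark, and it is a correction to the paper rather than a deviation from it: the paper states the lemma for an arbitrary $\cA\subset\cX$, but its proof of the reverse inequality passes from $\cR_\cA(\lambda X)<\lambda m$ to $\lambda X+\lambda m\in\cA$, which is precisely the upward-closedness step you isolate. Without that hypothesis the statement is false; even simpler than your ray construction, take $\cA=\{-1\}$ and $X=0$, so that $\cR_\cA(\lambda X)=-1$ for every $\lambda$, giving $(\cR_\cA)^\infty(0)=0$, while $\cA^\infty=\emptyset$ and hence $\cR_{\cA^\infty}(0)=\infty$. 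Since the paper only ever applies the lemma with $\cA$ the acceptance set of a decreasing functional, which is stable under adding nonnegative constants, its conclusions are unaffected; but your formulation, which makes the monotonicity hypothesis explicit, is the accurate one.
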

\begin{proof}
Fix $X\in\cX$. We first show $(\cR_\cA)^\infty(X) \leq \cR_{\cA^\infty}(X)$.  To this end, take any $m\in\R$ with $X+m\in\cA^\infty$. Then, for every $\lambda\in(0,\infty)$, we have $\lambda X+\lambda m\in\cA$ and, thus, $\cR_\cA(\lambda X)\le\lambda m$, showing that $(\cR_\cA)^\infty(X)\le m$. Conversely, to prove $(\cR_\cA)^\infty(X) \geq \cR_{\cA^\infty}(X)$, take any $m\in\R$ with $m>(\cR_\cA)^\infty(X)$. For every $\lambda\in(0,\infty)$ we see that $\cR_\cA(\lambda X)<\lambda m$, which implies that $\lambda X+\lambda m\in\cA$, whence $X+m\in\cA^\infty$. This shows that $\cR_{\cA^\infty}(X)\le m$.
\end{proof}

\medskip

{\bf Proof of Theorem~\ref{thm:sensitivity via acceptability v3}}. Once again, by Proposition \ref{prop: sensitivity via acceptability preliminary}, we only have to prove that (a) implies (b). To this end, assume that $\cR$ is sensitive to large losses on $\cS$. As a preliminary remark, note that $\cR_{\cA_{\cR^\infty}}$ is a risk functional by Proposition~\ref{prop: rho A risk measure} because $\cR^\infty$ is a risk functional, implying that $\cA_{\cR^\infty}$ is monotone and $\cA_{\cR^\infty}\cap\R=[0,\infty)$, and $\cR^\infty\ge\cR$, implying that $\cA_{\cR^\infty}\subset\cA_\cR$ and, hence, $(\cA_{\cR^\infty}-X)\cap\R\neq\R$ for every $X\in\cX$. Now, it follows from Theorem~\ref{theo: recession functional} that $\cR^\infty$ is sensitive to losses on $\cS$. As a result, $\cA_{\cR^\infty}\cap\cS=\cX_+\cap\cS$ and, for every $X\in\cS$,
\begin{align*}
\cR_{\cA_{\cR^\infty}}(X) &=
\inf\{m\in\R \,; \ X+m\in\cA_{\cR^\infty}\} = \inf\{m\in\R \,; \ X+m\in\cA_{\cR^\infty}\cap\cS\} \\
&= \inf\{m\in\R \,; \ X+m\in\cX_+\cap\cS\} = \inf\{m\in\R \,; \ X+m\in\cX_+\} = \esssup(-X),
\end{align*}
where we used that $\cS+\R=\cS$. This shows that $\cR_{\cA_{\cR^\infty}}$ is also sensitive to losses on $\cS$. Note that $(\cA_\cR)^\infty=\cA_{\cR^\infty}$ because, for each $X\in\cX$, we have $X\in(\cA_\cR)^\infty$ if, and only if, $\cR(\lambda X)\le0$ for every $\lambda\in(0,\infty)$. Hence, we have $\cR_{\cA_{\cR^\infty}}=\cR_{(\cA_\cR)^\infty}=(\cR_{\cA_\cR})^\infty$ by Lemma~\ref{lem: recession functionals and cones}. We deduce that $(\cR_{\cA_\cR})^\infty$ is sensitive to losses on $\cS$. Now, it is fairly easy to verify that $\cR_{\cA_\cR}$ inherits star-shapedness from $\cR$ because, for all $X\in\cX$ and $\lambda\in[1,\infty)$ and for each $m\in\R$ with $\lambda X+m\in\cA_\cR$, we have $\lambda\cR(X+\frac{m}{\lambda}) \le \cR(\lambda(X+\frac{m}{\lambda})) \le 0$ by star-shapedness of $\cR$, showing that $\cR_{\cA_\cR}(X) \le \frac{m}{\lambda}$ and, by taking the infimum of such $m$'s, delivering $\lambda\cR_{\cA_\cR}(X) \le \cR_{\cA_\cR}(\lambda X)$ as claimed. It then follows from Theorem~\ref{theo: recession functional} that $\cR_{\cA_\cR}$ is sensitive to large losses on $\cS$, concluding the proof.\hfill\qed

\medskip

{\bf Proof of Proposition~\ref{prop:sens large exp/pure:1}}. We only proof part (a), the proof of part (b) is trivial. Take any nonzero $X\in\cX$ with $\E[X] \leq 0$. Set $Y :=  X - \E[X]$. Then $X \leq Y$ and $Y$ is nonconstant with $\E[Y] = 0$. Hence by monotonicity of $\cR^\infty$ and the hypothesis
$\cR^\infty(X)\ge \cR^\infty(Y) > 0$. Now the claim follow from Theorem \ref{theo: recession functional}.

\medskip

{\bf Proof of Theorem~\ref{thm:sens large exp/pure:2}}. 
In light of Proposition~\ref{prop:sens large exp/pure:1}, it suffices to show that if $\cR$ is sensitive to large expected losses, then $\cR^\infty(X) > \E[-X]$ for all non-constant $X$. So let $X \in \cX$ be non-constant. Set $Y := X - \E[X]$. By cash-invariance, it suffices to show that $\cR^\infty(Y) > \E[-Y] = 0$. But this follows directly from \ref{theo: recession functional} and the fact that $\cR$ is sensitive to large losses on $\cX_e = \{X \in \cX; \E[X] \leq 0\}$.

\medskip

{\bf Proof of Proposition~\ref{prop: loss concentrations}}. Let $X\in\cX$ and $A\in\cF$ with $\probp(A)>0$. Take $a,b\in(0,\infty)$ large enough that $B:=A\cap\{X\le a\}$ and $Y:=-b\ind_B+\max\{X,0\}\ind_{B^c}$ satisfy $\probp(B)>0$ and $\E[Y]\le0$. In particular, $Y\in\cX_e$. By sensitivity to large losses on $\cX_e$, we find $\lambda\in(1,\infty)$ such that $\cR(\lambda Y)>0$. Letting $\xi := a + \lambda b$ we obtain $X-\xi\ind_A\le\lambda Y$. As a consequence of monotonicity, $\cR(X-\xi\ind_A)\ge\cR(\lambda Y)>0$, proving that $\cR$ is sensitive to loss concentrations.\hfill\qed

\medskip

{\bf Proof of Proposition~\ref{prop: necessary for sensitivity to loss concentrations}}. Let $X \in \cX_{-} \setminus \{0\}$. By definition, there exists $\varepsilon\in(0,\infty)$ such that $\mathbb{P}(X \leq -\varepsilon) > 0$, and note that $\lambda X \leq -\lambda \varepsilon \ind_{\{ X \leq -\varepsilon \}}$ for every $\lambda \in (0,\infty)$. Now since $\cR$ is sensitive to loss concentrations, there exists $\gamma \in (0,\infty)$ such that $\cR(0 - \gamma \ind_{\{ X \leq -\varepsilon \}}) > 0$. By the monotonicity of $\cR$ it follows that $\cR(\lambda X) > 0$ for every $\lambda \in (\gamma/\varepsilon, \infty)$. Therefore, $\cR$ is sensitive to large pure losses.\hfill\qed

\medskip

{\bf Proof of Theorem~\ref{thm: loss concentrations part two}}. In light of Proposition \ref{prop: necessary for sensitivity to loss concentrations}, it suffices to that sensitivity to large pure losses implies sensitivity to loss concentrations.

Take $A\in\cF$ with $\probp(A)>0$ and note that $\cR(-\lambda\ind_A)>0$ for some $\lambda\in(0,\infty)$ by sensitivity to large pure losses. It follows from lower semicontinuity and star-shapedness that, for every $X\in\cX$,
\[
0 < \cR(-\lambda\ind_A) \le \liminf_{n\to\infty}\cR\big(\tfrac{X}{n}-\lambda\ind_A\big) = \liminf_{n\to\infty}\cR\big(\tfrac{1}{n}(X-n\lambda\ind_A)\big) \le \liminf_{n\to\infty}\big[\tfrac{1}{n}\cR(X-n\lambda\ind_A)\big].
\]
This implies that $\cR(X-n\lambda\ind_A)>0$ for some $n\in\N$, showing that $\cR$ is sensitive to loss concentrations.\hfill\qed



\medskip

{\bf Proof of Proposition~\ref{prop: loss var}}. For every $X\in L^1$ the recession functional of $\LVaR_\alpha$ satisfies
\[
\LVaR_\alpha^\infty(X) = \sup_{\lambda\in(0,\infty)}\{\tfrac{1}{\lambda}\sup_{l\in(-\infty,0]}\{\VaR_{\alpha(l)}(\lambda X)+l\}\} = \sup_{l\in(-\infty,0]}\{\VaR_{\alpha(l)}(X)+\sup_{\lambda\in(0,\infty)}\tfrac{l}{\lambda}\}
\]
by positive homogeneity of $\VaR_{\alpha(l)}$'s. Clearly, $\sup_{\lambda\in(0,\infty)}\tfrac{l}{\lambda}=0$. Hence, for every $X\in L^1$,
\[
\LVaR_\alpha^\infty(X) = 
\begin{cases}
\VaR_{\underline{\alpha}}(X) & \mbox{if} \ \underline{\alpha}>0,\\
\esssup(-X) & \mbox{if} \ \underline{\alpha}=0.
\end{cases}
\]
With this in mind, ``(c) $\Rightarrow$ (a)'' follows from Theorem \ref{theo: cash-additive} whilst ``(b) $\Rightarrow$ (c)'' follows from Example~\ref{ex: var and es}(a). The implication ``(a) $\Rightarrow$ (b)'' is trivial. \hfill\qed

\medskip

{\bf Proof of Proposition~\ref{prop: adjusted es}}. For every $X\in L^1$ the recession functional of $\ES^g$ satisfies
\[
(\ES^g)^\infty(X) = \sup_{\lambda\in(0,\infty)}\{\tfrac{1}{\lambda}\sup_{\alpha\in(0,1]}\{\ES_\alpha(\lambda X)-g(\alpha)\}\} = \sup_{\alpha\in(0,1]}\{\ES_\alpha(X)-\inf_{\lambda\in(0,\infty)}\tfrac{g(\alpha)}{\lambda}\},
\]
where we used positive homogeneity of $\ES_\alpha$'s. Clearly, for every $\alpha\in(0,1]$,
\[
\inf_{\lambda\in(0,\infty)}\tfrac{g(\alpha)}{\lambda}=
\begin{cases}
0 & \mbox{if} \ g(\alpha)<\infty,\\
\infty & \mbox{if} \ g(\alpha)=\infty.
\end{cases}
\]
If $g$ is finite everywhere, we deduce that, for every $X\in L^1$,
\[
(\ES^g)^\infty(X) = \sup_{\alpha\in(0,1]}\ES_\alpha(X) = \esssup(-X).
\]
Thus, ``(c) $\Rightarrow$ (a)'' follows from Theorem \ref{theo: cash-additive}. Otherwise, if $g$ is not finite everywhere, set $p:=\sup\{\alpha\in(0,1] \,; \ g(\alpha)=\infty\} \in (0,1]$. Then, regardless of whether $g(p)$ is finite or not, for every $X\in L^\infty$,
\[
(\ES^g)^\infty(X) = \sup_{\alpha\in[p,1]}\ES_\alpha(X) = \ES_p(X).
\]
Now ``(b) $\Rightarrow$ (c)'' follows from Example~\ref{ex: var and es}(b). Finally, ``(a) $\Rightarrow$ (b)'' is trivial. \hfill\qed

\medskip

{\bf Proof of Proposition~\ref{prop:SR SLL}}. This result follows from Proposition \ref{prop:Mu SLL} since $\mathcal{R}_\ell(X) = -\cM_u(X)$ for every $X \in L^1$, and $\limsup_{x\to\infty}\frac{u(-x)}{u(x)}=\limsup_{x\to\infty}\frac{\ell(x)}{\ell(-x)}$, where $u(x):=-\ell(-x)$ for $x \in \R$.\hfill\qed

\medskip

{\bf Proof of Theorem~\ref{thm:expect utility}}. 
“(a) $\Rightarrow$ (b)”: This is trivial.
“(b) $\Rightarrow$ (c)”: We argue by contraposition. Suppose that $\ell :=\limsup_{x\to\infty}\frac{u(-x)}{u(x)}>-\infty$ and note that $\ell \le0$ by normalisation of $u$. Then there exists
a sequence $(\lambda_n)_{n \in \NN}$ with $\lim_{n \to \infty} \lambda_n = +\infty$
such that $u(-\lambda_n)\ge(\ell-1)u(\lambda_n)$ for every $n \in \NN$. Take $A\in\cF$ with $0 < \probp(A) \leq \frac{1}{2-\ell}\in(0,1)$ and define $X:=\ind_{A^c}-\ind_{A}\in L^1$. Then for every $n \in \NN$,
\begin{align*}
\cE_u(\lambda_n X) &= \probp(A^c)u(\lambda_n)+\probp(A)u(-\lambda_n) \ge (1 - \probp(A))u(\lambda_n)+ \probp(A)(\ell-1)u(\lambda_n) \\
&\ge \frac{1-\ell}{2-\ell}u(\lambda_n)+\frac{1}{2-\ell}(\ell-1)u(\lambda_n) = 0,
\end{align*}
whence $\cE_u$ is not sensitive to large losses on $L^1$.

“(c) $\Rightarrow$ (a)”: Let $X\in L^1$ with $\probp(X<0)>0$. Let $\varepsilon > 0$ be such that $F_{-\varepsilon} := \probp(X\le-\varepsilon)>0$. For $n \in \NN$, set $p_{\varepsilon n} := \probp((n-1)\varepsilon  \leq X < n \varepsilon )$ and note that
\begin{equation*}
  \sum_{n =1}^\infty n  p_{\varepsilon n} \leq \sum_{n =1}^\infty \E\left[\left(\frac{X}{\varepsilon}+1\right)\ind_{\{X \in [(n-1)\varepsilon, n\varepsilon)\}}\right]\leq \frac{1}{\varepsilon} \mathbb{E}[X^+] + 1 < \infty. 
\end{equation*}
Then for $\lambda > 0$, using that $u$ is negatively-star shaped on $\R_+$, we obtain
\begin{align*}
\cE_u(\lambda X) &\le u(-\lambda\varepsilon) F_{-\varepsilon}+ \sum_{n = 1}^\infty u(\lambda \varepsilon n) p_{\varepsilon n} \leq u(-\lambda\varepsilon) \left(F_{-\varepsilon} + \frac{u(\lambda\varepsilon)}{u(-\lambda\varepsilon)}\sum_{n =1}^\infty n p_{\varepsilon n} \right) \\
&\leq u(-\lambda\varepsilon) \left(F_{-\varepsilon} + \frac{u(\lambda\varepsilon)}{u(-\lambda\varepsilon)} \left(\frac{1}{\varepsilon} \mathbb{E}[X^+] + 1\right)  \right).
\end{align*}
Since $\lim_{\lambda\to\infty}\frac{u(-\lambda\varepsilon)}{u(\lambda\varepsilon)}=-\infty$, it follows that $u(-\lambda\varepsilon) < 0$ and $|\frac{u(\lambda\varepsilon)}{u(-\lambda\varepsilon)}| (\frac{1}{\varepsilon} \E[X^+]+1)  < F_{-\varepsilon}$
for all $\lambda >0$ sufficiently large, and hence
$\cE_u(\lambda X) < 0$ for all $\lambda >0$ sufficiently large.
\hfill\qed

\medskip

{\bf Proof of Proposition~\ref{prop:Cu SLL}}. The properties of $u$ imply that $C_u(X) \geq 0$ if and only if $\cE_u(X) \geq 0$ for each $X \in L^1$. Hence $C_u$ is sensitive to large losses if and only if $\cE_u$ is. Now the result follows from Theorem \ref{thm:expect utility}.\hfill\qed

\medskip

{\bf Proof of Proposition~\ref{prop:Mu SLL}}.
“(a) $\Rightarrow$ (b)”: This is trivial. “(b) $\Rightarrow$ (c)”: We argue by contraposition. To that end, suppose that $\limsup_{x \to \infty} \tfrac{u(-x)}{u(x)} > -\infty$. Then, by Theorem \ref{thm:expect utility}, there exists $X \in L^\infty$ and a sequence $(\lambda_n) \subset \R$ with $\mathbb{P}(X<0)>0$ and $\lambda_n \to \infty$ such that $\mathbb{E}[u(\lambda _n X)] \geq 0$ for every $n\in\N$.  It follows by definition that $\cM_u(\lambda _n X) \geq 0$ for every $n\in\N$, showing that $\cM_u$ is not sensitive to large losses on $L^\infty$.

“(c) $\Rightarrow$ (a)”: Assume $\limsup_{x \to \infty} \tfrac{u(-x)}{u(x)} = -\infty$. Let $\tilde{u}$ be the right-continuous modification of $u$. Since $u$ is increasing, it has at most countably many discontinuities. It follows that $\tilde{u}$ is increasing, negatively star-shaped on $\mathbb{R}_+$ and satisfies $\tilde{u}(x) < 0$ for every $x \in (-\infty,0)$ as well as $\limsup_{x \to \infty} \tfrac{\tilde{u}(-x)}{\tilde{u}(x)} = -\infty$. Note $\tilde{u}(0)$ may be nonzero, however $M_{\tilde{u}}$ is a utility functional. Since in addition $\tilde{u} \geq u$ and therefore $M_{\tilde u} \geq M_u$, it suffices to show the stronger claim that $M_{\tilde u}$ is sensitive to large losses on $L^1$. To establish the latter, it suffices to show that $\cE_{\tilde{u}}:L^1 \to [-\infty, \infty)$ defined by $\cE_{\tilde{u}}(X):=\mathbb{E}[\tilde{u}(X)]$ (which may not be normalised) is upper regular. Indeed, both Theorem \ref{the: sensitivity via acceptability v2}[(a) $\Rightarrow$ (b)] and Theorem \ref{thm:expect utility}[(c) $\Rightarrow$ (a)] do not rely on normalisation and can be applied to $\cE_{\tilde{u}}$. 

So let $X \in L^1$, $(m_n) \subset \R$ and $m \in \R$ with $m_n \to m$. We must show 
\begin{equation}
\label{eq:Mu SLL:proof goal}
    \mathbb{E}[\tilde{u}(X+m)] \geq \limsup_{n \to \infty} \mathbb{E}[\tilde{u}(X+m_n)].
\end{equation}
By the monotonicity of $\tilde{u}$, we may assume without loss of generality that $m_n \searrow m$. Note that by right-continuity of $\tilde{u}$, $\tilde{u}(X+m_n) \to \tilde{u}(X+m) \ \mathbb{P}$-a.s. There are two subcases: If $\mathbb{E}[\tilde{u}(X+m)] \in \mathbb{R}$, then $\lvert \tilde{u}(X+m_n) \rvert \leq  \left( \tilde{u}(X+m)^- +  \tilde{u}(X+m_1)^+ \right) \in L^1$ for every $n$ and the claim follows by dominated convergence. If $\mathbb{E}[\tilde{u}(X+m)] = -\infty$, then by the asymptotic property \eqref{eq:def:utility fn:not superlinear} of $\tilde{u}$, $\mathbb{E}[\tilde{u}(X+m)^-] = \infty$. Since $\tilde{u}(X+m_n) \to \tilde{u}(X+m) \ \mathbb{P}$-a.s., and $m_n \searrow m$, $\tilde{u}(X+m_n)^- \nearrow \tilde{u}(X+m)^- \ \mathbb{P}$-a.s. The monotone convergence theorem implies $\lim_{n \to \infty} \mathbb{E}[\tilde{u}(X+m_n)^-] = \infty$. On the other hand, $\mathbb{E}[\tilde{u}(X+m_n)^+] \leq \mathbb{E}[\tilde{u}(X+m_1)^+] < \infty$. Therefore, $\lim_{n \to \infty} \mathbb{E}[\tilde{u}(X+m_n)] = -\infty$ and \eqref{eq:Mu SLL:proof goal} also holds in this case.\hfill\qed

\medskip

\begin{lemma}
\label{lem:OCE concave utility}
Let $u$ be a concave utility function such that $u(x) \leq x$ for $x \in \R$. Assume that $\lim_{x\to-\infty}\tfrac{u(x)}{x} = \infty$ and $\lim_{x\to\infty}\tfrac{u(x)}{x}=0$. Then $\OCE_u$ is sensitive to large losses on $L^1$.

\end{lemma}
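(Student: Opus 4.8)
The plan is to reduce the claim, via the recession functional, to a scalar computation governed by the two tail conditions in (c). Since $u$ is concave with $u(0)=0$ and $u(x)\le x$, the functional $\OCE_u$ is a cash-additive, concave (hence negatively star-shaped) utility functional: normalisation follows from $\eta+u(-\eta)\le 0$ with equality at $\eta=0$, and cash-additivity from the substitution $\eta\mapsto\eta-m$. By the utility analogue of Theorem~\ref{theo: recession functional}, $\OCE_u$ is sensitive to large losses on $L^1$ if and only if $\OCE_u^\infty$ is sensitive to losses on $L^1$, i.e.\ $\OCE_u^\infty(X)<0$ whenever $\probp(X<0)>0$. Hence it suffices to establish the single estimate $\OCE_u^\infty(X)\le\essinf X$ for all $X\in L^1$, because $\probp(X<0)>0$ forces $\essinf X<0$. (Equivalently one may aim for $\OCE_u^\infty(X)=-\esssup(-X)$ and quote Theorem~\ref{theo: cash-additive}.)

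The key reformulation is obtained from the substitution $\eta=\lambda\zeta$, which gives $\tfrac1\lambda\OCE_u(\lambda X)=\sup_{\zeta\in\R}\{\zeta+\E[g_\lambda(X-\zeta)]\}$ with $g_\lambda(y):=\tfrac1\lambda u(\lambda y)$. As $u$ is concave and normalised it is negatively star-shaped as a scalar map, so by the negatively star-shaped analogue of Proposition~\ref{prop: appendix} the quotients $g_\lambda(y)$ decrease as $\lambda\uparrow\infty$ to $u^\infty(y):=\lim_{\lambda\to\infty}\tfrac1\lambda u(\lambda y)$; the hypotheses in (c) translate exactly into $u^\infty(y)=0$ for $y\ge0$ and $u^\infty(y)=-\infty$ for $y<0$. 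Using cash-additivity of $\OCE_u^\infty$ I would subtract $\essinf X$ and assume $X=Z\ge0$ with $\essinf Z=0$ (the case $\essinf X=-\infty$ is recovered at the end by monotone truncation $Z_n:=\max(X,-n)$ and monotonicity of $\OCE_u^\infty$). The lower bound $\OCE_u^\infty(Z)\ge0$ is immediate from $Z\ge0$; the substance is the matching upper bound $\OCE_u^\infty(Z)\le0$.

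For the upper bound I would first confine the inner supremum to a fixed compact interval once $\lambda$ is large. Writing $f_\lambda(\zeta):=\zeta+\E[g_\lambda(Z-\zeta)]$, concavity of $g_\lambda$ together with $u'(0^+)\le1$ (which follows from $u(x)\le x$ and $u(0)=0$) yields $f_\lambda(\zeta)\le f_\lambda(0)$ for all $\zeta\le0$. For large $\zeta$ the superlinear left tail does the work: fixing $a_0>0$ with $p_0:=\probp(Z\le a_0)>0$ (possible since $\essinf Z=0$), for $\zeta\ge 2a_0$ one has $Z-\zeta\le-\zeta/2$ on $\{Z\le a_0\}$, while $g_\lambda\le(\cdot)^+$ controls the remaining mass by $\E[Z]$; choosing $M>2/p_0$ and using $u(x)\le Mx$ for $x$ very negative gives $f_\lambda(\zeta)\le-\zeta+\E[Z]$ once $\lambda\zeta$ is large. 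Hence there exist $\zeta^\ast$ and $\Lambda_0$ such that for every $\lambda\ge\Lambda_0$ the supremum of $f_\lambda$ is attained on $[0,\zeta^\ast]$.

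On this compact interval I would conclude with a decreasing-limit argument. The $f_\lambda$ decrease pointwise to $f_\infty(\zeta)=\zeta+\E[u^\infty(Z-\zeta)]$, which equals $0$ at $\zeta=0$ and $-\infty$ for $\zeta>0$. Setting $s_\lambda:=\sup_{[0,\zeta^\ast]}f_\lambda=f_\lambda(\zeta_\lambda)$ and extracting $\zeta_{\lambda_k}\to\zeta_0\in[0,\zeta^\ast]$, monotonicity in $\lambda$ and upper semicontinuity of the concave $f_{\lambda'}$ give $\lim_k s_{\lambda_k}\le f_{\lambda'}(\zeta_0)$ for each fixed $\lambda'$; letting $\lambda'\to\infty$ yields $\lim_k s_{\lambda_k}\le f_\infty(\zeta_0)\le0$. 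Combined with $s_\lambda\ge f_\lambda(0)\ge0$ this forces $\OCE_u^\infty(Z)=\lim_\lambda s_\lambda=0$, which is the desired upper bound. I expect the main obstacle to be precisely this interchange of supremum and limit: because the pointwise limit $u^\infty$ jumps to $-\infty$ at $0$, neither dominated convergence nor a soft Dini argument applies directly, and one genuinely needs the left-tail superlinearity to pin the maximiser to a compact set together with the right-tail sublinearity to prevent $f_\lambda$ from growing linearly on the nonnegative position $Z$.
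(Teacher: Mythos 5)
Your proposal is correct in its essentials, but it follows a genuinely different route from the paper's. The paper argues by contradiction directly at the level of $\OCE_u(\lambda X)$: assuming $\OCE_u(\lambda X)\ge 0$ for all $\lambda$, it confines the optimizer $\eta$ to a fixed interval $[0,\overline\eta_X]$ through a four-regime analysis (large $\eta$ via the left tail; $\eta\le\lambda\essinf X$ via $u(x)\le x$; two intermediate regimes, where $\eta$ scales like $\lambda\essinf X$, via the right tail and Theorem~\ref{thm:expect utility}), and then invokes Theorem~\ref{thm:expect utility} once more to force $\OCE_u(\lambda X)\le\overline\eta_X+\E[u(\lambda X)]<0$, a contradiction. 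You instead rescale ($\eta=\lambda\zeta$), center at $\essinf X$ using cash-additivity, and compute the recession functional exactly, $\OCE_u^\infty(X)=\essinf X$, so that the conclusion drops out of Theorem~\ref{theo: recession functional} (or Theorem~\ref{theo: cash-additive}). Your centering collapses the paper's intermediate regimes, your argument is direct rather than by contradiction, and it never uses Theorem~\ref{thm:expect utility}; it also yields the stronger, reusable structural fact that the recession functional of $\OCE_u$ is the worst-case functional. What the paper's route buys is elementary bookkeeping: every step is an explicit inequality, with no interchange of supremum and limit and hence none of the compactness or semicontinuity subtleties you rightly flag as the crux of your approach.

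Two details in your write-up need repair, both minor. First, the constant: with $p_0=\probp(Z\le a_0)$, the estimate $f_\lambda(\zeta)\le\zeta\bigl(1-\tfrac{p_0M}{2}\bigr)+\E[Z]\le-\zeta+\E[Z]$ requires $M\ge 4/p_0$, not $M>2/p_0$; this is harmless since $M$ may be taken arbitrarily large. Second, and more substantively, you invoke upper semicontinuity of ``the concave $f_{\lambda'}$'' as if automatic. A concave function with values in $[-\infty,\infty)$ need not be usc (it can jump down at the boundary of its effective domain), and in this paper a utility function may take the value $-\infty$ and be discontinuous at the left endpoint of its domain, in which case even the a.s.\ convergence $g_{\lambda'}(Z-\zeta_n)\to g_{\lambda'}(Z-\zeta_0)$ can fail on a set of positive probability. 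The fix is the device the paper itself uses in the proof of Proposition~\ref{prop:Mu SLL}: replace $u$ by its usc envelope $\tilde u\ge u$, which preserves concavity, monotonicity, $\tilde u(x)\le x$ and both tail conditions, and note that sensitivity of $\OCE_{\tilde u}$ implies that of $\OCE_u$ since $\OCE_u\le\OCE_{\tilde u}$. Then usc of $f_{\lambda'}$ on $[0,\zeta^*]$ follows from reverse Fatou with the dominating variable $Z\in L^1$ (valid because $g_{\lambda'}(Z-\zeta)\le(Z-\zeta)^+\le Z$ for $\zeta\ge0$), giving both attainment of $s_\lambda$ and the limit step; alternatively, attainment can be bypassed with $\varepsilon$-maximizers. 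With these repairs your argument is complete.
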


\begin{proof}
We argue via contradiction. So suppose there exists $X \in L^1$ with $\mathbb{P}(X < 0) > 0$ such that
\begin{equation}
\label{eq:contradicting assumption OCE}
    \OCE_u(\lambda X) \geq 0, \quad \lambda \in (0,\infty).
\end{equation}
By monotonicity of $\OCE_u$, we may assume without loss of generality that $\essinf X \in (-\infty, 0)$. We will show that there exists $\overline{\lambda}_X, \overline{\eta}_X \in (0, \infty)$ such that 
\begin{equation}
\label{eq:OCE goal}
    \OCE_u(\lambda X) = \sup_{\eta \in [0, \overline{\eta}_X]} \{ \eta + \mathbb{E}[u(\lambda X - \eta)] \}, \quad \lambda \in (\overline{\lambda}_X, \infty).
\end{equation}
If we can show this, then $\OCE_u(\lambda X) \leq \overline{\eta}_X + \mathbb{E}[u(\lambda X)]$ for $\lambda \in (\overline{\lambda}_X, \infty)$. But since $\mathbb{P}(X<0)>0$, it follows from the proof of Theorem \ref{thm:expect utility}[(c) $\Rightarrow$ (a)] that there exists $\tilde{\lambda}_X \in (0, \infty)$ such that $\mathbb{E}[u(\lambda X)] < -\overline{\eta}_X$ for $\lambda \in (\tilde{\lambda}_X, \infty)$. Whence, $\OCE_u(\lambda X) < 0$ for $\lambda \in (\overline{\lambda}_X \vee \tilde{\lambda}_X, \infty)$, in contradiction to \eqref{eq:contradicting assumption OCE}. Therefore, it suffices to show \eqref{eq:OCE goal} holds under \eqref{eq:contradicting assumption OCE}.

\medskip
Since $\mathbb{P}(X < 0) > 0$, there exists $\varepsilon > 0$ such that the event $A_\varepsilon:=\{ X \leq -\varepsilon \}$ has nonzero probability. For $\eta \geq 0$ and $\lambda > 0$, we have
\begin{align}
    \eta + \mathbb{E}[u(\lambda X - \eta)] & \leq \eta + \mathbb{E}[u(\lambda X - \eta)\ind_{A_{\varepsilon}}] + \mathbb{E}[u(\lambda X - \eta)\ind_{\{X \geq 0\}}] \nonumber \\ & \leq \eta + u(-\lambda \varepsilon - \eta)\mathbb{P}(A_\varepsilon) + \mathbb{E}[u(\lambda X^+)] \label{eq:eta nonnegative} \\ & \leq \eta + u(-\eta)\mathbb{P}(A_\varepsilon) + u(-\lambda \varepsilon)\mathbb{P}(A_\varepsilon) + \lambda \mathbb{E}[X^+],\nonumber
\end{align}
where the last inequality follows from the facts $u$ is concave so $u(-\lambda \varepsilon - \eta) \leq u(-\lambda \varepsilon) + u(-\eta)$, and $u(x) \leq x$. Because $\lim_{x \to -\infty} u(x)/x = \infty$, that means $\lim_{\lambda \to \infty} (u(-\lambda \varepsilon)\mathbb{P}(A_{\varepsilon}))/(\lambda \mathbb{E}[X^+]) = -\infty$, i.e., there exists $\lambda_1 \in (0, \infty)$ such that $u(-\lambda \varepsilon)\mathbb{P}(A_\varepsilon) + \lambda \mathbb{E}[X^+] <0$ for every $\lambda \in (\lambda_1, \infty)$. Similarly, there exists $\overline{\eta}_X \in (0,\infty)$ such that $\eta + u(-\eta) \mathbb{P}(A_\varepsilon) < 0$ for every $\eta \in (\overline{\eta}_X,\infty)$. It follows from \eqref{eq:eta nonnegative} that for $\lambda \in (\lambda_1,\infty)$ and $\eta \in (\overline{\eta}_X,\infty)$, 
\begin{equation*}
    \eta + \mathbb{E}[u(\lambda X - \eta)] < 0.
\end{equation*}
Together with \eqref{eq:contradicting assumption OCE}, this implies
\begin{equation}
\label{eq:lambda 1}
    \OCE_u(\lambda X) = \sup_{\eta \in (-\infty, \overline{\eta}_X]} \{ \eta + \mathbb{E}[u(\lambda X - \eta)] \}, \quad  \lambda \in (\lambda_1, \infty).
\end{equation}
For $\lambda > 0$ and $\eta \leq \lambda \essinf X$, we have $\lambda X - \eta \geq 0$. Since $u$ is concave and $u(x) \leq x$ for $x \in \R_+$,
\begin{equation*}
    \eta + \mathbb{E}[u(\lambda X - \eta)] \leq \mathbb{E}[u(\lambda X)].
\end{equation*}
Since $\mathbb{P}(X < 0) > 0$, by Theorem \ref{thm:expect utility}, there exists $\lambda_2 \in (0,\infty)$ such that $\mathbb{E}[u(\lambda X)] < 0$ for every $\lambda \in (\lambda_2, \infty)$. This, together with \eqref{eq:lambda 1} and \eqref{eq:contradicting assumption OCE}, implies that
\begin{equation}
\label{eq:max of lambda 1 and lambda 2}
    \OCE_u(\lambda X) = \sup_{\eta \in (\lambda \essinf X, \overline{\eta}_X]} \{ \eta + \mathbb{E}[u(\lambda X - \eta)] \}, \quad  \lambda \in (\lambda_1 \vee \lambda_2, \infty).
\end{equation}
Now fix $\delta \in (0, \essinf X)$. Since $\lim_{x\to\infty}u(x)/x = 0$, for every $a \in (0,\infty)$, there exists $M_a \in (0,\infty)$ such that $u(x) \leq ax$ for $x \in (M_a, \infty)$. Choose $a$ small enough so that $K:=(\essinf X + \delta) + a(\mathbb{E}[X^+] - \essinf X) < 0$ and let $B_{M_{a}}:=\{ X \leq M_a \}$. Then for $\lambda > 0$ and $\lambda \essinf X < \eta \leq \lambda (\essinf X + \delta)$, we have
\begin{align*}
    \eta + \mathbb{E}[u(\lambda X - \eta)] & \leq \lambda (\essinf X + \delta) + \mathbb{E}[u(\lambda (X - \essinf X))] \\ & = \lambda (\essinf X + \delta) + \mathbb{E}[u(\lambda (X - \essinf X))\ind_{B_{M_{a}}}] + \mathbb{E}[u(\lambda (X - \essinf X))\ind_{B_{M_{a}}^c}] \\ & \leq \lambda (\essinf X + \delta) + u(\lambda(M_a-\essinf X)) + \lambda a(\mathbb{E}[X^+] - \essinf X) \\ &
    = \lambda K + u(\lambda(M_a-\essinf X)).
\end{align*}
It follows from $\lim_{x\to\infty}u(x)/x = 0$ and $K <0$ that there exists $\lambda_3 \in (0,\infty)$ such that $\lambda K + u(\lambda(M_a-\essinf X)) < 0$ for every $\lambda \in (\lambda_3, \infty)$. Together with \eqref{eq:max of lambda 1 and lambda 2} and \eqref{eq:contradicting assumption OCE}, this implies that
\begin{equation}
\label{eq:max of lambda 1 and lambda 2 and lambda 3}
    \OCE_u(\lambda X) = \sup_{\eta \in (\lambda (\essinf X+\delta), \overline{\eta}_X]} \{ \eta + \mathbb{E}[u(\lambda X - \eta)] \}, \quad  \lambda \in (\lambda_1 \vee \lambda_2 \vee \lambda_3, \infty).
\end{equation}
Finally, for $\lambda > 0$ and $\eta \in (\lambda (\essinf X + \delta), 0]$, 
\begin{equation*}
    \eta + \mathbb{E}[u(\lambda X - \eta)] \leq \eta + \mathbb{E}[u(\lambda(X-(\essinf X + \delta))].
\end{equation*}
Since $\mathbb{P}((X-(\essinf X + \delta) < 0) > 0$, by Theorem \ref{thm:expect utility}, there exists $\lambda_4 \in (0,\infty)$ such that $\mathbb{E}[u(\lambda(X-(\essinf X + \delta))] < 0$ for every $\lambda \in (\lambda_4, \infty)$. This, together with \eqref{eq:max of lambda 1 and lambda 2 and lambda 3} and \eqref{eq:contradicting assumption OCE}, implies
\begin{equation*}
    \OCE_u(\lambda X) = \sup_{\eta \in [0, \overline{\eta}_X]} \{ \eta + \mathbb{E}[u(\lambda X - \eta)] \}, \quad  \lambda \in (\overline{\lambda}_X, \infty),
\end{equation*}
where $\overline{\lambda}_X:=\max(\lambda_1, \lambda_2, \lambda_3, \lambda_4)$, as desired.
\end{proof}

\medskip

{\bf Proof of Theorem \ref{theo: OCE}}. “(a) $\Rightarrow$ (b)”: This is trivial.

“(b) $\Rightarrow$ (c)”: We argue by contraposition. First, assume that $\liminf_{x \to -\infty} \tfrac{u(x)}{x} = a < \infty$. Then there exists a sequence $(\lambda_n)_{n \in \N}$ with $\lim_{n \to\infty} \lambda_n = \infty$ such that $u(-2 \lambda_n) \geq - 2\lambda_n (a+1)$ for all $n \in \NN$. Choose $A \in \cF$ with $0 < \mathbb{P}(A) \leq \frac{1}{2 (a+1)}$ and set $X := -\ind_A + \ind_{A^c}  \in L^\infty$. Then $\mathbb{P}(X < 0) > 0$ and $X -1 = -2 \ind_A$. Moreover, for each $n \in \NN$,
\begin{equation*}
\OCE_u(\lambda_n X) \geq \lambda_n + \mathbb{E}[u(\lambda_n (X - 1))] =\lambda_n + \probp(A) u(-2 \lambda_n)\geq 0.
\end{equation*}
Thus, $\OCE_u$ is not sensitive to large losses on $L^\infty$.

Next, assume that $\limsup_{x\to+\infty}\tfrac{u(x)}{x} =b > 0$ and note that $b \leq 1$ by the fact that $u(x) \leq x$ for $x \in \R$. Then for every $\kappa \in (0,1)$ and $M \in (0,\infty)$, there exists a sequence $(\lambda_n)_{n \in \NN}$ with $\lim_{n \to \infty} \lambda_n = \infty$ such that $u(\lambda_n M) \geq \lambda_n M \kappa b$. Choose $\kappa = \frac{3}{4}$ and $M = \frac{2}{b^2}$ such that $u(\lambda_n \frac{2}{b^2} ) \geq \lambda_n \frac{3 }{2 b}$ for all $n \in \NN$. Choose $B \in \cF$ with $1 > \mathbb{P}(B) \geq \frac{2}{3} b$ and set $X := -\ind_{B^c} + (\frac{2}{b^2} - 1)\ind_{B} \in L^\infty$. Then $\mathbb{P}(X < 0) > 0$ and $X + 1 = \frac{2}{b^2} \ind_{B} $. Moreover, for each $n \in \N$,
\begin{equation*}
\OCE_u(\lambda_n X) \geq -\lambda_n + \mathbb{E}[u(\lambda_n (X + 1))] = -\lambda_n + \probp(B) u\left(\lambda_n \frac{2}{b^2}\right)  \geq 0.
\end{equation*}
Thus, $\OCE_u$ is not sensitive to large losses on $L^\infty$.

“(c) $\Rightarrow$ (a)”: Assume $\liminf_{x\to-\infty}\tfrac{u(x)}{x} = \infty$ and $\limsup_{x\to\infty}\tfrac{u(x)}{x}=0$. Let $\tilde{u}$ be the concave hull of $u$, i.e., the smallest concave function that majorizes $u$ in a pointwise way. The assumptions on $u$ imply that $\tilde{u}(x) \leq x$ for $x \in \R$ and that $\lim_{x\to-\infty}\tfrac{\tilde{u}(x)}{x} = \infty$ as well as $\lim_{x\to\infty}\tfrac{\tilde{u}(x)}{x}=0$. Thus, $\OCE^{\tilde{u}}$ is sensitive to large losses on $L^1$ by Lemma \ref{lem:OCE concave utility}. Since $\tilde u \geq u$, it follows that $\OCE^{\tilde{u}} \geq \OCE_u$ and therefore $\OCE_u$ is a fortiori sensitive to large losses on $L^1$.\hfill\qed


\end{document}